\documentclass[journal,twoside,web]{ieeecolor}

\usepackage{generic}
\usepackage{cite}
\usepackage{amsmath,amssymb,amsfonts}
 \usepackage{algorithm}
\usepackage{algorithmicx}
\usepackage{algpseudocode}
\usepackage{graphicx}
\usepackage{subcaption}
\usepackage{textcomp}
\usepackage{bbm}
\usepackage{array}
\usepackage{stfloats}
\usepackage{times}
\usepackage{enumerate}
\usepackage{mathtools}
\usepackage{dsfont}
\usepackage{amssymb}
\usepackage{multirow}
\usepackage{gensymb}
\newtheorem{theorem}{\bf Theorem}

\newtheorem{problem}{\bf Problem}
\newtheorem{remark}{\bf Remark}
\newtheorem{definition}{\bf Definition}

\newtheorem{lemma}{\bf Lemma}
\newtheorem{assumption}{\bf Assumption}

\usepackage{algpseudocode}
\DeclareMathOperator*{\argmin}{arg\,min}


\begin{document}

\title{Strong Duality and Dual Ascent Approach to Continuous-Time Chance-Constrained Stochastic Optimal Control}
\author{Apurva Patil, Alfredo Duarte, Fabrizio Bisetti, and Takashi Tanaka
\thanks{A. Patil is with the Walker Department of Mechanical Engineering, University of Texas at Austin, Austin, TX 78712 USA (e-mail: apurvapatil@utexas.edu). A. Duarte, F. Bisetti, and T. Tanaka are with the Department of Aerospace Engineering and Engineering Mechanics at the University of Texas at Austin, Austin, TX 78712 USA (e-mail: aduarteg@utexas.edu, fbisetti@utexas.edu, and ttanaka@utexas.edu). }
}

\maketitle

\begin{abstract}
The paper addresses a continuous-time continuous-space chance-constrained stochastic optimal control (SOC) problem where the probability of failure to satisfy given state constraints is explicitly bounded. We leverage the notion of exit time from continuous-time stochastic calculus to formulate a chance-constrained SOC problem. Without any conservative approximation, the chance constraint is transformed into an expectation of an indicator function which can be incorporated into the cost function by considering a dual formulation. We then express the dual function in terms of the solution to a Hamilton-Jacobi-Bellman partial differential equation parameterized by the dual variable. Under a certain assumption on the system dynamics and cost function, it is shown that a strong duality holds between the primal chance-constrained problem and its dual. The Path integral approach is utilized to numerically solve the dual problem via gradient ascent using open-loop samples of system trajectories. We present simulation studies on chance-constrained motion planning for spatial navigation of mobile robots and the solution of the path integral approach is compared with that of the finite difference method.
\end{abstract}

\begin{IEEEkeywords}
Chance-constrained stochastic optimal control, Exit time, Strong duality, Dual ascent, Numerical algorithms.
\end{IEEEkeywords}

\section{Introduction}
\subsection{Motivation}
\IEEEPARstart{I}{n} safety-critical missions, quantitative characterization of system uncertainties is of critical importance as it impacts the overall safety of the system operation. System uncertainties arise due to unmodeled dynamics, unknown system parameters, and external disturbances. Subsequently, the control policies should be designed to accommodate such uncertainties in order to achieve user-defined safety requirements. Robust control is a popular paradigm to guarantee safety against set-valued uncertainties \cite{kothare1996robust, kuwata2007robust}. Generally, a robust control approach aims to synthesize a policy that optimizes the worst-case performance, commonly known as the minimax policy. While such a strategy is suitable for applications where safety is an absolute requirement, the computation of the exact minimax policy is often intractable, which necessitates a sequential outer-approximation of the uncertain sets. This often results in overly conservative solutions \cite{calafiore2006scenario}. Also, robust control is difficult to apply when the uncertainty is modeled probabilistically using random variables with unbounded support (e.g., Gaussian distributions).\par

\textit{Chance-constrained} stochastic optimal control (SOC) is an alternative paradigm for policy synthesis under uncertainty \cite{blackmore2011chance, oguri2019convex}. Unlike robust control, this approach aims to optimize the system performance by accepting a user-specified threshold for the probability of failure (e.g., collisions with obstacles). Notably, the acceptance of the possibility of failure is often effective to reduce the conservatism of the controller even if the introduced probability of failure is practically negligible \cite{vidyasagar2001randomized}. Consequently, chance-constrained SOC is also widely used as a framework for policy synthesis for systems in safety-critical missions.\par

\subsection{Literature Review}
In this paper, we consider a continuous-time continuous-space chance-constrained SOC problem. A central challenge in this problem stems from the fact that the continuous-time end-to-end probability of failure is generally challenging to evaluate and optimize against. A common course of action found in the literature is to look for a tractable approximation of the chance-constrained SOC problem in order to use existing tools from optimization.  In \cite{blackmore2011chance} a discrete-time chance-constrained SOC problem is converted to a disjunctive convex program by approximating the chance constraint using Boole’s inequality. The disjunctive convex program is then solved using branch-and-bound techniques. The work presented in \cite{ono2015chance} solves the discrete-time chance-constrained problem by formulating its dual. However, in this work, the joint chance constraint is conservatively approximated using Boole's inequality. Hence, the duality gap is nonzero and the obtained solution is suboptimal. In \cite{wang2020non}, authors use statistical moments of the distributions in concentration inequalities to upper-bound the chance constraint for discrete-time trajectory planning under non-Gaussian uncertainties, and solve the problem using nonlinear program solvers. A scenario-based optimization method that translates the chance constraints into deterministic ones is provided in \cite{de2021scenario}. Taking ideas from distributionally robust optimization, deterministic convex reformulation of the chance-constrained SOC problems is proposed in \cite {li2021distributionally} for discrete-time linear system dynamics. Different from the above works in this paper we consider continuous-time, control-affine system dynamics. For the continuous-time chance-constrained planning problem the authors in \cite{jasour2021convex} use risk contours to transform the original problem into a deterministic planning problem and use convex methods based on sum-of-squares optimization to obtain continuous-time trajectories. In \cite{nakka2019trajectory} a continuous-time chance-constrained SOC problem is converted to a deterministic control problem with convex constraints using generalized polynomial chaos expansion and Chebyshev inequality. The deterministic optimal control problem is then solved using sequential convex programming.  Other approaches that consider approximations of the chance constraints include approaches based on concentration of measure inequalities \cite{hokayem2013chance}, Bernstein approximation \cite{nemirovski2007convex}, and moment based surrogate \cite{paulson2019efficient}. A common limitation of the above approaches is the conservatism introduced by the approximation of the chance constraints, leading to overly cautious policies that compromise performance or cause artificial infeasibilities \cite{nemirovski2007convex, frey2020collision, patil2023upper, patil2022upper, patil2024analytical}. 
Deep reinforcement learning algorithms such as soft actor-critic also have been used to solve the chance-constrained SOC problems \cite{huang2021risk}.  \par
In our work, we utilize the \textit{path integral} approach to numerically solve the posed chance-constrained SOC problem using open-loop samples of system trajectories. Path integral control is a sampling-based algorithm employed to solve nonlinear SOC problems numerically ~\cite{kappen2005path, patil2025advancing, patil2023risk, patil2025path, patil2025task, patil2025advancing}. It allows the policy designer to compute the optimal control inputs online using Monte Carlo samples of system paths. The Monte Carlo simulations can be massively parallelized on GPUs, and thus the path integral approach is less susceptible to the \textit{curse of dimensionality} \cite{williams2017model}.
\subsection{Contributions}
The contributions of this work are as follows: (1) We leverage the notion of exit time from the continuous-time stochastic calculus \cite{chern2021safe, shah2011probability} to formulate a chance-constrained SOC problem. The dual of this chance-constrained SOC problem is constructed by incorporating the chance constraint into the cost function. The chance constraint is then transformed into an expectation of an indicator function, which enjoys the same additive structure as the primal cost function. No approximation nor time discretization is introduced in this transformation. (2) Given a fixed dual variable, we evaluate the dual objective function by numerically solving the Hamilton-Jacobi-Bellman (HJB) partial differential equation (PDE). (3) It is shown that under a certain assumption on the system dynamics and cost function, a strong duality holds between the primal problem and the dual problem (the duality gap is zero). (4) We propose a novel path-integral-based dual ascent algorithm to numerically solve the dual problem. This allows us to solve the original chance-constrained problem online via open-loop samples of system trajectories. Finally, we present simulation studies on chance-constrained motion planning for spatial navigation of mobile robots. The solution obtained using the path integral approach is compared with that of the finite difference method.\par

Early forms of the results in this paper were presented in \cite{patil2022chance}. This paper extends these results by (1) formulating a dual SOC problem, (2) proving that a strong duality exists between the primal problem and the dual problem under an assumption on the system dynamics and cost function, and (3) numerically solving the dual problem via path-integral-based dual ascent algorithm.  
 
\subsection*{Notation}
{\color{black}If a stochastic process ${x}(t)$ starts from $x^{\text{init}}$ at time $t$, then let $P_{x^{\text{init}},t}\left(\mathcal{E}\right)=P\big(\mathcal{E}({x})\;|\;{x}(t)=x^{\text{init}}\big)$ denote the probability of event $\mathcal{E}({x})$ involving the stochastic process ${x}(t)$ conditioned on ${x}(t)=x^{\text{init}}$, and let $\mathbb{E}_{x^{\text{init}}, t}\left[F\left({x}\right)\right]=\mathbb{E}\left[F\left({x}\right)\;|\;{x}\left(t\right)=x^{\text{init}}\right]$ denote the expectation of $F\left({x}\right)$ (a functional of ${x}\left(t\right)$) also conditioned on ${x}\left(t\right)=x^{\text{init}}$.} Let $\mathds{1}_{\mathcal{E}}$ be an indicator function, which returns $1$ when the condition $\mathcal{E}$ holds and 0 otherwise. The $\bigvee$ symbol represents a logical OR implying the existence of a satisfying event among a collection. $\text{Tr}(A)$ denotes the trace of a matrix $A$.
\section{Preliminaries}
Let $\mathcal{X}_{s}\subseteq\mathbb{R}^n$ be a bounded open set representing a safe region, $\partial\mathcal{X}_{s}$ be its boundary, and closure $\overline{\mathcal{X}_{s}}=\mathcal{X}_{s}\cup\partial\mathcal{X}_{s}$.
\subsection{Controlled and Uncontrolled Processes}\label{Sec: Controlled and Uncontrolled Processes}
Consider a controlled process ${x}(t)\in\mathbb{R}^n$ driven by following control-affine It\^{o} stochastic differential equation (SDE):
\begin{equation}\label{SDE}
\begin{aligned}
   d{x}(t)=&{f}\left({x}(t), t\right)dt+{G}\left({x}(t), t\right){u}({x}(t), t)dt\\
   &+{\Sigma}\left({x}(t), t\right)d{w}(t),
\end{aligned}
\end{equation}
where ${u}\left({x}(t), t\right)\in\mathbb{R}^m$ is a control input, ${w}(t)\in\mathbb{R}^k$ is a $k$-dimensional standard Wiener process on a suitable probability space $\left(\Omega, \mathcal{F}, P\right)$, ${f}\left({x}(t), t\right)\in\mathbb{R}^n$, ${G}\left({x}(t), t\right)\in\mathbb{R}^{n\times m}$ and ${\Sigma}\left({x}(t), t\right)\in\mathbb{R}^{n\times k}$. {\color{black} Without loss of generality, we assume that $\Sigma$ is left-invertible; that is, it has full column rank.} Let $\hat{{x}}(t)\in\mathbb{R}^n$ be an uncontrolled process driven by the following SDE:
\begin{equation}\label{uncontrolled SDE}
  d\hat{{x}}(t)\!=\!\!{f}\!\left(\hat{{x}}(t),\! t\right)\!dt\!+\!{\Sigma}\!\left(\hat{{x}}(t),\! t\right)\!d{w}(t). 
\end{equation}
At the initial time $t_0$, ${x}(t_0)=\hat{{x}}(t_0)=x_0\in\overline{\mathcal{X}_s}$. Throughout this paper, we assume sufficient regularity in the coefficients of (\ref{SDE}) and (\ref{uncontrolled SDE}) so that unique strong solutions exist \cite[Chapter 1]{oksendal2013stochastic}. In the rest of the paper, for notational compactness, the functional dependencies on $x$ and $t$ are dropped whenever it is unambiguous. 
\subsection{Probability of Failure}
For a given finite time horizon $t\in[t_0, T]$, $t_0<T$, if the system (\ref{SDE}) leaves the safe region $\mathcal{X}_{s}$ at any time $t\in(t_0, T]$, then we say that it fails. 
\begin{definition}[Probability of failure]
 The probability of failure $P_\mathrm{fail}\left(x_0,t_0,u(\cdot)\right)$ of system (\ref{SDE}) starting at $(x_0, t_0)$ and operating under the policy $u(\cdot)$ is defined as   
\begin{equation}\label{pfail}
    P_\mathrm{fail}\left(x_0,t_0,u(\cdot)\right):=P_{x_0,t_0}\left(\bigvee_{t\in(t_0, T]} {x}(t)\notin \mathcal{X}_{s}\right).
\end{equation}
\end{definition}
\subsection{Exit Times}
Let $\mathcal{Q}=\mathcal{X}_s\times[t_0, T)$ be a bounded set with the boundary  $\partial\mathcal{Q}=\left(\partial\mathcal{X}_s\times[t_0,T]\right)\cup\left(\mathcal{X}_s\times\{T\}\right)$, and closure $\overline{\mathcal{Q}}=\mathcal{Q}\cup\partial\mathcal{Q}=\overline{\mathcal{X}_s}\times[t_0, T]$. We define exit time ${t}_{f}$ for process ${x}(t)$ as
\begin{equation}\label{tf with Q}
 {t}_{f} \coloneqq \text{inf}\{t> t_0: ({x}(t), t)\notin \mathcal{Q}\}.   
\end{equation}
Alternatively, ${t}_f$ can be defined as 
\begin{equation}\label{tf}
{t}_{f} \coloneqq 
\begin{cases}
T, & \!\!\!\!\!\!\!\!\!\!\!\!\!\!\!\!\!\!\!\!\!\!\!\!\!\!\!\!\!\!\!\!\!\!\!\!\!\!\text{if}\;\; {x}(t)\in\mathcal{X}_{s}, \forall t\in(t_0, T),\\
\text{inf}\;\{t\in(t_0, T) : {x}(t)\notin\mathcal{X}_{s}\}, & \text{otherwise}.
 \end{cases}
\end{equation}
For one-dimensional state space, the domains $\mathcal{X}_s$, $\mathcal{Q}$ and their boundaries $\partial \mathcal{X}_s$, $\partial\mathcal{Q}$ are shown in Figure \ref{Fig. computational domain}. The figure also depicts the exit times for two realizations of trajectories $\{x(t), t\in[t_0, {t}_f]\}$. 
\begin{figure}
    \centering
    \!\!\!\!\!\!\!\!\!\!\!\!\includegraphics[scale=0.28]{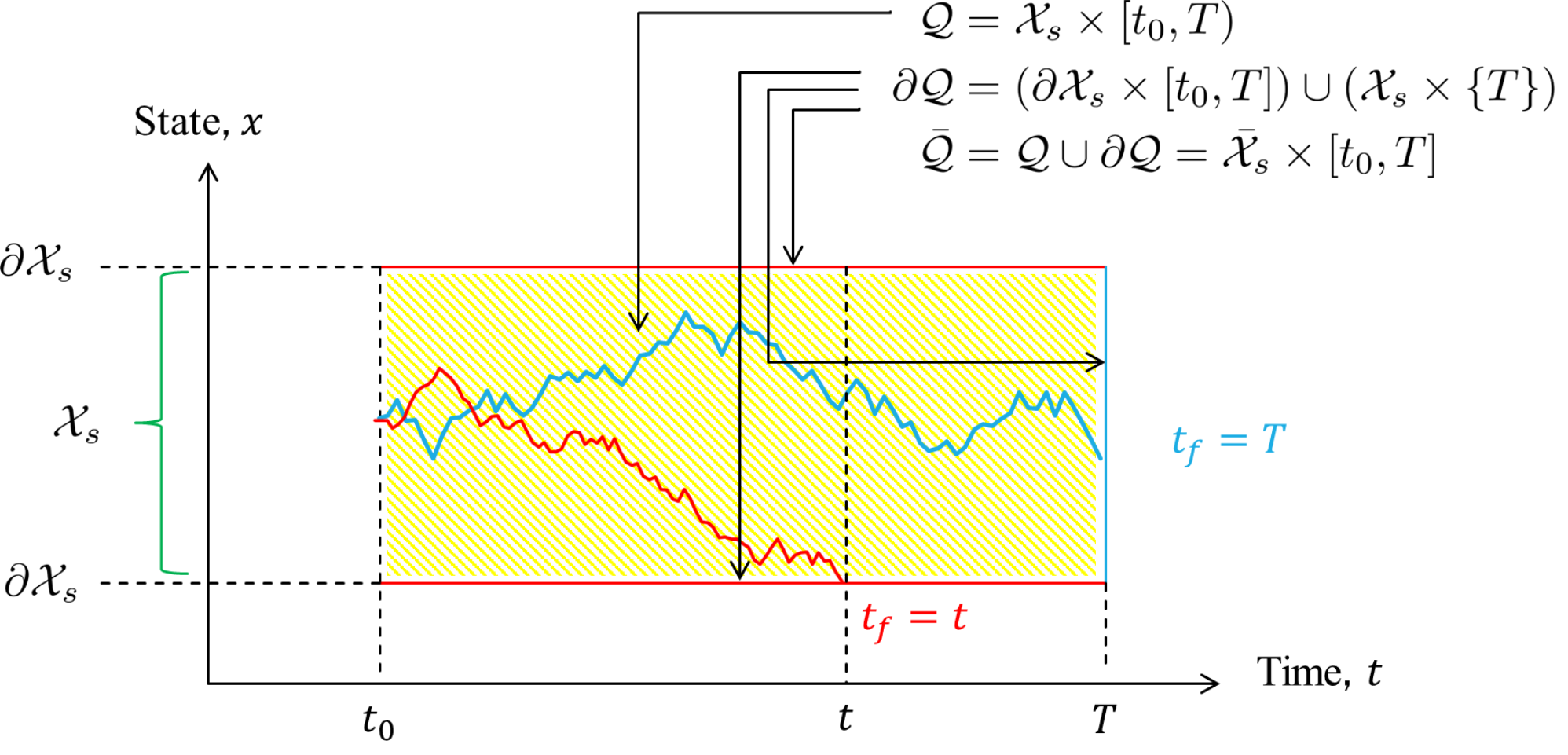} 
      
        \caption{Computational domains and exit times $t_f$} 
        \label{Fig. computational domain}
\end{figure}

Similarly, exit time $\hat{{t}}_f$ for process $\hat{{x}}(t)$ is defined as

\begin{equation}\label{t_hat_f with Q}
 \hat{{t}}_{f} \coloneqq \text{inf}\{t> t_0: (\hat{{x}}(t), t)\notin \mathcal{Q}\}.   
\end{equation}

Note that by the above definitions, $\left({x}({t}_{f}),{t}_{f}\right)\in\partial{\mathcal{Q}}$ and $\left(\hat{{x}}(\hat{{t}}_{f}),\hat{{t}}_{f}\right)\in\partial{\mathcal{Q}}$.
\section{Problem Formulation}
We first formalize a chance-constrained SOC problem in Section \ref{Sec: CC-SOC}. In Section \ref{Sec: dual}, we formulate its dual.

\subsection{Chance-Constrained SOC Problem}\label{Sec: CC-SOC}
Consider a cost function that is quadratic in the control input and has the form:
\begin{equation}\label{C}
\begin{split}
 C\left(x_0, t_0, {u}(\cdot)\right)&\coloneqq\mathbb{E}_{x_0,t_0}\Bigg[\psi\left({x}({t}_{f})\right)\cdot\mathds{1} _{{x}({t}_{f})\in \mathcal{X}_{s}}+\\
 &\!\!\!\!\!\!\!\!\!\!\!\!\!\!\!\!\!\!\!\!\!\!\!\!\!\!\!\!\!\!\!\!\!\!\!\!\!\!\!\int_{t_0}^{{t}_{f}}\left(\frac{1}{2}{u}^\top{R}\left({x}(t), t\right){u}+ V\left({x}(t), t\right)\right)dt\Bigg]
\end{split}
\end{equation}
{\color{black} where sufficiently differentiable non-negative functions $\psi\left({x}({t}_{f})\right)$ and $V\left({x}(t), t\right)$ denote the terminal cost and state-dependent running cost, respectively. $V\left({x}(t), t\right)$ is also supposed to be integrable. $R\left({x}(t), t\right)\in\mathbb{R}^{m\times m}$, a symmetric positive definite matrix (for all values of ${x}(t)$ and $t$) represents the weight with respect to the control cost.} $\mathds{1} _{{x}({t}_{f})\in \mathcal{X}_{s}}$ returns $1$ if the state of the system at the exit time ${t}_f$ is inside the safe region and $0$ otherwise i.e., the terminal cost is active only when the state of the system at the exit time ${t}_f$ is safe. Note that our cost function is defined over time horizon $[t_0, {t}_{f}]$ instead of $[t_0, T]$, because we do not consider the cost incurred after the system fails. We wish to find an optimal control policy for system (\ref{SDE}) such that $C\left(x_0, t_0, u(\cdot)\right)$ is minimal and the probability of failure (\ref{pfail}) is below a specified threshold. This problem can be formulated as a chance-constrained SOC problem as follows:
\begin{problem}[Chance-constrained SOC problem]\label{Problem: Risk-constrained SOC problem}
    \begin{equation}\label{CC-SOC}
    \begin{aligned}
    \min_{{u}(\cdot)}\;& \mathbb{E}_{x_0, t_0}\!\!\left[\psi\left({x}({t}_{f})\right)\!\cdot\!\mathds{1} _{{x}({t}_{f})\in \mathcal{X}_{s}}\!+\!\!\int_{t_0}^{{t}_{f}}\!\!\!\left(\frac{1}{2}{u}^\top\!{R}{u}+\!V\!\right)\!dt\right]\\
    \emph{s.t.} \;\; & d{x}={f}dt+{G}{u}dt+{\Sigma}d{w},\quad {x}(t_0)=x_0,\\
      &P_{x_0, t_0}\left(\bigvee_{t\in(t_0, T]} {x}(t)\notin \mathcal{X}_{s}\right)\leq\Delta,
    \end{aligned}    
    \end{equation}
    where $\Delta\in(0,1)$ represents a given risk tolerance over the horizon $[t_0, T]$, and the admissible policy $u(\cdot)$ is measurable with respect to the $\sigma$-algebra generated by ${x}(s), 0\leq s\leq t$.
\end{problem}

\subsection{Dual SOC Problem}\label{Sec: dual}
We define the Lagrangian associated with Problem \ref{Problem: Risk-constrained SOC problem} as
\begin{align}\label{eq: lagrangian}
\mathcal{L}\left(x_0, t_0, {u}(\cdot); \eta\right) &=  C\left(x_0, t_0, {u}(\cdot)\right) \nonumber\\
&\!\!\!\!\!\!\!\!\!\! + \eta P_{x_0, t_0}\left(\bigvee_{t\in(t_0, T]} {x}(t)\notin \mathcal{X}_{s}\right)-\eta\Delta
\end{align}
where $\eta\geq 0$ is the \textit{Lagrange multiplier}.
Using a standard equality in probability theory \cite{durrett2019probability}, $P_{\mathrm{fail}}$ in the chance constraint of (\ref{CC-SOC}) can be transformed into an expectation of an indicator function as
\begin{equation}\label{pfail2}
    P_{x_0, t_0}\left(\bigvee_{t\in(t_0, T]} {x}(t)\notin \mathcal{X}_{s}\right)=\mathbb{E}_{x_0, t_0}\left[\mathds{1} _{{x}({t}_{f})\in \partial\mathcal{X}_{s}}\right].
\end{equation}
Here, $\mathds{1} _{{x}({t}_{f})\in \partial\mathcal{X}_{s}}$ returns $1$ if the state of the system \eqref{SDE} at the exit time is on the boundary of the safe set $\partial \mathcal{X}_s$ (i.e. the state of the system at the exit time escapes the the safe region) and $0$, otherwise. Using \eqref{pfail2} the Lagrangian \eqref{eq: lagrangian} can be reformulated as
    \begin{equation}\label{Lagrangian2}
\!\!\!\!\mathcal{L}\!\left(x_0, t_0, {u}(\cdot); \eta\right) \!= \!C\!\left(x_0, t_0, {u}(\cdot)\right) \!+\! \eta\! \left[\mathbb{E}_{x_0, t_0}\!\!\left[\mathds{1} _{{x}({t}_{f})\in \partial\mathcal{X}_{s}}\right]\!-\!\Delta\right]
\end{equation}
 Observe that for any $\eta\geq0$ if we define $\phi:\overline{\mathcal{X}_s}\to\mathbb{R}$ as\footnote{In what follows, function $\phi(x; \eta)$ sets a boundary condition for a PDE and we often need technical assumptions on the regularity of $\phi(x; \eta)$ (e.g., continuity on $\overline{\mathcal{X}_s}$) to guarantee the existence of a solution of the PDE. When such requirements are needed, we approximate  (\ref{phi(x)}) as $\phi(x; \eta) \approx \psi(x)B(x) + \eta\left(1-B(x)\right)$, where $B(x)$ is a smooth bump function on $\mathcal{X}_s$.}: 
\begin{equation}\label{phi(x)}
    \phi\left({x}; \eta\right)\coloneqq\psi\left({x}\right)\cdot \mathds{1} _{{x}\in \mathcal{X}_{s}}+\eta\cdot\mathds{1} _{{x}\in \partial\mathcal{X}_{s}}-\eta\Delta,
\end{equation}
then, the second term in (\ref{Lagrangian2}) can be absorbed in a new terminal cost function $\phi$ as follows:
\begin{equation}\label{Chat}
    \mathcal{L}\!\left(x_0,\!t_0,\! {u}(\cdot); \eta\right)=\mathbb{E}_{x_0, t_0}\!\!\left[\phi\!\left({x}({t}_{f}); \eta\right)\!+\!\!\!\int_{t_0}^{{t}_{f}}\!\!\!\left(\!\frac{1}{2}{u}^\top\!R{u}+V\!\right)\!dt\!\right].
\end{equation}
Now we formulate the dual problem as follows:
\begin{problem}[Dual SOC problem]\label{prob: dual problem}
\begin{align}\label{eq: dual}
     \max_{\eta} \min_{u(\cdot)}&\;\;\mathcal{L}\left(x_0, t_0, {u}(\cdot);\eta\right)\\
     \emph{s.t.}&\;\; \eta\geq 0\nonumber\\
     &\;\; d{x}={f}dt+{G}{u}dt+{\Sigma}d{w},\quad {x}(t_0)=x_0,\nonumber
\end{align}
\end{problem}
In order to solve Problem \ref{prob: dual problem} we first solve the subproblem 
\begin{equation}\label{eq: g of eta}
g(\eta)\coloneqq \min_{u(\cdot)}\mathcal{L}\left(x_0, t_0, {u}(\cdot); \eta\right).
\end{equation}
Note that, $\mathcal{L}$ possesses the time-additive Bellman structure i.e., for any $\eta\geq0$ and $t_0\leq t\leq{t}_f$, we can write 
\begin{align*}
    \mathcal{L}\left(x_0,t_0, {u}(\cdot); \eta\right)=&\mathbb{E}_{x_0, t_0}\left[\int_{t_0}^{t}\!\!\left(\frac{1}{2}{u}^\top R{u}+V\right){\color{black}ds}\right]\\
    &\!\!\!\!\!\!\!\!\!\!\!\!\!\!\!\!\!\!\!\!\!\!\!\!+ \mathbb{E}_{x, t}\left[\phi\left({x}({t}_{f}); \eta\right)+\int_{t}^{{t}_{f}}\!\!\left(\frac{1}{2}{u}^\top R{u}+V\right){\color{black}ds}\right].
\end{align*}
Therefore problem \eqref{eq: g of eta} can be solved by utilizing dynamic programming without having to introduce any conservative approximation of the failure probability $P_{\mathrm{fail}}$. After solving the subproblem \eqref{eq: g of eta} we solve the dual problem
\begin{equation}\label{eq: dual problem}
   \max_{\eta \geq0} g(\eta).
\end{equation}
 Since the dual function $g(\eta)$ is the pointwise infimum of a family of affine functions of $\eta$, it is concave even when the primal problem is not convex. Moreover, since affine functions are upper semicontinuous, $g(\eta)$ is also upper semicontinuous. {\color{black}We will use the upper semicontinuity of $g(\eta)$ to prove the existence of a dual optimal solution (See Lemma \ref{lem:existence_dual_sol} in Appendix A).}   
\section{Main Results}
In this section, we first express the dual function in terms of the solution to an HJB PDE parametrized by the dual variable $\eta$. Next, we show that the strong duality holds between the primal problem \eqref{CC-SOC} and the dual problem \eqref{eq: dual} (the duality gap is zero) under a certain assumption on the system dynamics and the cost function. Finally, we propose a novel Monte-Carlo-based dual ascent algorithm to numerically solve the dual problem \eqref{eq: dual problem}. This implies that the optimal control input for the original chance-constrained problem (primal problem) can be computed online by real-time Monte-Carlo simulations.

\subsection{Computation of the Dual Function}\label{Sec: Computation of the Dual Function}
In this section, we compute the dual function by solving problem \eqref{eq: g of eta} using dynamic programming. For each $(x,t)\in\overline{\mathcal{Q}}$, $\eta\geq0$ and an admissible policy $u(\cdot)$ over $[t, T)$, we define the cost-to-go function:
\begin{equation}\label{value function}
    \mathcal{L}\!\left(x, t, {u}(\cdot); \eta\right)\!=\!\mathbb{E}_{x, t}\!\!\left[\phi\left({x}({t}_{f});\eta\right)\!+\!\!\int_{t}^{{t}_{f}}\!\!\!\left(\frac{1}{2}{u}^\top\!R{u}+\!V\!\right)\!{\color{black}ds}\right]\!\!.
\end{equation}
Now, we state the following theorem, {\color{black}that is obtained by extending the verification theorem presented in \cite[Chapter IV]{fleming2006controlled}.}
\begin{theorem}\label{theorem: solution to risk-minimizing soc}
Suppose for a given $\eta$, there exists a function $J:\overline{\mathcal{Q}}\rightarrow \mathbb{R}$ such that 
\begin{enumerate}[(a)]
    \item $J(x,t; \eta)$ is continuously differentiable in $t$ and twice continuously differentiable in $x$ in the domain $\mathcal{Q}$;
    \item $J(x,t;\eta)$ solves the following dynamic programming PDE (HJB PDE):
    \begin{equation}\label{HJB PDE}
  \!\!\!\!\!\!\!\!\!\!\!\!\begin{cases}
     \begin{aligned}
         \!\!-\partial_tJ\!=&\!-\frac{1}{2}\!\left(\partial_xJ\right)^\top\!\!GR^{-1}G^\top\!\partial_xJ\!+\!V\\
         &+\!\!f^\top\!\partial_xJ+\frac{1}{2}\text{Tr}\left(\Sigma\Sigma^\top\partial^2_xJ\right),
          \end{aligned} &  \forall(x,t)\in\mathcal{Q}, \\
    \!\!\underset{\substack{(x,t)\to(y,s) \\ (x,t)\in\mathcal{Q}}}{\lim}J(x,t;\eta)=\phi(y;\eta), & \!\!\!\forall(y,s)\in\partial\mathcal{Q}.
  \end{cases}
    \end{equation}
\end{enumerate}
Then, the following statements hold:
\begin{enumerate}[(i)]
\item $J(x,t;\eta)$ is the \textit{value function} for Problem \eqref{eq: g of eta}. That is,
\begin{equation}\label{J as value function}
    J\left(x, t; \eta\right) = \min_{{u}(\cdot)} \mathcal{L}\left(x, t, {u}(\cdot);\eta\right),\quad \forall\;(x,t)\in\overline{\mathcal{Q}}.
\end{equation}

\item The solution to Problem \eqref{eq: g of eta} is given by 
\begin{equation}\label{optimal policy}
    u^*(x,t; \eta)=-R^{-1}\left(x, t\right){G}^\top\left(x, t\right)\partial_xJ\left(x, t; \eta\right).
\end{equation}
\end{enumerate}
\end{theorem}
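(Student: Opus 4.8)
The plan is to prove this as a standard verification argument, establishing simultaneously that $J$ lower-bounds the cost-to-go $\mathcal{L}(x,t,u(\cdot);\eta)$ for every admissible policy and that the bound is attained by the feedback law \eqref{optimal policy}. First I would apply It\^o's formula to the process $s\mapsto J(x(s),s;\eta)$ along a trajectory of the controlled SDE \eqref{SDE}, started at $(x,t)$ and run up to the exit time $t_f$. Using assumption (a) (the $C^{2,1}$ regularity of $J$ on $\mathcal{Q}$) this gives
\begin{align*}
  dJ(x(s),s;\eta) =& \Big(\partial_s J + (\partial_x J)^\top(f + Gu) \\
  & + \tfrac{1}{2}\text{Tr}(\Sigma\Sigma^\top \partial_x^2 J)\Big)\,ds + (\partial_x J)^\top \Sigma\, dw(s).
\end{align*}
Integrating from $t$ to $t_f$ and taking the conditional expectation $\mathbb{E}_{x,t}[\cdot]$, the It\^o integral term is a martingale and vanishes in expectation, leaving
\[
  \mathbb{E}_{x,t}\big[J(x(t_f),t_f;\eta)\big] - J(x,t;\eta) = \mathbb{E}_{x,t}\!\left[\int_t^{t_f}\!\! \mathcal{A}^u J\, ds\right],
\]
where $\mathcal{A}^u J := \partial_s J + (\partial_x J)^\top(f+Gu) + \tfrac{1}{2}\text{Tr}(\Sigma\Sigma^\top \partial_x^2 J)$ is the generator of the controlled diffusion.

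Since $(x(t_f),t_f)\in\partial\mathcal{Q}$ by the definition of the exit time, the boundary condition in \eqref{HJB PDE} lets me replace $J(x(t_f),t_f;\eta)$ by $\phi(x(t_f);\eta)$. Substituting the HJB PDE \eqref{HJB PDE} to eliminate the combination $\partial_s J + f^\top \partial_x J + \tfrac{1}{2}\text{Tr}(\Sigma\Sigma^\top \partial_x^2 J)$, the integrand $\mathcal{A}^u J$ reduces to $\tfrac{1}{2}(\partial_x J)^\top GR^{-1}G^\top \partial_x J - V + (\partial_x J)^\top Gu$. Rearranging and comparing with the definition of the cost-to-go \eqref{value function} then yields the key identity
\[
  \mathcal{L}(x,t,u(\cdot);\eta) - J(x,t;\eta) = \mathbb{E}_{x,t}\!\left[\int_t^{t_f}\!\! \tfrac{1}{2}\Big(u + R^{-1}G^\top \partial_x J\Big)^\top R\, \Big(u + R^{-1}G^\top \partial_x J\Big)\,ds\right],
\]
obtained by completing the square in $u$ using the symmetry of $R$. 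Because $R$ is positive definite, the right-hand side is nonnegative, so $\mathcal{L}(x,t,u(\cdot);\eta)\ge J(x,t;\eta)$ for every admissible $u(\cdot)$, with equality exactly when $u = -R^{-1}G^\top \partial_x J$ almost everywhere along the trajectory. This establishes both \eqref{J as value function} and the optimal feedback \eqref{optimal policy}.

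The main obstacle I anticipate is the rigorous handling of the random exit time $t_f$ when applying It\^o's formula and discarding the martingale term, since $t_f$ is a stopping time and $J$ is assumed regular only on the open set $\mathcal{Q}$, with merely a boundary limit on $\partial\mathcal{Q}$. I would introduce a localizing sequence of stopping times $\tau_n\uparrow t_f$ keeping the trajectory in a compact subset of $\mathcal{Q}$, apply Dynkin's formula on $[t,\,t_f\wedge\tau_n]$ where the stochastic integral is a genuine zero-mean martingale, and then pass to the limit $n\to\infty$. Justifying this limit requires that $t_f<\infty$ almost surely, which holds because trajectories must exit $\mathcal{Q}$ by the finite horizon $T$, together with a dominated-convergence argument for the terminal and running cost terms, which is where the non-negativity and integrability of $\psi$ and $V$ and the boundedness of $\mathcal{Q}$ enter. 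A secondary technical point is verifying that the candidate feedback $u^*$ is itself admissible and yields a well-defined strong solution of \eqref{SDE}; here I would invoke the standing regularity assumptions on the coefficients together with the smoothness of $\partial_x J$ inherited from assumption (a).
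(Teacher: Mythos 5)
Your proof is correct and takes essentially the same route as the paper's: Dynkin's/It\^o's formula up to the exit time $t_f$, substitution of the boundary condition $J(x(t_f),t_f;\eta)=\phi(x(t_f);\eta)$, and exploitation of the fact that the HJB nonlinearity is the pointwise minimum over $u$ of the quadratic form $\frac{1}{2}u^\top R u + V + (f+Gu)^\top \partial_x J + \frac{1}{2}\mathrm{Tr}\left(\Sigma\Sigma^\top \partial_x^2 J\right)$. The only differences are presentational: you complete the square to get an exact identity for the gap $\mathcal{L}-J$ as the expected integral of $\frac{1}{2}\left(u+R^{-1}G^\top\partial_x J\right)^\top R\left(u+R^{-1}G^\top\partial_x J\right)$, where the paper states the equivalent inequality with equality iff $u=-R^{-1}G^\top\partial_x J$, and your localization of the stopping time is a technical refinement the paper's proof passes over silently.
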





\begin{proof}
Let $J$ be the function satisfying (a) and (b). By Dynkin's formula \cite[Theorem 7.4.1]{oksendal2013stochastic}, for each $(x,t)\in\overline{\mathcal{Q}}$ and $\eta$, we have
\begin{equation}\label{ito with dJ}
 \mathbb{E}_{x,t}\left[J\left({x}({t}_{f}), {t}_{f};\eta\right)\right]\!=\!J(x,t; \eta) +  \mathbb{E}_{x,t}\left[\int_{t}^{{t}_{f}}\!\!\!\!\!\!dJ\left({x}(s), s; \eta\right)\right]\!,
\end{equation}
where
\begin{equation}\label{dJ}
    \begin{aligned}
\!\!\!dJ\left({x}(s), s; \eta\right)=&(\partial_tJ)ds + (d{x})^\top\partial_xJ+\frac{1}{2}(d{x})^\top\!\!\left(\partial_x^2J\right)\!(d{x})\\
=&(\partial_tJ)ds + (f\!+Gu)^\top\!(\partial_xJ)ds\\
&+\left(\Sigma d{w}\right)^\top(\partial_xJ)+\!\frac{1}{2}\text{Tr}\left(\Sigma\Sigma^\top\!\partial_x^2J\right)ds.
    \end{aligned}
\end{equation}
Notice that the term $\int_{t}^{{t}_{f}}\left(\Sigma d{w}\right)^\top(\partial_xJ)$ is an It\^o integral. Using the property of It\^o integral \cite[Chapter 3]{oksendal2013stochastic}, we get 
\begin{equation}\label{ito integral}
    \mathbb{E}_{x,t}\int_{t}^{{t}_{f}}\left(\Sigma d{w}\right)^\top(\partial_xJ)=0
\end{equation}
Substituting (\ref{dJ}) into (\ref{ito with dJ}) and using \eqref{ito integral}, we have  
\begin{equation}\label{ito}
    \begin{aligned}
       \mathbb{E}_{x,t}\left[J\left({x}({t}_{f}), {t}_{f};\eta\right)\right]&=J(x,t;\eta)\\ &\!\!\!\!\!\!\!\!\!\!\!\!\!\!\!\!\!\!\!\!\!\!\!\!\!\!\!\!\!\!\!\!\!\!\!\!\!\!\!\!\!\!\!\!\!\!\!\!\!\!\!\!+\mathbb{E}_{x,t}\left[\int_{t}^{{t}_{f}}\!\!\!\left(\!\partial_tJ + (f+Gu)^{\!\top}\!(\partial_xJ)+\frac{1}{2}\text{Tr}\left(\Sigma\Sigma^{\!\top}\!\partial_x^2J\right)\!\right)\!ds\right]\!\!.
    \end{aligned}
\end{equation}
By the boundary condition of the PDE (\ref{HJB PDE}),\newline $J\left({x}({t}_{f}), {t}_{f}; \eta\right) =\phi\left({x}({t}_{f});\eta\right)$. Hence, from (\ref{ito}), we obtain
\begin{equation}\label{ito after plugging BC}
    \begin{aligned}
       J(x,t;\eta)&=\mathbb{E}_{x,t}\left[\phi\left({x}({t}_{f});\eta\right)\right]\\ &\!\!\!\!\!\!\!\!\!\!\!\!\!\!\!\!\!\!\!\!\!\!\!-\mathbb{E}_{x,t}\left[\int_{t}^{{t}_{f}}\!\!\!\left(\!\partial_tJ + (f+Gu)^{\!\top}\!(\partial_xJ)+\frac{1}{2}\text{Tr}\left(\Sigma\Sigma^{\!\top}\!\partial_x^2J\right)\!\right)\!ds\right]\!\!.
    \end{aligned}
\end{equation}
Now, notice that the right hand side of the PDE in (\ref{HJB PDE}) can be expressed as the minimum value of a quadratic form in $u$ as follows:
\begin{equation*}
    \begin{aligned}
         -\partial_tJ=&-\frac{1}{2}\left(\partial_xJ\right)^\top GR^{-1}G^\top\partial_xJ+V\\
         &+f^\top\partial_xJ+\frac{1}{2}\text{Tr}\left(\Sigma\Sigma^\top\partial^2_xJ\right)\\
         &\!\!\!\!\!\!\!\!\!\!\!\!\!\!\!\!\!\!\!\!\!\!=\min_{{u{\color{black}(\cdot)}}}\left[\frac{1}{2}u^\top Ru+V+\left(f+Gu\right)^\top\partial_xJ
         +\frac{1}{2}\text{Tr}\left(\Sigma\Sigma^\top\partial^2_xJ\right)\right]\!\!.
          \end{aligned} 
\end{equation*}
Therefore, for an arbitrary $u$, we have 
\begin{equation}\label{pde ineqaulity}
    \begin{aligned}
         \!\!\!\!-\partial_tJ\!\leq\!\frac{1}{2}u^\top\!Ru\!+\!V\!+\!\left(f\!+\!Gu\right)^\top\!\!\partial_xJ\!
         +\!\frac{1}{2}\text{Tr}\!\left(\Sigma\Sigma^\top\partial^2_xJ\right)
          \end{aligned} 
\end{equation}
where the equality holds iff
\begin{equation}\label{optimal policy2}
    u = -R^{-1}G^\top\partial_xJ.
\end{equation}
Combining (\ref{ito after plugging BC}) and (\ref{pde ineqaulity}), we obtain
\begin{equation}\label{J leq C_hat}
  \begin{aligned}
       \!J(x,t; \eta)\!&\leq\!\mathbb{E}_{x,t}\!\left[\phi\!\left({x}({t}_{f}); \eta\right)\right]\!+\!\mathbb{E}_{x,t}\!\left[\!\int_{t}^{{t}_{f}}\!\!\!\left(\!\frac{1}{2}{u}^\top\!R{u}\!+\!V\!\!\right)\!ds\right]\\
       &= \mathcal{L}\left(x,t,{u}(\cdot);\eta\right).
    \end{aligned}   
\end{equation}
This proves the statement (i). Since (\ref{J leq C_hat}) holds with equality iff (\ref{optimal policy2}) is satisfied, the statement (ii) also follows. 
\end{proof}
Theorem \ref{theorem: solution to risk-minimizing soc} implies that the solution of problem \eqref{eq: g of eta} can be expressed in terms of the HJB PDE \eqref{HJB PDE} parameterized by the dual variable $\eta$. Notice that \eqref{HJB PDE} is a nonlinear PDE, which is in general, difficult to solve. In what follows, we linearize the PDE \eqref{HJB PDE} whose solution can be obtained by utilizing the Feyman-Kac lemma \cite{williams2017model}. First, we make the following assumption which is essential to linearize the PDE \eqref{HJB PDE}.
\begin{assumption}\label{Assum: linearity}
For all $(x,t)\in\overline{\mathcal{Q}}$, there exists a positive constant $\lambda$ satisfying the following equation: 
\begin{equation}\label{lambda}
  \Sigma(x, t)\Sigma^\top(x, t) = \lambda G(x, t)R^{-1}(x,t) G^\top(x, t). 
\end{equation}
\end{assumption}
The above assumption implies that the control input in the direction with higher noise variance is cheaper than that in the direction with lower noise variance. See \cite{kappen2005path} for further discussion on this condition. 

 {\color{black}\begin{remark}
The following analysis relies critically on Assumption \ref{Assum: linearity}, which ensures strong duality between the primal problem \eqref{CC-SOC} and the dual problem \eqref{eq: dual}. To satisfy this assumption—specifically, to preserve the relationship between control authority and system noise—two key conditions must be met: (i) the stochasticity has to enter the dynamics \eqref{SDE} via the control channel and (ii) the control cost has to be tuned properly to match the variance of the stochastic forces.
Condition (ii) is less restrictive because the cost function under minimization has additional state-dependent terms. These terms can be tuned to capture a large envelope of cost functions and behaviors, despite the fact that the control cost weight is determined based on the variance of the process noise. In contrast, condition (i) is more restrictive, particularly in dynamical systems where the source of stochasticity is not only due to additive stochastic forces but also due to model uncertainty in indirectly actuated states. Nonetheless, it still encompasses a broad class of SOC problems. 
 \end{remark}}

Suppose Assumption 1 holds. Using the constant $\lambda$ that satisfies (\ref{lambda}), we introduce the following transformed value function $\xi(x,t;\eta)$:
\begin{equation}\label{exp transformation}
 \xi(x,t;\eta) = \exp\left(-\frac{1}{\lambda}J(x,t;\eta)\right).
\end{equation}
Transformation (\ref{exp transformation}) allows us to write the PDE (\ref{HJB PDE}) in terms of $\xi\left(x,t;\eta\right)$ as:

\begin{equation}\label{transformed HJB PDE}
  \!\!\!\!\!\!\begin{cases}
     \begin{aligned}
         \!\partial_t\xi\!=&\frac{V\xi}{\lambda}-\!\frac{1}{2}\text{Tr}\!\left(\Sigma\Sigma^\top\!\partial^2_x\xi\right)\!+\!\frac{1}{2\xi}\!\left(\partial_x\xi\right)^{\!T}\!\!\Sigma\Sigma^\top\!\partial_x\xi\\
         &\!\!\!\!\!\!\!\!\!\!\!\!\!\!\!\!-\!\frac{\lambda}{2\xi}\!\left(\partial_x\xi\right)^\top\!\!\left(GR^{-1}G^\top\right)\!\partial_x\xi\!-\!f^\top\partial_x\xi,
          \end{aligned} &  \\
          & \!\!\!\!\!\!\!\!\!\!\!\!\!\!\!\!\!\!\!\!\!\!\!\!\!\!\!\!\!\!\forall(x,t)\in\mathcal{Q}, \\
    \!\!\underset{\substack{(x,t)\to(y,s) \\ (x,t)\in\mathcal{Q}}}{\lim}\xi(x,t;\eta)\!=\!\text{exp}{\left(-\frac{\phi(y;\eta)}{\lambda}\right)}, & \!\!\!\!\!\!\!\!\!\!\!\!\!\!\!\!\!\!\!\!\!\!\!\!\!\!\!\!\!\!\forall(y,s)\in\partial\mathcal{Q}.
  \end{cases}
    \end{equation}
Using Assumption 1 in the equation \eqref{transformed HJB PDE}, we rewrite PDE (\ref{HJB PDE}) as a linear PDE in terms of $\xi\left(x,t;\eta\right)$ as:

\begin{equation}\label{linearized risk-minimizing HJB}
 \!\!\begin{cases}
     \!\partial_t\xi\!=\!\frac{V\xi}{\lambda}\!-\!f^\top\partial_x\xi-\frac{1}{2}\text{Tr}\left(\Sigma\Sigma^\top\partial^2_x\xi\right),       & \forall(x,t)\!\in\!\mathcal{Q}, \\
    \!\!\underset{\substack{(x,t)\to(y,s) \\ (x,t)\in\mathcal{Q}}}{\lim}\xi(x,t; \eta)\!=\!\text{exp}{\left(-\frac{\phi(y; \eta)}{\lambda}\right)}, & \!\!\!\forall(y,s)\!\in\!\partial\mathcal{Q}.\\  
  \end{cases}
\end{equation}
Now we find the solution of the linearized PDE \eqref{linearized risk-minimizing HJB} using the Feynman-Kac lemma.
\begin{lemma}[Feynman-Kac lemma]\label{theorem: E and !}
    At any $\eta\geq0$, the solution to the linear PDE \eqref{linearized risk-minimizing HJB} exists. Moreover, the solution is unique in the sense that $\xi$ solving \eqref{linearized risk-minimizing HJB} is given by
    \begin{equation}\label{xi}
       \begin{aligned}
  \!\!\!\!\!\!\xi\!\left(x,t;\eta\right) & \!= \!\mathbb{E}_{x,t}\!\! \left[\text{exp}\!\left(\!\!-\frac{\phi\left(\hat{{x}}({\hat{{t}}_{f}});\eta\right)}{\lambda}\!-\!\frac{1}{\lambda}\!\int_{t}^{\hat{{t}}_{f}}\!\!\!\!V\!\!\left(\hat{{x}}(r), r\right)\!dr \!\!\right)\!\right]\!.\\
  \end{aligned}
    \end{equation}
    where $\hat{{x}}(t)$ evolves according to the dynamics \eqref{uncontrolled SDE} starting at $(x,t)$ and the exit time $\hat{{t}}_f$ is defined according to \eqref{t_hat_f with Q}.
    
    \begin{proof}
    The proof follows from \cite[Theorem 9.1.1 {\color{black}and Section 9.2}]{oksendal2013stochastic}.
    \end{proof}

\end{lemma}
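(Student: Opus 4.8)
The plan is to establish \eqref{xi} by a classical Feynman--Kac argument that simultaneously delivers uniqueness (any classical solution of \eqref{linearized risk-minimizing HJB} must coincide with the right-hand side of \eqref{xi}) and, through the same identity, verifies the representation. Let $\xi$ be any function satisfying the stated hypotheses: $C^{1,2}$ on $\mathcal{Q}$ and continuous up to $\partial\mathcal{Q}$ with the prescribed boundary values. Along the uncontrolled trajectory $\hat{x}(s)$ of \eqref{uncontrolled SDE} started at $(x,t)$, I would introduce the exponential running-cost factor
\[
 Y(s) = \exp\left(-\frac{1}{\lambda}\int_t^s V(\hat{x}(r),r)\,dr\right)
\]
and study the product $M(s) = Y(s)\,\xi(\hat{x}(s),s;\eta)$ for $s\in[t,\hat{t}_f]$.

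Applying It\^o's formula to $M(s)$ and using that $Y$ is of finite variation (so its quadratic variation and its cross-variation with $\xi$ vanish) gives
\[
 dM = Y\left(\partial_t\xi + f^\top\partial_x\xi + \tfrac{1}{2}\text{Tr}(\Sigma\Sigma^\top\partial_x^2\xi) - \tfrac{V\xi}{\lambda}\right)ds + Y\,(\partial_x\xi)^\top\Sigma\,dw.
\]
The parenthesized drift coincides with the rearranged PDE in \eqref{linearized risk-minimizing HJB} and therefore vanishes identically on $\mathcal{Q}$, leaving only the It\^o integral. Hence $M$ is a local martingale; once it is upgraded to a true martingale, taking expectations between $t$ and the exit time $\hat{t}_f$ of \eqref{t_hat_f with Q} yields $\xi(x,t;\eta) = M(t) = \mathbb{E}_{x,t}[M(\hat{t}_f)]$, since $Y(t)=1$. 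Because $(\hat{x}(\hat{t}_f),\hat{t}_f)\in\partial\mathcal{Q}$, the boundary condition replaces $\xi(\hat{x}(\hat{t}_f),\hat{t}_f;\eta)$ by $\exp(-\phi(\hat{x}(\hat{t}_f);\eta)/\lambda)$, and reinserting $Y(\hat{t}_f)$ reproduces \eqref{xi} exactly. Reading this chain of equalities backwards shows the right-hand side of \eqref{xi} is the only possible value of $\xi(x,t;\eta)$, which is the uniqueness claim.

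For the martingale upgrade I would exploit the standing sign conditions: since $V\geq 0$, the factor $Y(s)\in(0,1]$ is bounded, and $\hat{t}_f\leq T$ is almost surely finite because the time component of $(\hat{x},s)$ must exit $\mathcal{Q}$ no later than $T$; combined with a localization argument and control of $\xi$ on the compact $\overline{\mathcal{Q}}$, the stochastic-integral term has zero expectation. Existence of a solution then follows either by citing classical parabolic boundary-value theory as in \cite[Section 9.2]{oksendal2013stochastic}, or, equivalently, by verifying directly that the function defined by the right-hand side of \eqref{xi} is $C^{1,2}$ on $\mathcal{Q}$ and attains the boundary data continuously.

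The main obstacle I anticipate lies in this existence/regularity half, not in the representation identity. First, $\Sigma$ is assumed only left-invertible (full column rank $k\le n$), so $\Sigma\Sigma^\top$ may be rank-deficient and the generator degenerate; away from uniform ellipticity one must invoke a hypoellipticity (H\"ormander-type) condition or a viscosity-solution formulation to secure the smoothness that the It\^o step requires. Second, the boundary $\partial\mathcal{Q}=(\partial\mathcal{X}_s\times[t_0,T])\cup(\mathcal{X}_s\times\{T\})$ has a corner where the lateral and terminal faces meet, and the datum $\exp(-\phi(\cdot;\eta)/\lambda)$ built from \eqref{phi(x)} is discontinuous across $\partial\mathcal{X}_s$; this is precisely why the footnote regularizes $\phi$ with the bump function $B(x)$, and the continuous-attainment claim should be argued for that smoothed datum. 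Reconciling these degeneracy and corner-regularity issues with the demand for a classical $C^{1,2}$ solution is the delicate point, whereas the probabilistic identity becomes routine once a sufficiently regular $\xi$ is available.
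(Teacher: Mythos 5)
Your proposal is correct and takes essentially the same route as the paper: the paper's entire proof is the citation to \cite[Theorem 9.1.1 and Section 9.2]{oksendal2013stochastic}, and the argument you reconstruct---It\^o's formula applied to the discounted process $Y(s)\,\xi(\hat{x}(s),s;\eta)$, vanishing drift by the PDE, localization plus boundedness of $\xi$ on $\overline{\mathcal{Q}}$ to take expectations at the exit time (giving both the representation \eqref{xi} and uniqueness), with existence deferred to classical parabolic theory---is precisely the content of that citation. Your closing caveats about the possibly degenerate $\Sigma\Sigma^\top$ and the discontinuity of the boundary datum built from \eqref{phi(x)} are well placed; the paper addresses only the latter, via the bump-function regularization in its footnote, and otherwise leaves these regularity issues implicit in the citation.
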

Now we state the following theorem which proves that under Assumption \ref{Assum: linearity}, for a given value of $\eta$, the optimal policy of problem \eqref{eq: g of eta} exists and is unique.
\begin{theorem}\label{theorem: E and ! of policy}
    If there exists a positive constant $\lambda$ that satisfies Assumption 1, then for any $\eta\geq0$ a \textit{unique} value function $J(x,t; \eta)$ of the Problem \eqref{eq: g of eta} exists and is given by $J(x,t;\eta) = -\lambda\,\text{log}\left(\xi\left(x,t;\eta\right)\right)$ where $\xi(x,t;\eta)$ is given by \eqref{xi}. Consequently, there exists a \textit{unique} optimal policy given by \eqref{optimal policy}. 
\end{theorem}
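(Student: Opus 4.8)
The plan is to chain together the verification theorem (Theorem \ref{theorem: solution to risk-minimizing soc}), the exponential transformation enabled by Assumption \ref{Assum: linearity}, and the Feynman--Kac representation (Lemma \ref{theorem: E and !}), and then read off existence and uniqueness from the bijectivity of the transformation. First I would take the function $\xi(x,t;\eta)$ defined by the expectation \eqref{xi}; by Lemma \ref{theorem: E and !} this $\xi$ is the unique solution of the linear PDE \eqref{linearized risk-minimizing HJB}. Because $\xi$ is the expectation of a strictly positive random variable, $\xi>0$ on $\overline{\mathcal{Q}}$, so the candidate value function $J(x,t;\eta):=-\lambda\log\xi(x,t;\eta)$ is well defined.

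Next I would verify that this $J$ meets the hypotheses (a) and (b) of Theorem \ref{theorem: solution to risk-minimizing soc}. For (a), parabolic regularity for the linear PDE \eqref{linearized risk-minimizing HJB} gives that $\xi$ is continuously differentiable in $t$ and twice continuously differentiable in $x$ on $\mathcal{Q}$; since $\xi>0$, the map $\xi\mapsto-\lambda\log\xi$ is smooth, so $J$ inherits the same regularity. For (b), I would reverse the algebra that produced \eqref{linearized risk-minimizing HJB}: substituting $\xi=\exp(-J/\lambda)$ and computing $\partial_t\xi$, $\partial_x\xi$, $\partial_x^2\xi$ by the chain rule turns the linear PDE back into the transformed PDE \eqref{transformed HJB PDE}, and invoking Assumption \ref{Assum: linearity} to replace $\Sigma\Sigma^\top$ by $\lambda GR^{-1}G^\top$ collapses the two quadratic gradient terms into the single term $-\tfrac{1}{2}(\partial_xJ)^\top GR^{-1}G^\top\partial_xJ$, recovering exactly the nonlinear HJB PDE \eqref{HJB PDE}. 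The boundary condition transforms consistently, since $\xi|_{\partial\mathcal{Q}}=\exp(-\phi/\lambda)$ is equivalent to $J|_{\partial\mathcal{Q}}=\phi$.

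Having established (a) and (b), statement (i) of Theorem \ref{theorem: solution to risk-minimizing soc} immediately identifies $J=-\lambda\log\xi$ as the value function of Problem \eqref{eq: g of eta}, proving existence, and statement (ii) yields the optimal policy \eqref{optimal policy}. For uniqueness, I would note that since the logarithm is a bijection on $(0,\infty)$, the uniqueness of $\xi$ from Lemma \ref{theorem: E and !} transfers directly to uniqueness of $J$, so the value function is unique. The optimal policy is then unique because Theorem \ref{theorem: solution to risk-minimizing soc} shows the pointwise minimizer in \eqref{pde ineqaulity} is attained only at $u=-R^{-1}G^\top\partial_xJ$, which is determined by the now-unique $J$.

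The main obstacle I anticipate is condition (a), the regularity of $J$: one must ensure $\xi$ is genuinely $C^{2,1}$ and bounded away from zero so that $-\lambda\log\xi$ is admissible. Strict positivity is immediate from the expectation formula, but the $C^{2,1}$ smoothness hinges on sufficient regularity of the terminal data $\phi(\cdot;\eta)$ on $\partial\mathcal{Q}$, which is precisely why the paper smooths $\phi$ via a bump function (see the footnote to \eqref{phi(x)}); reconciling that approximation with the exact boundary condition is the delicate point. The reverse-transformation computation in step (b), while the place where Assumption \ref{Assum: linearity} is indispensable, is otherwise a routine chain-rule calculation.
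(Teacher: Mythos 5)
Your proposal is correct and takes essentially the same route as the paper's own proof: invoke the Feynman--Kac lemma (Lemma \ref{theorem: E and !}) for existence and uniqueness of $\xi$, pass to $J=-\lambda\log\xi$ via the transformation \eqref{exp transformation}, and conclude existence and uniqueness of the value function and policy from the verification theorem (Theorem \ref{theorem: solution to risk-minimizing soc}). The only difference is that you explicitly check the verification theorem's hypotheses (strict positivity of $\xi$, $C^{2,1}$ regularity, and the reverse chain-rule computation recovering \eqref{HJB PDE}), steps the paper leaves implicit; this fills in, rather than changes, the argument.
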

\begin{proof}
According to Lemma \ref{theorem: E and !}, the solution of the linear PDE \eqref{linearized risk-minimizing HJB} exists and is unique in the sense that $\xi$ solving \eqref{linearized risk-minimizing HJB} is given by \eqref{xi}. Therefore $J(x,t;\eta)$ is unique and is given by \eqref{exp transformation}. Consequently, from Theorem \ref{theorem: solution to risk-minimizing soc}, a unique optimal policy exists and is given by \eqref{optimal policy}.
\end{proof} 

\subsection{Strong Duality}
Due to the weak duality \cite{boyd2004convex}, the value of the dual problem \eqref{eq: dual problem} is always a lower bound for the primal problem \eqref{CC-SOC}. The difference between the values of \eqref{eq: dual problem} and \eqref{CC-SOC} is called the duality gap. When the duality gap is zero, we say that the strong duality holds. In our current study on chance-constrained SOC, strong duality has a practical significance as it implies that an optimal solution to the ``hard-constrained" problem \eqref{CC-SOC} can be obtained by solving the ``soft-constrained" problem \eqref{eq: g of eta}, provided that the dual variable $\eta$ is properly chosen.\par
Since \eqref{CC-SOC} is a non-convex optimization problem in general, establishing the strong duality between \eqref{CC-SOC} and \eqref{eq: dual problem} is not trivial. In Appendix A, we outline general conditions under which a non-convex optimization problem admits a zero duality gap. In the sequel, we apply the result in Appendix A to \eqref{CC-SOC} to delineate the conditions under which the chance-constrained SOC admits the strong duality.\par 
The following assumption is reminiscent of the Slater's condition, which is often a natural premise for strong duality:
\begin{assumption}\label{assum: strict feasibility}
There exists a policy $\widetilde{u}(\cdot)$ such that $P_\text{fail}(x_0, t_0, \widetilde{u}(\cdot))-\Delta<0$, i.e., Problem \ref{Problem: Risk-constrained SOC problem} is strictly feasible.
\end{assumption}
\vspace{3mm}
Using Lemma \ref{lem:existence_dual_sol} in the Appendix A, we can show that under Assumptions \ref{Assum: linearity} and \ref{assum: strict feasibility}, there exists a dual optimal solution $\eta^*$ such that $0\leq\eta^*<\infty$ such that $g(\eta^*)=\underset{{\eta\geq 0}}{\max}\;g(\eta)$.
To proceed further, we need the following assumption. We conjecture that this assumption is valid under mild conditions; a formal analysis is postponed as future work. 
\begin{assumption}\label{assum: continuity of Pfail}
    Suppose Assumption 1 holds so that for each $\eta\geq 0$ a unique optimal policy $u^*(x,t;\eta)$ of Problem \eqref{eq: g of eta} exists (c.f., Theorem \ref{theorem: E and ! of policy}). We further assume that the function  $\eta\mapsto P_{\mathrm{fail}}(x_0,t_0, u^*(x,t;\eta)):[0, \infty)\rightarrow[0,1]$ is continuous.
\end{assumption}
\vspace{3mm}
 Now notice that under Assumption  \ref{Assum: linearity}, the solution $u^*(\cdot;\eta)$ of the problem ${\argmin}_{u(\cdot)}\; \mathcal{L}\left(x_0, t_0, {u}(\cdot); \eta\right)$ exists and is unique for each $\eta\geq0$. Therefore, statements (i) and (ii) of the Assumption \ref{asmp:continuity} in the Appendix A hold true. Moreover, by the Assumption \ref{assum: continuity of Pfail}, statement (iii) of Assumption \ref{asmp:continuity} in the Appendix A is satisfied. Now under Assumption \ref{asmp:continuity} using Lemma \ref{lem:comp_slackness}, we can prove the following complementary slackness statements:
 \begin{enumerate}[(a)]
        \item If $\eta^*=0$, then $P_{\mathrm{fail}}(x_0,t_0, u^*(\cdot\;;{\eta^*}){\color{black})}-\Delta\leq0$
        \item If $\eta^*>0$, then $P_{\mathrm{fail}}(x_0,t_0, u^*(\cdot\;;{\eta^*}))-\Delta=0$
    \end{enumerate}
The following theorem is the main result of this section.
\begin{theorem}\label{theorem: strong duality}
Consider problem \ref{Problem: Risk-constrained SOC problem} and suppose Assumptions \ref{Assum: linearity}, \ref{assum: strict feasibility} and \ref{assum: continuity of Pfail} hold. Then there exists a dual optimal solution $0\leq\eta^*<\infty$ that maximizes $g(\eta)$ and a unique optimal policy $u^*(\cdot\;;\eta^*)$ of problem \eqref{eq: g of eta} is an optimal policy of Problem \ref{Problem: Risk-constrained SOC problem} (primal problem) such that $C\left(x_0, t_0, {u}^*(\cdot\;;\eta^*)\right)=g(\eta^*)$ i.e., the duality gap is zero.
\end{theorem}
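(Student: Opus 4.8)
The plan is to derive the zero duality gap from weak duality together with complementary slackness, importing from Appendix~A both the existence of a finite dual optimizer and the complementary slackness relations. First I would invoke Lemma~\ref{lem:existence_dual_sol}: since Assumption~\ref{Assum: linearity} guarantees, via Theorem~\ref{theorem: E and ! of policy}, a \emph{unique} minimizer $u^*(\cdot\,;\eta)$ of \eqref{eq: g of eta} for every $\eta\geq0$, and since $g(\eta)$ is concave and upper semicontinuous, strict feasibility (Assumption~\ref{assum: strict feasibility}) yields a dual optimal point $0\leq\eta^*<\infty$ with $g(\eta^*)=\max_{\eta\geq0}g(\eta)$. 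At this $\eta^*$ the inner problem is solved by $u^*(\cdot\,;\eta^*)$, so that
\begin{align*}
g(\eta^*)&=\mathcal{L}\left(x_0,t_0,u^*(\cdot\,;\eta^*);\eta^*\right)\\
&=C\left(x_0,t_0,u^*(\cdot\,;\eta^*)\right)\\
&\quad+\eta^*\left[P_{\mathrm{fail}}\left(x_0,t_0,u^*(\cdot\,;\eta^*)\right)-\Delta\right].
\end{align*}

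The central step is to show that the bracketed constraint term vanishes and that $u^*(\cdot\,;\eta^*)$ is primal-feasible, which I would do by a case split on the value of $\eta^*$ using the complementary slackness relations (a) and (b) established above. If $\eta^*=0$, the product $\eta^*[\,\cdot\,]$ is trivially zero, while relation (a) gives $P_{\mathrm{fail}}(x_0,t_0,u^*(\cdot\,;\eta^*))-\Delta\leq0$, so $u^*(\cdot\,;\eta^*)$ satisfies the chance constraint in \eqref{CC-SOC}. If $\eta^*>0$, relation (b) forces $P_{\mathrm{fail}}(x_0,t_0,u^*(\cdot\,;\eta^*))-\Delta=0$, which simultaneously annihilates the bracketed term and certifies feasibility with the constraint active. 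In either case one concludes $g(\eta^*)=C(x_0,t_0,u^*(\cdot\,;\eta^*))$ with $u^*(\cdot\,;\eta^*)$ feasible for the primal.

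The final step closes the gap by a sandwich argument. Let $p^*$ denote the optimal value of the primal problem \eqref{CC-SOC}. Weak duality gives $g(\eta^*)\leq p^*$, whereas feasibility of $u^*(\cdot\,;\eta^*)$ gives $p^*\leq C(x_0,t_0,u^*(\cdot\,;\eta^*))$ because $p^*$ is the infimum of $C$ over feasible policies. Chaining these with the identity just obtained yields
\begin{equation*}
p^*\leq C\left(x_0,t_0,u^*(\cdot\,;\eta^*)\right)=g(\eta^*)\leq p^*,
\end{equation*}
so all three quantities coincide. Hence $u^*(\cdot\,;\eta^*)$ attains the primal minimum, $C(x_0,t_0,u^*(\cdot\,;\eta^*))=g(\eta^*)$, and the duality gap is zero.

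I expect the genuine difficulty to lie not in this concluding assembly---which is the standard Lagrangian-duality argument---but in the two supporting results it consumes. The first delicate point is establishing a \emph{finite} $\eta^*$: one must exclude $\eta^*=\infty$, and this is precisely where Assumption~\ref{assum: strict feasibility} enters, since evaluating the inner minimization at a strictly feasible $\widetilde{u}(\cdot)$ produces the affine bound $g(\eta)\leq C(x_0,t_0,\widetilde{u}(\cdot))+\eta[P_{\mathrm{fail}}(x_0,t_0,\widetilde{u}(\cdot))-\Delta]$ with strictly negative slope, forcing $g(\eta)\to-\infty$ and, with upper semicontinuity, attainment at finite $\eta^*$. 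The second, and in my view the weakest link, is the complementary slackness itself, which hinges on the continuity of $\eta\mapsto P_{\mathrm{fail}}(x_0,t_0,u^*(\cdot\,;\eta))$ posited in Assumption~\ref{assum: continuity of Pfail}: without this continuity the constraint value could jump across the critical $\eta^*$, so the bracketed term need not vanish at the optimum and the sandwich would collapse. Since that continuity is only conjectured here, I would regard the reliance on Assumption~\ref{assum: continuity of Pfail} as the principal obstacle to a fully rigorous statement.
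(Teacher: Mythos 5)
Your proof is correct and follows essentially the same route as the paper: the paper defers Theorem \ref{theorem: strong duality} to the abstract result in Appendix A (Theorem \ref{theorem: strong duality2}), whose proof is exactly your argument---existence of a finite dual optimum via Lemma \ref{lem:existence_dual_sol}, complementary slackness via Lemma \ref{lem:comp_slackness} (with Assumptions \ref{Assum: linearity} and \ref{assum: continuity of Pfail} verifying the abstract Assumption \ref{asmp:continuity}), and then the standard chain of equalities closing the gap. Your explicit weak-duality sandwich and your closing remarks on where strict feasibility and the continuity assumption enter match the paper's reasoning; there is no substantive difference.
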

\begin{proof}
Refer to proof of Theorem \ref{theorem: strong duality2} in the Appendix A.
\end{proof}

 
 In order to solve the dual problem, it is natural to use the gradient ascent algorithm $\eta \leftarrow \eta+\gamma(P_{\mathrm{fail}}(x_0,t_0,u^*(\cdot;\eta))-\Delta)$ to iteratively update the dual variable $\eta$. Here, $\gamma$ is the step size, $u^*(.;\eta)$ is the optimal policy solving \eqref{eq: g of eta}, and $P_{\mathrm{fail}}(x_0,t_0,u^*(\cdot;\eta))$ is the probability of failure under the policy $u^*(.;\eta)$. For the dual ascent algorithm to be practical, we need to be able to evaluate the gradient $P_{\mathrm{fail}}-\Delta$ efficiently in each iteration. Therefore, next, we study how to compute $P_{\mathrm{fail}}$ for the given value of $\eta$. 
 

\subsection{Risk Estimation}\label{Sec: Risk Analysis}
We present two approaches to compute $P_{\mathrm{fail}}(x_0,t_0,u^*(.;\eta))$ for a given value of $\eta$. The first approach is PDE-based in which we find the optimal policy $u^*(.;\eta)$ first, and then evaluate its risk $P_{\mathrm{fail}}(x_0,t_0,u^*(.;\eta))$ by solving a PDE. Despite conceptual simplicity, this approach is difficult to implement computationally unless there exists an analytical expression of $u^*(.;\eta)$. To circumvent this difficulty, we also present an importance-sampling-based approach. This approach allows us to numerically compute  $P_{\mathrm{fail}}(x_0,t_0,u^*(.;\eta))$ without ever constructing $u^*(.;\eta)$ and is more computationally amenable. 
 \subsubsection{PDE-Based Approach}
 We state the following theorem:
\begin{theorem}\label{Theorem: risk estimation}
Suppose there exists a function $J:\overline{\mathcal{Q}}\rightarrow \mathbb{R}$ such that 
\begin{enumerate}[(a)]
    \item $J(x,t)$ is continuously differentiable in $t$ and twice continuously differentiable in $x$ in the domain $\mathcal{Q}$;
    \item $J(x,t)$ solves the following PDE for the given optimal control policy $u^*(\cdot)$:
    \begin{equation}\label{risk PDE}
      \!\!\!\!\!\!\!\!\!\!\!\!\!\begin{cases}
 \begin{aligned}
    \!\!-\partial_{t}J\!=\!\left(\!f\!+\!Gu^*\right)^\top\!\!\partial_xJ\!+\!\frac{1}{2}\text{Tr}\!\left(\!\Sigma\Sigma^\top\!\partial^2_xJ\right)\!,
    \end{aligned}& \!\!\forall(x,t)\!\in\!\mathcal{Q}, \\
    \!\!\underset{\substack{(x,t)\to(y,s) \\ (x,t)\in\mathcal{Q}}}{\lim}J(x,t)= \phi'(y), & \!\!\!\!\!\forall(y,s)\!\in\!\partial\mathcal{Q}.
    \end{cases}
          \end{equation}
\end{enumerate}
where ${\phi}'(x) \!=\! \mathds{1} _{{x}\in \partial\mathcal{X}_{s}}$. Then, $P_{\mathrm{fail}}$ is given by
\begin{equation*}
    P_\mathrm{fail}\left(x_0,t_0,u^*(\cdot)\right) = J(x_0,t_0).
\end{equation*}
\end{theorem}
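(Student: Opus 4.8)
The plan is to mirror the Dynkin's-formula argument already used in the proof of Theorem~\ref{theorem: solution to risk-minimizing soc}, now applied to the \emph{closed-loop} process obtained by substituting the fixed policy $u^*(\cdot)$ into the dynamics. The starting point is the identity \eqref{pfail2}, which lets me write the quantity I want to compute as an expectation of a terminal indicator,
\begin{equation*}
P_{\mathrm{fail}}\left(x_0,t_0,u^*(\cdot)\right)=\mathbb{E}_{x_0,t_0}\left[\mathds{1}_{{x}({t}_{f})\in\partial\mathcal{X}_{s}}\right].
\end{equation*}
Since $\phi'(x)=\mathds{1}_{{x}\in\partial\mathcal{X}_{s}}$ is exactly the boundary data in \eqref{risk PDE}, the goal reduces to showing that the PDE solution $J$ reproduces this conditional expectation, i.e. a Feynman--Kac representation.

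First I would apply Dynkin's formula \cite[Theorem 7.4.1]{oksendal2013stochastic} to $J({x}(s),s)$ along the closed-loop trajectory $d{x}=(f+Gu^*)ds+\Sigma d{w}$, integrating from $(x_0,t_0)$ up to the exit time ${t}_{f}$. Expanding $dJ$ by It\^o's rule exactly as in \eqref{dJ} (with $u$ replaced by $u^*$) yields a drift-plus-diffusion integrand together with the It\^o term $\int_{t_0}^{{t}_{f}}(\Sigma d{w})^\top\partial_xJ$. As in \eqref{ito integral}, the expectation of this It\^o integral vanishes, so taking $\mathbb{E}_{x_0,t_0}[\cdot]$ gives
\begin{equation*}
\mathbb{E}_{x_0,t_0}\!\left[J\!\left({x}({t}_{f}),{t}_{f}\right)\right]=J(x_0,t_0)+\mathbb{E}_{x_0,t_0}\!\left[\int_{t_0}^{{t}_{f}}\!\!\left(\partial_tJ+(f+Gu^*)^{\!\top}\!\partial_xJ+\tfrac{1}{2}\mathrm{Tr}\!\left(\Sigma\Sigma^{\!\top}\!\partial_x^2J\right)\right)ds\right].
\end{equation*}
The crucial simplification is that the integrand is precisely the left-hand side minus the right-hand side of the PDE in \eqref{risk PDE}, hence identically zero on $\mathcal{Q}$. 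This collapses the identity to $\mathbb{E}_{x_0,t_0}[J({x}({t}_{f}),{t}_{f})]=J(x_0,t_0)$. Invoking the boundary condition of \eqref{risk PDE}, namely $J({x}({t}_{f}),{t}_{f})=\phi'({x}({t}_{f}))=\mathds{1}_{{x}({t}_{f})\in\partial\mathcal{X}_{s}}$, and comparing with the displayed expression for $P_{\mathrm{fail}}$ completes the argument.

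The step I expect to require the most care is the rigorous handling of the stopping time ${t}_{f}$: Dynkin's formula and the vanishing of the It\^o integral both need the regularity in hypothesis (a) together with enough integrability of $\partial_xJ$ along trajectories up to ${t}_{f}$, so that the optional-stopping/martingale property applies at the (bounded, since ${t}_{f}\le T$) exit time rather than at a deterministic horizon. A secondary point worth stating explicitly is the consistency of the boundary data with the decomposition $\partial\mathcal{Q}=(\partial\mathcal{X}_s\times[t_0,T])\cup(\mathcal{X}_s\times\{T\})$: on the first piece $\phi'=1$ (a failure realization) and on the second piece $\phi'=0$ (the trajectory survives to $T$ inside $\mathcal{X}_s$), which is exactly what makes $J(x_0,t_0)$ equal to the failure probability and not merely some generic exit functional.
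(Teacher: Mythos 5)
Your proposal is correct and follows essentially the same route as the paper's own proof: Dynkin's formula applied to $J$ along the closed-loop dynamics, cancellation of the integrand by the PDE in \eqref{risk PDE}, and then the boundary condition $\phi'(x)=\mathds{1}_{x\in\partial\mathcal{X}_s}$ combined with the identity \eqref{pfail2} to identify $J(x_0,t_0)$ with $P_{\mathrm{fail}}$. Your added remarks on optional stopping at the bounded exit time and on the two pieces of $\partial\mathcal{Q}$ are sensible refinements that the paper leaves implicit.
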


\begin{proof}
    Let $J(x,t)$ be the function satisfying (a) and (b). By Dynkin's formula, for each $(x,t)\in\overline{\mathcal{Q}}$ we have
\begin{equation*}
    \begin{aligned}
       \mathbb{E}_{x,t}\left[J\left({x}({t}_{f}), {t}_{f}\right)\right]&=J(x,t)\\ &\!\!\!\!\!\!\!\!\!\!\!\!\!\!\!\!\!\!\!\!\!\!\!\!\!\!\!\!\!\!\!\!\!\!\!\!\!\!\!\!\!\!\!\!\!\!\!\!\!\!\!+\!\mathbb{E}_{x,t}\!\!\left[\!\int_{t}^{{t}_{f}}\!\!\!\left(\!\!\partial_tJ \!+\! (f\!+\!Gu^*)^\top\!(\partial_xJ)\!+\!\frac{1}{2}\text{Tr}\!\left(\!\Sigma\Sigma^\top\!\partial_x^2J\right)\!\!\right)\!ds\!\right]\!.
    \end{aligned}
\end{equation*}
The second term on the right side contains, in parentheses, the PDE in (\ref{risk PDE}) and is therefore zero. Hence, we obtain
\begin{equation}\label{flag1}
       J(x,t)=\mathbb{E}_{x,t}\left[J\left({x}({t}_{f}), {t}_{f}\right)\right].
\end{equation}
From the boundary condition of the PDE (\ref{risk PDE})
\begin{equation}\label{flag2}
     \!\!\mathbb{E}_{x,t}\!\left[J\!\left({x}({t}_{f}), {t}_{f}\right)\right] \!=\! \mathbb{E}_{x,t}\left[\phi'\left({x}({t}_{f})\right)\right] \!=\mathbb{E}_{x, t}\left[\mathds{1} _{{x}({t}_{f})\in \partial\mathcal{X}_{s}}\right].
 \end{equation}
Combining (\ref{flag1}), (\ref{flag2}), and  we obtain
\begin{equation*}
   J(x_0,t_0) = P_\mathrm{fail}\left(x_0,t_0,u^*(\cdot)\right). 
\end{equation*}
\end{proof}
\begin{remark}
 We do not prove here the existence of the solution of the PDE \eqref{risk PDE}, rather we suppose that a solution exists. Note that PDE (\ref{risk PDE}) is a special case of the Cauchy-Dirichlet problem. For proof of the existence of a solution to the Cauchy-Dirichlet problem, we refer the readers to \cite[Chapter 6]{friedman1975stochastic}.
\end{remark} 
Theorem \ref{Theorem: risk estimation} implies that if we have the solution for the optimal policy $u^*(\cdot)$, $P_\mathrm{fail}\left(x_0,t_0,u^*(\cdot)\right)$ can be computed by solving the PDE \eqref{risk PDE}. Next, we present an importance-sampling-based approach to find $P_{\mathrm{fail}}$ without constructing $u^*(\cdot)$.
\subsubsection{Importance-Sampling-Based Approach}\label{Sec: Importance-Sampling-Based Approach}
Let $\mathcal{T}$ be the space of trajectories $x\coloneqq\{x(t), t\in[t_0, {t}_f]\}$. Let $Q^*(x)$ be the probability distribution of the trajectories defined by system \eqref{SDE} under the optimal policy $u^*(\cdot)$. Using \eqref{pfail2}, we can write
\begin{align}\label{pfail under Q*}
P_{\mathrm{fail}}(x_0, t_0, u^*(\cdot)) &= \int_{\mathcal{T}}\mathds{1} _{{x}({t}_{f})\in \partial\mathcal{X}_{s}}Q^*(dx).
\end{align}
Suppose we generate an ensemble of large number of $N$ trajectories $ \{x^{(i)}\}_{i=1}^N$ under the distribution $Q^*$. Then according to the strong law of large numbers, as $N\rightarrow\infty$, 
\begin{align*}
\frac{1}{N} \sum_{i=1}^{N} \mathds{1} _{{x}^{(i)}({t}_{f})\in \partial\mathcal{X}_{s}} \overset{a.s.}{\rightarrow} P_{\mathrm{fail}}(x_0, t_0, u^*(\cdot)) \quad x^{(i)}\sim Q^*(x). 
\end{align*}
This implies that a Monte Carlo algorithm is applicable to numerically evaluate $P_{\mathrm{fail}}$. However, such a Monte Carlo algorithm is impractical since it is difficult to sample trajectories from $Q^*$ as the optimal policy $u^*$ is unknown. Fortunately, an importance sampling scheme is available which allows us to numerically evaluate $P_{\mathrm{fail}}$ using a trajectory ensemble sampled from the distribution $P(x)$ of the uncontrolled system (2). The next theorem provides the details:  
\begin{theorem}\label{Thm: PI fast}
Suppose we generate an ensemble of a large number of  $N$ trajectories $ \{x^{(i)}\}_{i=1}^N$ under the distribution $P$. For each $i$, let $r^{(i)}$ be the path reward of the sample path $i$ given by 
\begin{align}\label{r(i)}
    r^{(i)} &= \text{exp}\!\left(\!\!-\frac{\phi\left({{x}^{(i)}}({{{t}}_{f}});\eta\right)}{\lambda}\!-\!\frac{1}{\lambda}\!\int_{t_0}^{{{t}}_{f}}\!\!\!\!V\!\!\left({{x}^{(i)}}(r), r\right)\!dr \!\!\right)
\end{align}
Let us define $r \coloneqq \sum_{i=1}^{N}r^{(i)}$. Then as $N\rightarrow\infty$, 
\begin{equation*}
    \sum_{i=1}^{N}\frac{r^{(i)}}{r}\mathds{1}_{{x}^{(i)}({t}_{f})\in \partial\mathcal{X}_{s}} \overset{a.s.}{\rightarrow} P_{\mathrm{fail}}(x_0, t_0, u^*(\cdot)) \quad x^{(i)}\sim P(x).
\end{equation*}
\end{theorem}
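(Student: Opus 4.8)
The plan is to read the left-hand side as a self-normalized importance-sampling estimator and to show, via the strong law of large numbers (SLLN), that it converges almost surely to $\mathbb{E}_{P}\!\left[r(x)\,\mathds{1}_{x(t_f)\in\partial\mathcal{X}_s}\right]\big/\mathbb{E}_{P}\!\left[r(x)\right]$, which I would then identify with $P_{\mathrm{fail}}$ through the path-integral change of measure. Writing $r(x)$ for the path reward defined in \eqref{r(i)} and using $r=\sum_i r^{(i)}$, the estimator is the ratio of empirical means
\[
\sum_{i=1}^{N}\frac{r^{(i)}}{r}\,\mathds{1}_{x^{(i)}(t_f)\in\partial\mathcal{X}_s}
=\frac{\frac{1}{N}\sum_{i=1}^{N} r^{(i)}\,\mathds{1}_{x^{(i)}(t_f)\in\partial\mathcal{X}_s}}{\frac{1}{N}\sum_{i=1}^{N} r^{(i)}},
\]
over i.i.d.\ samples $x^{(i)}\sim P$ of the uncontrolled process \eqref{uncontrolled SDE}.

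Next I would apply the SLLN to numerator and denominator separately. A preliminary boundedness observation makes the hypotheses immediate: since $\psi\ge 0$, $V\ge 0$, $\eta\ge 0$ and $\Delta\in(0,1)$, the exponent in $r$ is at most $\eta\Delta/\lambda$, so $0<r\le e^{\eta\Delta/\lambda}$ is bounded and hence integrable under $P$, and $r\,\mathds{1}_{x(t_f)\in\partial\mathcal{X}_s}\le r$ is as well. Therefore the numerator converges a.s.\ to $\mathbb{E}_P[r\,\mathds{1}_{x(t_f)\in\partial\mathcal{X}_s}]$ and the denominator to $\mathbb{E}_P[r]=\xi(x_0,t_0;\eta)$, the last equality being the Feynman--Kac representation \eqref{xi} of Lemma \ref{theorem: E and !}. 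Since $\xi(x_0,t_0;\eta)>0$, the ratio converges a.s.\ to $\mathbb{E}_P[r\,\mathds{1}_{x(t_f)\in\partial\mathcal{X}_s}]/\mathbb{E}_P[r]$.

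The crux is to show this limit equals $P_{\mathrm{fail}}(x_0,t_0,u^*(\cdot))$. Starting from \eqref{pfail under Q*}, I would change measure from the optimally controlled path law $Q^*$ to the uncontrolled law $P$ by Girsanov's theorem. The sole difference between \eqref{SDE} under $u^*$ and \eqref{uncontrolled SDE} is the added drift $Gu^*$, which by \eqref{optimal policy} and Assumption \ref{Assum: linearity} equals $-\tfrac{1}{\lambda}\Sigma\Sigma^\top\partial_xJ=\Sigma\theta$ with $\theta=-\tfrac{1}{\lambda}\Sigma^\top\partial_xJ$; in particular the added drift lies in the range of $\Sigma$, so no invertibility of $\Sigma$ is required. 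Girsanov then yields $\tfrac{dQ^*}{dP}(x)=\exp\!\big(\int_{t_0}^{t_f}\theta^\top dw-\tfrac12\int_{t_0}^{t_f}\|\theta\|^2\,dt\big)$. Applying It\^o's formula to the path functional $J(x(t),t;\eta)$, eliminating $\partial_tJ$ with the HJB PDE \eqref{HJB PDE}, and again replacing $\Sigma\Sigma^\top$ by $\lambda GR^{-1}G^\top$, the quadratic-in-$\partial_xJ$ contribution of the Girsanov correction cancels against that coming from the HJB right-hand side, and the exponent collapses to $\tfrac{1}{\lambda}J(x_0,t_0;\eta)-\tfrac{1}{\lambda}\phi(x(t_f);\eta)-\tfrac{1}{\lambda}\int_{t_0}^{t_f}V\,dt$. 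Using $\xi=e^{-J/\lambda}$ from \eqref{exp transformation} and $\xi(x_0,t_0;\eta)=\mathbb{E}_P[r]$, this gives the path-integral identity $\tfrac{dQ^*}{dP}(x)=r(x)/\mathbb{E}_P[r]$. Combining with \eqref{pfail under Q*} then shows $P_{\mathrm{fail}}=\mathbb{E}_P\!\big[\tfrac{dQ^*}{dP}\,\mathds{1}_{x(t_f)\in\partial\mathcal{X}_s}\big]=\mathbb{E}_P[r\,\mathds{1}_{x(t_f)\in\partial\mathcal{X}_s}]/\mathbb{E}_P[r]$, matching the SLLN limit and completing the argument.

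I expect the measure-change identity of the third step to be the main obstacle, for two reasons. First, Girsanov's theorem is naturally stated over a fixed horizon, whereas $t_f$ is a random exit time; I would handle this by applying the theorem up to the bounded stopping time $t_f\le T$ via optional stopping, and by invoking the regularized $\phi$ of the footnote to \eqref{phi(x)} where smoothness of $J$ up to $\partial\mathcal{Q}$ is needed. Second, one must verify carefully that, after substituting $u^*=-R^{-1}G^\top\partial_xJ$ and using \eqref{lambda}, the It\^o drift of $J$ exactly cancels the $-\tfrac12\|\theta\|^2$ term and reproduces the running cost $V/\lambda$ in the exponent; this cancellation is precisely where the quadratic-in-$u$ cost structure and the noise/control matching condition \eqref{lambda} are essential. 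The remaining SLLN bookkeeping is routine given the boundedness of $r$.
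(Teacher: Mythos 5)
Your proposal is correct, but it reaches the crucial identity $\tfrac{dQ^*}{dP}(x)=r(x)/\mathbb{E}_P[r]$ by a genuinely different route than the paper. The paper (proof of Theorem \ref{Thm: PI fast} together with Theorem \ref{Thm: likelihood ratio} in Appendix C) never touches the Girsanov exponential of the \emph{optimal} control directly: it first rewrites the soft-constrained problem \eqref{eq: g of eta} as a KL control problem over path measures, using $k=\Sigma^{\dagger}Gu$ and the identity $R=\lambda G^\top(\Sigma^{\dagger})^\top\Sigma^{\dagger}G$ implied by Assumption \ref{Assum: linearity}, and then invokes the Legendre duality between free energy and relative entropy (Appendix B) to conclude that the optimal path law is the Gibbs measure \eqref{Q*}, whose density with respect to $P$ is exactly $r(x)/\mathbb{E}_P[r]$ as in \eqref{dQ/dP}. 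You instead compute the Girsanov density for the drift $Gu^*=-\tfrac{1}{\lambda}\Sigma\Sigma^\top\partial_xJ$ explicitly and collapse its exponent using It\^o's formula on $J$, the HJB PDE \eqref{HJB PDE}, the matching condition \eqref{lambda}, and the boundary condition $J(x(t_f),t_f;\eta)=\phi(x(t_f);\eta)$; the cancellation you describe does go through and yields the same density. What each buys: the paper's variational argument works at the level of path measures and sidesteps both the smoothness of $J$ needed for It\^o up to the exit time and the martingale (Novikov-type) conditions implicit in your use of Girsanov with an unbounded $\partial_xJ$ — caveats you rightly flag but do not fully discharge (to be fair, the paper's Appendix C also states Girsanov without integrability hypotheses). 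Conversely, your argument is more self-contained given Theorems \ref{theorem: solution to risk-minimizing soc} and \ref{theorem: E and !}, and your ratio-of-empirical-means bookkeeping is actually cleaner than the paper's final step: since the normalized weights $r^{(i)}/r$ depend on the entire ensemble, the summands in the estimator are not i.i.d., and applying the strong law separately to numerator and denominator (with your bound $0<r^{(i)}\leq e^{\eta\Delta/\lambda}$ and $\mathbb{E}_P[r]=\xi(x_0,t_0;\eta)>0$) is the rigorous way to justify the almost-sure convergence that the paper asserts somewhat informally.
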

\begin{proof}
    Notice that $P_{\mathrm{fail}}$ in \eqref{pfail under Q*} can be equivalently written as
\begin{equation}\label{pfail under P}
    P_{\mathrm{fail}}(x_0, t_0, u^*(\cdot))  = \int_{\mathcal{T}}\mathds{1} _{{x}({t}_{f})\in \partial\mathcal{X}_{s}}\frac{dQ^*}{dP}(x)P(dx)
\end{equation}
where the Radon-Nikodym derivative $\frac{dQ^*}{dP}(x)$ represents the likelihood ratio of observing a sample path $x$ under distributions $Q^*$ and $P$.  Now, according to Theorem \ref{Thm: likelihood ratio} in the Appendix C, for a given ensemble of $N$ trajectories $\{x^{(i)}\}_{i=1}^N$ sampled under the distribution $P$, the likelihood ratio $\frac{dQ^*}{dP}$ of observing a sample path $x^{(i)}$ is given by $\frac{r^{(i)}/r}{1/N}$. Therefore, using \eqref{pfail under P}, by the strong law of large numbers, as $N\rightarrow\infty$ we get
\begin{align*}
    \frac{1}{N} &\sum_{i=1}^{N}\frac{r^{(i)}/r}{1/N}\mathds{1} _{{x}^{(i)}({t}_{f})\in \partial\mathcal{X}_{s}} \overset{a.s.}{\rightarrow} P_{\mathrm{fail}}(x_0, t_0, u^*(\cdot))\\
    {x}^{(i)} &\sim P(x)
\end{align*}
which completes the proof.
\end{proof} 
Theorem \ref{Thm: PI fast} provides us a sampling-based approach to numerically compute the probability of failure $P_{\mathrm{fail}}(x_0, t_0, u^*(\cdot))$. According to Theorem \ref{Thm: PI fast}, if we sample $N$ trajectories under distribution $P$, then we can approximate the probability of failure as 
\begin{equation}\label{approx pfail under P}
   \!\!\!\!\!\! P_{\mathrm{fail}}(x_0, t_0, u^*(\cdot))\! \approx\! \sum_{i=1}^{N}\frac{r^{(i)}}{r}\mathds{1}_{{x}^{(i)}({t}_{f})\in \partial\mathcal{X}_{s}} \quad  {x}^{(i)} \sim P(x).
\end{equation}
where $r^{(i)}$ is defined by \eqref{r(i)} and $r = \sum_{i=1}^{N} r^{(i)}$. Note that sampling under distribution $P$ is easy since we only need to simulate the uncontrolled system dynamics \eqref{uncontrolled SDE}.


\subsection{Solution of the Dual Problem}\label{Sec: Solution of the Dual Problem}
We present two numerical approaches to solve the dual problem \eqref{eq: dual problem}. The first approach is the finite difference method (FDM) which is one of the most popular approaches to numerically solve the partial differential equations. Despite its popularity, this approach suffers from the \textit{curse of dimensionality} and the computational cost grows exponentially as the state-space dimension increases. To circumvent this difficulty, we also present the path integral approach. This approach is more computationally amenable. It allows us to numerically solve the dual problem \eqref{eq: dual problem} online via open-loop samples of system trajectories. 
\subsubsection{Finite Difference Method}\label{Sec: FDM}
When the geometry of the computational domain is simple, it is straightforward to form discrete approximations to spatial differential operators with a high order of accuracy via Taylor series expansions \cite{Grossmann2007}. A finite number of grid points are placed at the interior and on the boundary of the computational domain, and the solution to the PDE is sought at these finite set of locations. Once the grid is determined, finite difference operators are derived to approximate spatial derivatives in the PDE. In this work, centered formulas are used to approximate spatial operators with up to eight-order accuracy at the interior grid points. For grid points near the boundary, asymmetric formulas that maintain the order of accuracy are used in conjunction with Dirichlet boundary condition. The use of finite difference operators yields a system of ordinary differential equations (ODEs) that is integrated in time with the desired method. The MATLAB suite of ODE solvers \cite{Shampine_1997_matlab} provides a number of efficient ODE integrators with high-temporal order, error control, and variable time-stepping that advance the solution in time.\par 

Note that FDM numerically solves the HJB PDE (\ref{linearized risk-minimizing HJB}) backward in time. The solution needs to be computed offline and the optimal control policy needs to be stored in a look-up table. For a given state and time, the optimal input for real-time control is obtained from the stored look-up table. Similar to the PDE \eqref{linearized risk-minimizing HJB}, FDM can be utilized to compute the probability of failure by numerically solving the PDE \eqref{risk PDE} backward in time. It is well known that the computational complexity and memory requirements of FDM increase exponentially as the state-space dimension increases. Therefore, FDM is intractable for systems with more than a few state variables. Moreover, this method is inconvenient for real-time implementation since the HJB PDE needs to be solved backward in time. Also, FDM computes a global solution over the entire domain $\mathcal{Q}$ even if the majority of the state-time pairs $(x,t)$ will never be visited by the actual system. 
To overcome these difficulties, we present the path integral approach which numerically solves the dual problem \eqref{eq: dual problem} in real-time.  
\subsubsection{Path Integral Approach} 

{\color{black}
In this section, we present the path integral approach for solving the dual problem \eqref{eq: dual problem}. Let \( \Sigma^{\dagger} \) denote the left pseudo-inverse of \( \Sigma \), i.e.,
\begin{equation}\label{sigma_dagger sigma}
  \Sigma^{\dagger} \Sigma = I,
\end{equation}
where \( I \) is the identity matrix of appropriate dimensions. As noted in Section~\ref{Sec: Controlled and Uncontrolled Processes}, we assume without loss of generality that \( \Sigma \) is left-invertible (i.e., full column rank). Let \( \Sigma^{\perp} \) denote a matrix whose rows span the left null space of \( \Sigma \), satisfying
\begin{equation}\label{sigma_perp}
  \Sigma^{\perp} \Sigma = \mathbf{0},
\end{equation}
where \( \mathbf{0} \) is the appropriately dimensioned zero matrix. Multiplying both sides of Assumption~\eqref{lambda} from the left by \( \Sigma^{\perp} \) and from the right by \( (\Sigma^{\perp})^\top \), we obtain:
\[
\Sigma^{\perp} \Sigma \Sigma^\top (\Sigma^{\perp})^\top = \lambda \Sigma^{\perp} G R^{-1} G^\top (\Sigma^{\perp})^\top.
\]
Since \( \Sigma^{\perp} \Sigma = \mathbf{0} \) by \eqref{sigma_perp}, and \( \lambda > 0 \) with \( R \succ 0 \), it follows that
\begin{equation}\label{sigma_perp G}
  \Sigma^{\perp} G = \mathbf{0}.
\end{equation}
Now, multiply both sides of the SDE~\eqref{SDE} from the left by \( \begin{bmatrix} \Sigma^{\dagger} \\ \Sigma^{\perp} \end{bmatrix} \) to obtain:
\begin{equation}\label{multiply Sigma hash}
\begin{bmatrix}
  \Sigma^{\dagger} \\
  \Sigma^{\perp}
\end{bmatrix} dx =
\begin{bmatrix}
  \Sigma^{\dagger} f \\
  \Sigma^{\perp} f
\end{bmatrix} dt +
\begin{bmatrix}
  \Sigma^{\dagger} G \\
  \Sigma^{\perp} G
\end{bmatrix} u\,dt +
\begin{bmatrix}
  \Sigma^{\dagger} \Sigma \\
  \Sigma^{\perp} \Sigma
\end{bmatrix} dw.
\end{equation}
Substituting \eqref{sigma_dagger sigma}, \eqref{sigma_perp}, \eqref{sigma_perp G} into \eqref{multiply Sigma hash}, we obtain the partitioned dynamics:
\begin{equation}\label{SDE partition}
\begin{bmatrix}
  \Sigma^{\dagger} \\
  \Sigma^{\perp}
\end{bmatrix} dx =
\begin{bmatrix}
  \Sigma^{\dagger} f \\
  \Sigma^{\perp} f
\end{bmatrix} dt +
\begin{bmatrix}
  \Sigma^{\dagger} G \\
  \mathbf{0}
\end{bmatrix} u\,dt +
\begin{bmatrix}
  I \\
  \mathbf{0}
\end{bmatrix} dw.
\end{equation}
Thus, under Assumption~\ref{Assum: linearity}, the system~\eqref{SDE} is decomposed into components that are directly and indirectly affected by the noise, as shown in~\eqref{SDE partition}.
}

The path integral control framework leverages the fact that the solution to the PDE~\eqref{linearized risk-minimizing HJB} admits a Feynman-Kac representation, as given in~\eqref{xi}. The optimal control input \( u^*(x,t;\eta) \), defined in~\eqref{optimal policy}, can then be obtained by differentiating~\eqref{xi} with respect to the state variable \( x \) \cite{williams2017model, theodorou2010generalized}. This yields:
{\color{black}\begin{equation}\label{path integral control}
u^*(x,t;\eta) \!=\! \lim_{s \searrow t}\frac{\mathcal{G}(x,t)\;\mathbb{E}_{x,t} \left[ \exp\left(-\frac{1}{\lambda} S \right) \int_t^s dw(\tau) \right]}{(s - t)\;\mathbb{E}_{x,t} \left[ \exp\left(-\frac{1}{\lambda} S \right) \right]},
\end{equation}
where the matrix \( \mathcal{G}(x,t) \) is defined as
\[
\mathcal{G}(x,t) := R^{-1} \left(\Sigma^{\dagger} G\right)^\top \left(\Sigma^{\dagger} G R^{-1} (\Sigma^{\dagger} G)^\top \right)^{-1},
\]}
and \( S \) denotes the trajectory cost-to-go for the system \( \hat{x}(t) \) starting from \( (x,t) \), given by:
\[
S = \phi\left( \hat{x}(\hat{t}_f); \eta \right) + \int_t^{\hat{t}_f} V(\hat{x}(t), t)\,dt.
\]
To evaluate expectations in (\ref{xi}) and (\ref{path integral control}) numerically, we can discretize the uncontrolled dynamics (\ref{uncontrolled SDE}) and use Monte Carlo sampling \cite{williams2017model}. Unlike FDM, the path integral framework solves PDE (\ref{linearized risk-minimizing HJB}) in the forward direction. It evaluates a solution locally without requiring knowledge of the solution nearby so that there is no need for a (global) discretization of the computational domain. For real-time control, Monte Carlo simulations can be performed in real-time in order to evaluate (\ref{path integral control}) for the current $(x,t)$. Similar to (\ref{xi}) and (\ref{path integral control}), the failure probability \eqref{approx pfail under P} can be numerically computed by Monte Carlo sampling using the importance sampling method (cf. Section \ref{Sec: Importance-Sampling-Based Approach}). The Monte Carlo simulations can be parallelized by using the Graphic Processing Units (GPUs) and thus the path integral approach is less susceptible
to the curse of dimensionality.\par
Now, we present the path-integral-based dual ascent algorithm to numerically solve the dual problem \eqref{eq: dual problem}. The algorithm for that is given in Algorithm 1. The algorithm starts by dealing with a special case $\eta=0$. We compute $P_\mathrm{fail}\left(x_0,t_0,u^*(\cdot;\eta=0)\right)$ by importance-sampling approach  presented in Section \ref{Sec: Importance-Sampling-Based Approach}. If $P_\mathrm{fail}\left(x_0,t_0,u^*(\cdot;\eta=0)\right)\leq \Delta$, we find the optimal policy $u^*(\cdot;\eta=0)$ by \eqref{path integral control} and return it. Otherwise, the algorithm chooses some initial $\eta$ and iteratively updates the dual variable $\eta \leftarrow \eta+\gamma(P_{\mathrm{fail}}(x_0,t_0,u^*(\cdot;\eta))-\Delta)$ with a learning rate of $\gamma$. Once $|P_\mathrm{fail}\left(x_0,t_0,u^*(\cdot;\eta)\right)-\Delta|$ is less than the error tolerance $\epsilon$, we return the policy $u^*(\cdot;\eta)$. Thus, using Algorithm \ref{alg:cap}, we can numerically solve the original chance-constrained problem \eqref{CC-SOC} online via real-time Monte-Carlo simulations.

\begin{algorithm}
\caption{Dual ascent via path integral approach}\label{alg:cap}
\begin{algorithmic}[1]
\Require Error tolerance $\epsilon>0$, learning rate $\gamma>0$, risk tolerance $\Delta\in(0,1)$
\State Set $\eta=0$ 

\State Compute the failure probability $P_\mathrm{fail}\left(x_0,t_0,u^*(\cdot;\eta=0)\right)$ using \eqref{approx pfail under P}. \label{sol of risk pde}
\If {$P_\mathrm{fail}\left(x_0,t_0,u^*(\cdot;\eta=0)\right)\leq\Delta$}
\State Find $u^*(\cdot;\eta=0)$ using \eqref{path integral control}\label{sol of linear HJB pde}
\State Return $u^*(\cdot;\eta=0)$ 
\EndIf

\State Choose initial $\eta>0$.
{\color{black}\While{$|P_\mathrm{fail}\left(x_0,t_0,u^*(\cdot;\eta)\right)-\Delta|\geq\epsilon$} 
\State Compute the failure probability $P_\mathrm{fail}\left(x_0,t_0,u^*(\cdot;\eta)\right)$ using \eqref{approx pfail under P} \label{sol of risk pde 2}
\State $\eta \leftarrow \eta + 
\gamma(P_\mathrm{fail}\left(x_0,t_0,u^*(\cdot;\eta)\right)-\Delta)$
\EndWhile 
\State Find $u^*(\cdot;\eta)$ using \eqref{path integral control} \label{sol of linear HJB pde 2}
\State Return $u^*(\cdot;\eta)$}
\end{algorithmic}
\end{algorithm}

\section{Simulation Results}\label{Sec: Simulation}

In this section, we demonstrate the effectiveness of the proposed control synthesis framework in addressing the chance-constrained problem defined in \eqref{CC-SOC}. In Section \ref{Sec: sim_2D}, we employ a 2D state-space velocity input model to solve a mobile robot navigation problem using Algorithm \ref{alg:cap}. The solution derived from the path-integral method is compared with that of FDM. In Section \ref{Sec: sim_5D}, we extend the analysis to a car model in 5D state-space. In this case, we solve the corresponding chance-constrained problems using Algorithm \ref{alg:cap}. Due to the high computational cost of applying FDM to models with a 5D state space, we rely exclusively on the path-integral method for this example.\par

{\color{black} We emphasize that all simulation examples considered in this section assume stable uncontrolled system dynamics. Extending the proposed path integral and HJB–Feynman-Kac–based framework to systems with inherently unstable uncontrolled dynamics would require additional care, particularly in the design of the control cost function $R$, to ensure stability of both the PDE formulation and the resulting numerical solution scheme \cite{bakshi2020stabilizing, bakshi2020schrodinger}.}

\subsection{Input Velocity Model}\label{Sec: sim_2D}
The problem is illustrated in Figure \ref{Fig. sample trajs 2D} where a particle robot wants to navigate in a 2D space from a given start position (shown by a yellow star) to the origin (shown by a magenta star), by avoiding the collisions with the red obstacle and the outer boundary.\par
Let the states of the system be ${p}_x$ and ${p}_y$, the positions along $x$ and $y$, respectively. The system dynamics are given by the following SDEs:
\begin{equation}\label{velocity input model}
\begin{split}
    d{p}_x={v}_xdt+\sigma d{w}_x,\quad d{p}_y={v}_ydt+\sigma d{w}_y,
\end{split}
\end{equation}
where ${v}_x$ and ${v}_y$ are velocities along $x$ and $y$ directions, respectively. Assume 
\begin{equation*}
 v_x = \overline{v}_x + \widetilde{v}_x, \qquad  v_y = \overline{v}_y + \widetilde{v}_y, 
\end{equation*}
where $\overline{v}_x$ and $\overline{v}_y$ are nominal velocities given by $\overline{v}_x = -k_xp_x $ and $\overline{v}_y = -k_yp_y$ for some constants $k_x$ and $k_y$. Hence, (\ref{velocity input model}) can be rewritten as
\begin{equation}\label{input velocity model}
\begin{split}
    d{p}_x&=-k_x{p}_xdt+\widetilde{{v}}_xdt+\sigma d{w}_x,\\ d{p}_y&=-k_y{p}_ydt+\widetilde{{v}}_ydt+\sigma d{w}_y.
\end{split}
\end{equation}
Now, the goal is to design an optimal control policy $u^* = [\widetilde{v}^*_x\;\widetilde{v}^*_y]^\top$ for the chance-constrained problem (\ref{CC-SOC}). The initial state is $x_0 = [ -0.3 \; 0.3 ]^\top$. We set $k_x=k_y=0.5$, $V\left({x}(t)\right) = {p}_x^2(t) + {p}_y^2(t)$, $\psi\left({x}(T)\right) = {p}_x^2(T) + {p}_y^2(T)$, $R=\begin{bmatrix}
    1 & 0\\
    0& 1
\end{bmatrix}$, $t_0=0$, $T=2$, and $\sigma^2=0.01$. {\color{black} Notice that here $\lambda=0.01$ satisfies Assumption \ref{Assum: linearity}.} 
{\color{black}In Algorithm \ref{alg:cap}, we set the error tolerance $\epsilon = 0.01$ and learning rate $\gamma=0.01$}. For FDM, the computational domain is discretized by a grid of $96\times 96$ points and the solver ode45 is used with a relative error tolerance equal to $10^{-3}$. {\color{black}In the path integral simulation, for Monte Carlo sampling, $10^5$ trajectories and a step size equal to $0.01$ are used.} 

\begin{figure}
    \centering
      \begin{tabular}{c c}
\!\!\!\!\!\!\!\!\includegraphics[scale=0.35]{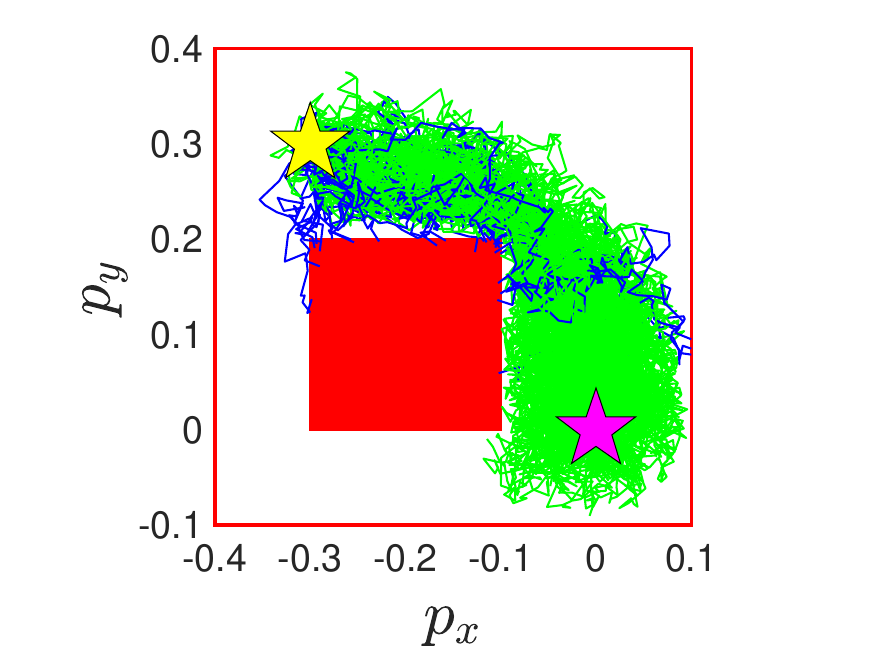} &\!\!\!\!\!\!\!\!\!\!\!\!\!\!\!\!\!\!\includegraphics[scale=0.35]{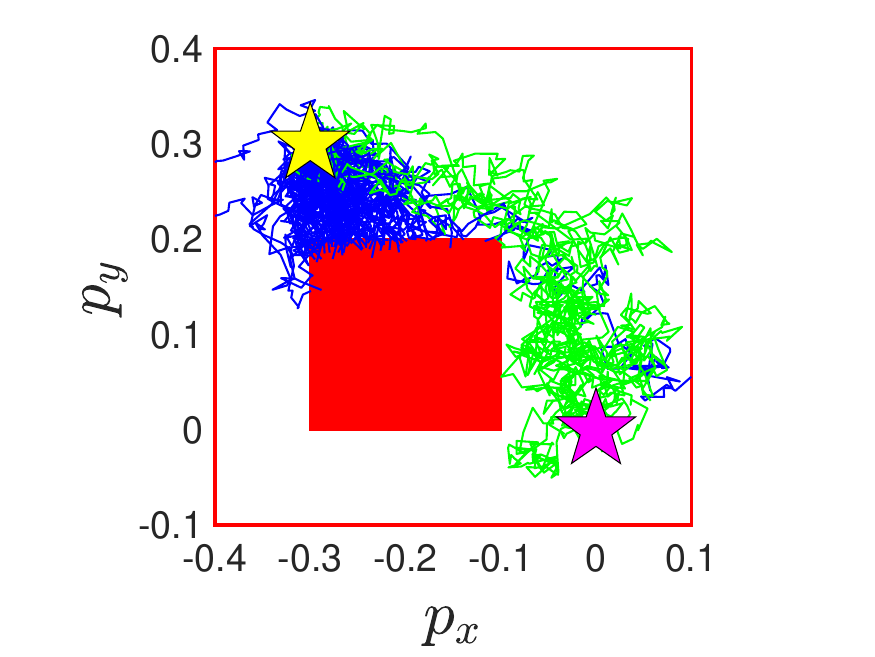} \\
      \!\!\!\!\!\!\!\!(a) $\Delta = 0.1$ & \!\!\!\!\!\!\!\!\!\!\!\!\!\!\!\!\!\! (b) $\Delta = 0.9$ \\
      \end{tabular}
        \caption{Robot navigation problem for the input velocity model. The start position is shown by a yellow star and the target position (the origin) by a magenta star. $100$ sample trajectories generated using optimal control policies for two values of $\Delta$ are shown. The trajectories are color-coded; blue paths collide with the obstacle or the outer boundary, while the green paths converge in the neighborhood of the magenta star.} 
        \label{Fig. sample trajs 2D}
\end{figure}

In Figure \ref{Fig. sample trajs 2D}, we plot $100$ sample trajectories generated using synthesized optimal policies for two values of $\Delta$. The trajectories are color-coded; the blue paths collide with the obstacle or the outer boundary, while the green paths converge in the neighborhood of the origin (the goal position). For a lower value of $\Delta$ the weight of blue paths is less and that of the green paths is more as compared to the higher value of $\Delta$. In other words, the failure probability for the lower value of $\Delta$ is less as compared to that of the higher value of $\Delta$.
\begin{figure}
    \centering
      \begin{tabular}{c c}
     \!\!\!\!\!\!\!\!\!\!\!\!\!\!\!\includegraphics[scale=0.125]{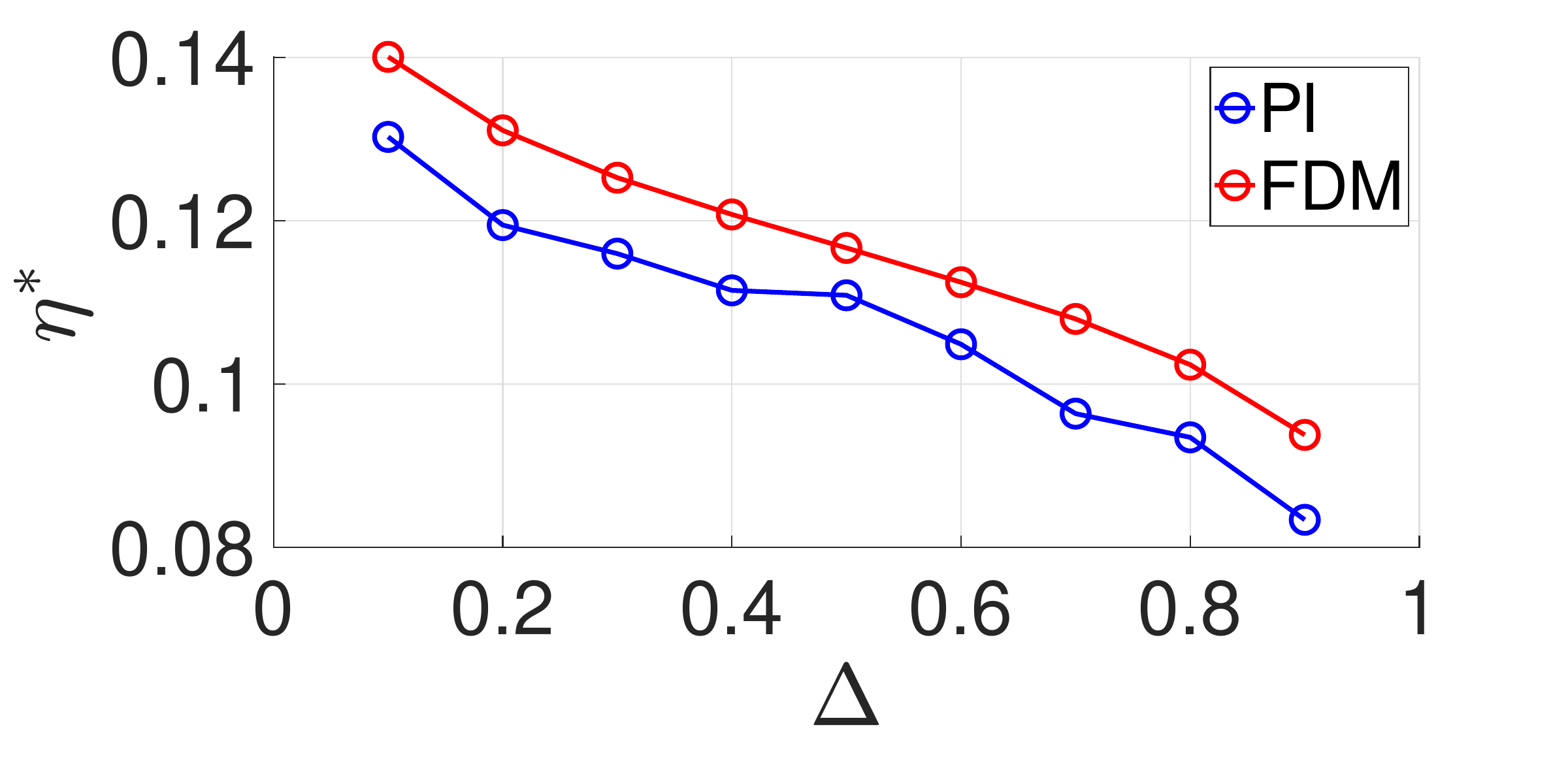} &\!\!\!\!\!\!\!\!\!\!\!\!\!\includegraphics[scale=0.125]{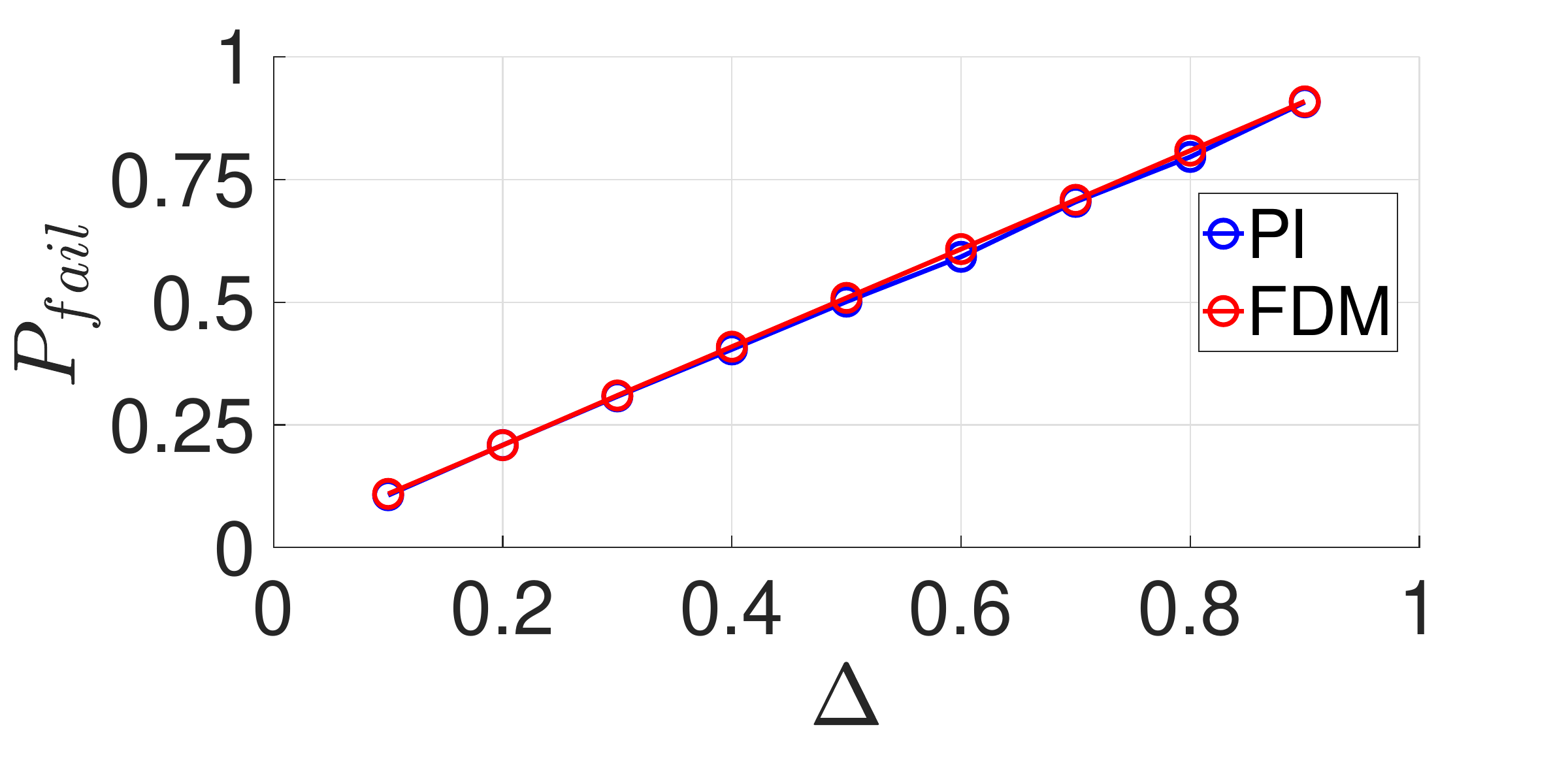} \\
      \!\!\!\!\!\!\!\!\!\!\!\!\!\!\!(a) $\eta^*$ vs $\Delta$ & \!\!\!\!\!\!\!\!\!\!\!\!\!(b) $P_{\mathrm{fail}}(x_0,t_0, u^*(\cdot\;;{\eta^*}))$ vs $\Delta$ \\
      \end{tabular}
        \caption{$\eta^*$ and $P_{\mathrm{fail}}(x_0,t_0, u^*(\cdot\;;{\eta^*}))$ vs $\Delta$ for input velocity model using path integral control and FDM.} 
        \label{Fig.eta pfail vs Delta(2D)}
\end{figure}

Figure \ref{Fig.eta pfail vs Delta(2D)} represents how the value of $\eta^*$ and $P_{\mathrm{fail}}(x_0,t_0, u^*(\cdot\;;{\eta^*}))$ change with respect to $\Delta$. The values obtained using path integral control and FDM are compared. As we expect the value of $\eta^*$ reduces and that of $P_{\mathrm{fail}}$ increases as $\Delta$ increases. 

\begin{figure*}[h]
    \centering
      \begin{tabular}{c c c c}
      \!\!\!\!\!\!\!\!\!\!\includegraphics[scale=0.3]{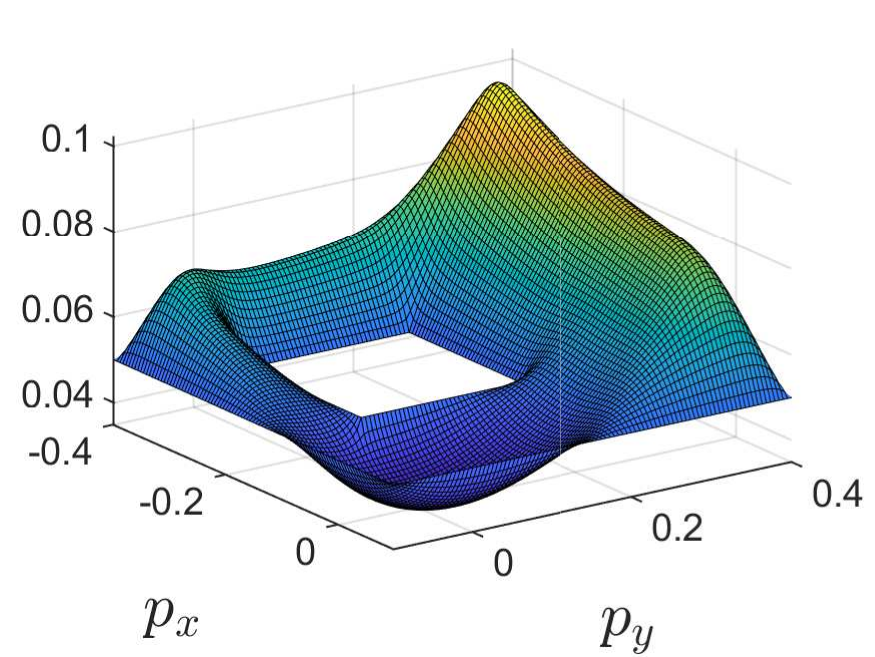} &\!\!\!\!\!\!\!\!\includegraphics[scale=0.3]{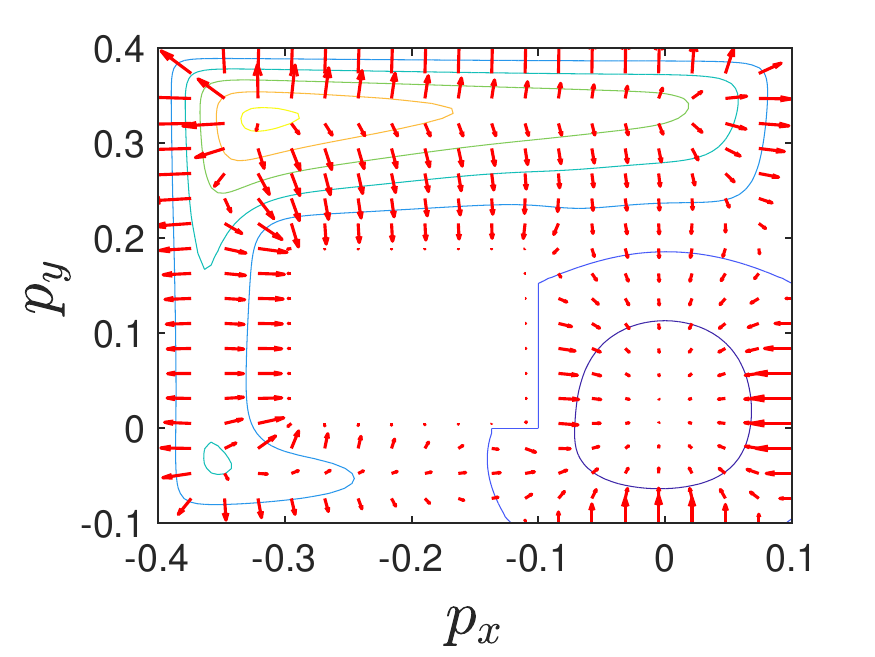} &\!\!\!\!\!\!\!\!\includegraphics[scale=0.3]{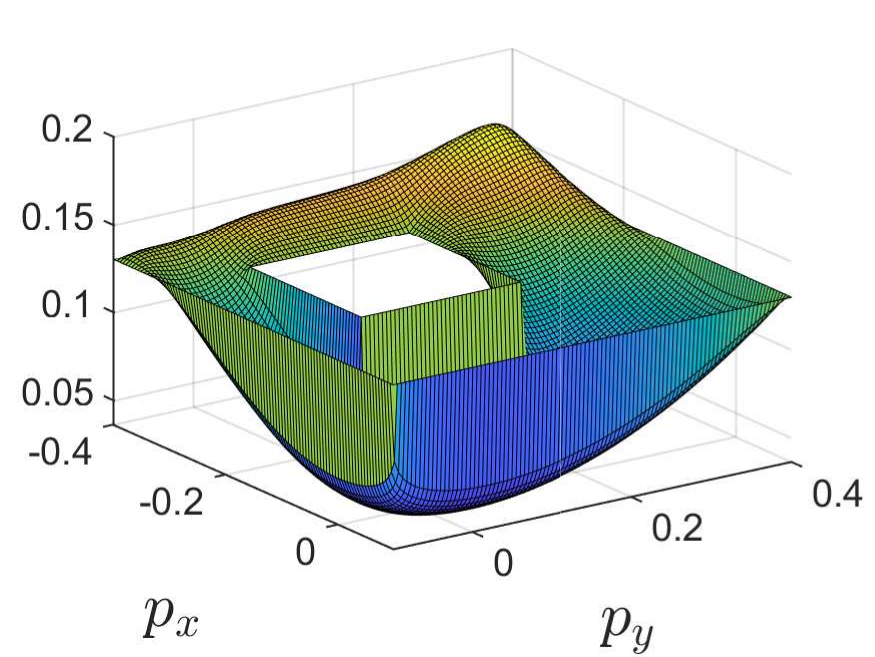} &\!\!\!\!\!\!\!\!\!\! \includegraphics[scale=0.3]{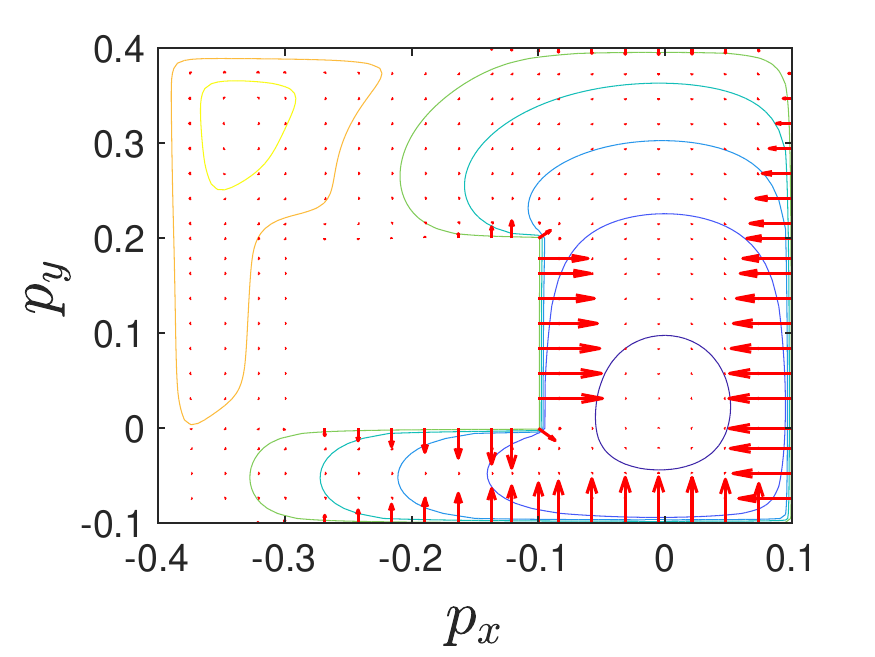}\\
      (a) $J(x,t_0; \eta)$ for $\eta = 0.05$ & (b) $u^*(x,t_0; \eta)$ for $\eta = 0.05$ & (c) $J(x,t_0; \eta)$ for $\eta = 0.13$ & (d) $u^*(x,t_0; \eta)$ for $\eta = 0.13$\\
      
        \!\!\!\!\!\!\!\!\!\!\includegraphics[scale=0.3]{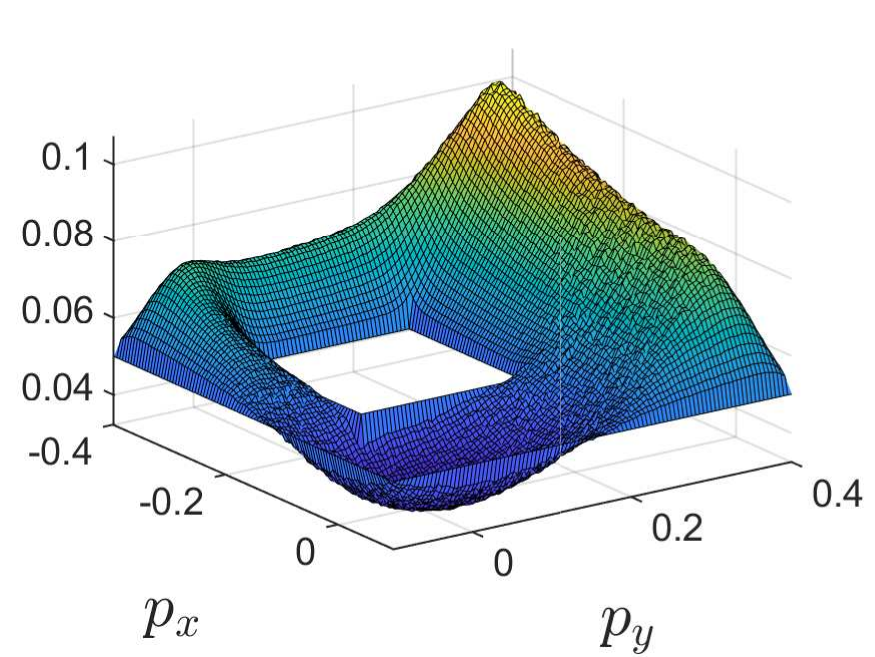} &\!\!\!\!\!\!\!\!\includegraphics[scale=0.3]{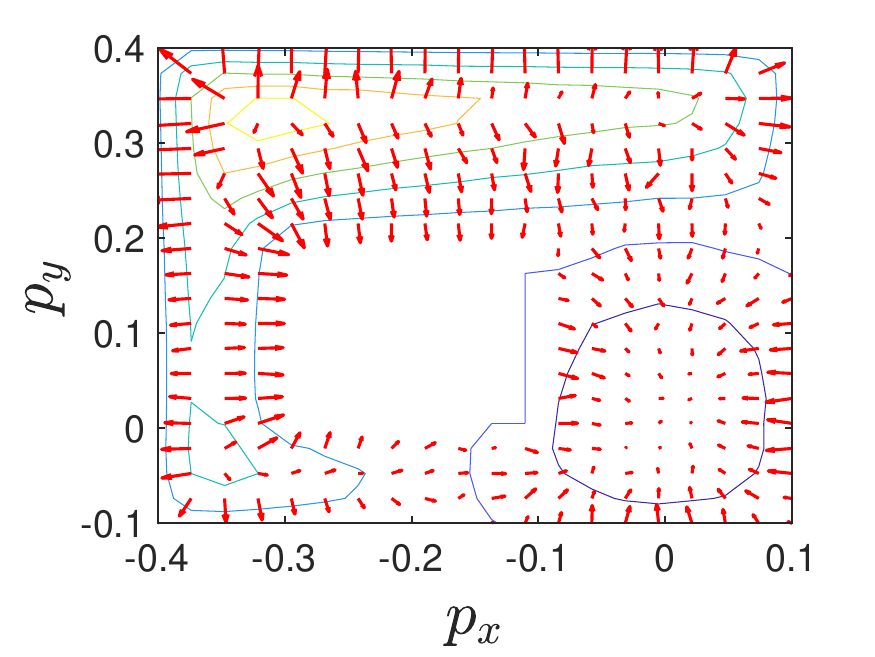}&\!\!\!\!\!\!\!\!\includegraphics[scale=0.3]{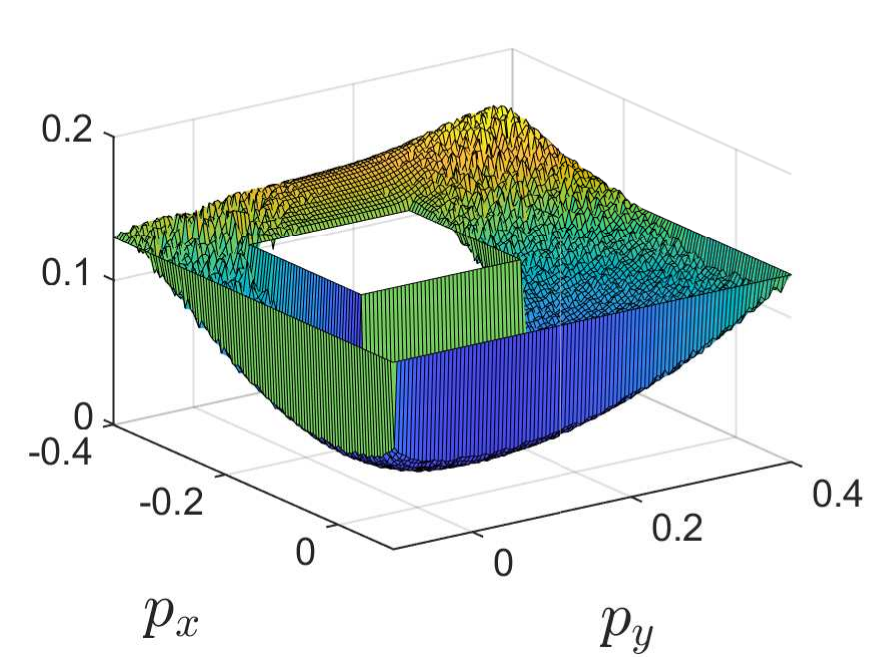} &\!\!\!\!\!\!\!\!\!\!\includegraphics[scale=0.3]{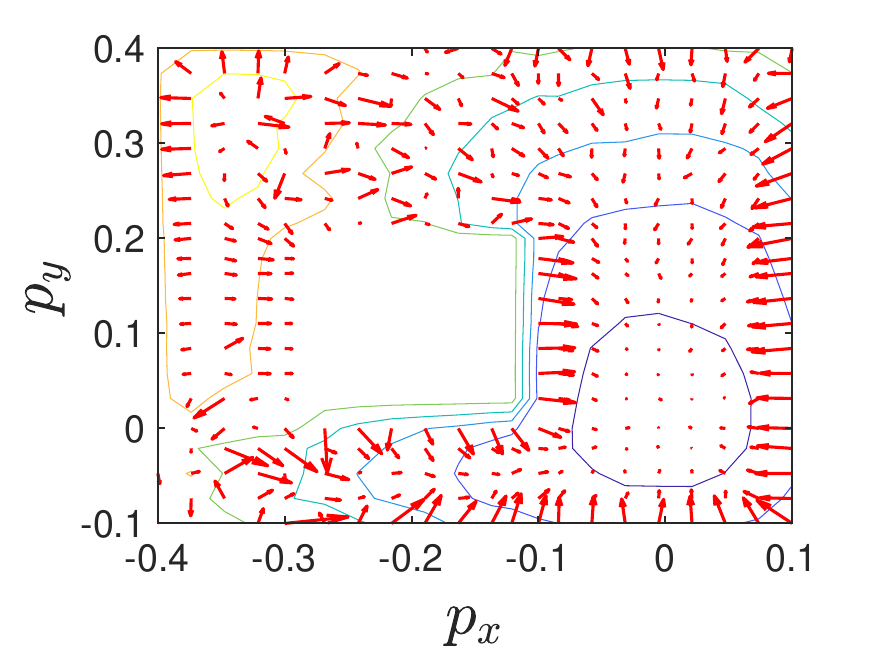}\\
        (e) $J(x,t_0;\eta)$ for $\eta = 0.05$ & (f) $u^*(x,t_0;\eta)$ for $\eta = 0.05$ & (g) $J(x,t_0; \eta)$ for $\eta = 0.13$ & (h) $u^*(x,t_0; \eta)$ for $\eta = 0.13$\\
      \end{tabular}
        \caption{Input velocity model: comparison of solutions $J(x,t_0)$ and $u^*(x,t_0)$ obtained from FDM (a-d) and path integral (e-h) for $\eta=0.05$ and $\eta=0.13$. The optimal control inputs $u^*(x,t_0)$ in (b, d, f, h) are plotted together with contours of $J(x, t_0)$.} 
        \label{Fig. surf plots FDM}
\end{figure*}
Figure \ref{Fig. surf plots FDM} shows the comparison of solutions $J(x,t_0)$ and $u^*(x,t_0)$ of the PDE \eqref{HJB PDE} obtained from FDM (top row) and path integral (bottom row) for $\eta=0.05$ and $\eta=0.13$. Since the path integral is a sampling-based approach its solutions are noisier compared to FDM, as expected. Notice that for $\eta=0.05$, due to a lower boundary value, the control inputs $u^*(x,t_0)$ push the robot towards the obstacle or the outer boundary from the most part of the safe region $\mathcal{X}_{s}$, except near the origin (the target position). Whereas, for $\eta=0.13$, aggressive inputs $u^*(x,t_0)$ are applied near the boundary to force the robot to go towards the goal position.

Figure \ref{Fig. risks} shows the colormaps of the failure probabilities of the synthesized optimal policies for $\Delta=0.1$ and $\Delta=0.9$ as functions of initial position. Notice that the region of higher failure probabilities is more for $\Delta = 0.9$ than that for the lower value of $\Delta = 0.1$.\par

    The factors that affect the computation speed of FDM include the grid size, the choice of an ODE solver and its error tolerances. Whereas the computation speed of path integral depends on the number of Monte Carlo samples. The computation time is evaluated on a machine with an Intel Core i7-9750H CPU clocked at 2.6 GHz. Both FDM and path integral approaches are implemented in MATLAB. Path integral simulations are run parallelly using MATLAB's parallel processing toolbox and the GPU Nvidia GeForce GTX 1650. {\color{black}Note that FDM solves the  HJB PDE (\ref{linearized risk-minimizing HJB}) and computes the optimal control inputs at all the state-time $(x,t)$ pairs in one go. Whereas the path integral controller computes the optimal control input at a specific state-time pair by running Monte Carlo simulations. For the given value of $\eta$, the average computation time for FDM to obtain the optimal control inputs across all state-time pairs is approximately $10.28$ seconds. In the path integral framework, the average computation time to compute the optimal control input at each time step is $0.04$ seconds.} Note that no special effort was made to optimize the algorithm for speed. We plan to reduce the computation time of the algorithm in future work.
    
\subsection{Car Model}\label{Sec: sim_5D}
The problem is illustrated in Figure \ref{Fig. sample trajs 5D}. A car wants to navigate in a 2D space from a given start position (shown by a yellow star) to the origin (shown by a magenta star), by avoiding the collisions with the red obstacles and the outer boundary. The states of the car model $\begin{bmatrix}
    {p}_x&
    {p}_y&
    {s}&
    {\theta}&
    {\phi}
    \end{bmatrix}^\top$ consists of its $x-y$ position $\begin{bmatrix}
    {p}_x&
    {p}_y
    \end{bmatrix}^\top$, the speed $s$, heading angle $\theta$ and the front wheel angle $\phi$. $L$ is the inter-axle distance. The system dynamics are given by the following SDE:
    
    \begin{equation} \label{car model}
\begin{aligned}
    \begin{bmatrix}
    d{p}_x\\d{p}_y\\d{s}\\d{\theta}\\d\phi
    \end{bmatrix}\!=&\!
    -k
    \begin{bmatrix}
    {p}_x\\
    {p}_y\\
    {s}\\
    {0}\\
    {0}
    \end{bmatrix}dt+
    \begin{bmatrix}
    {s}\cos{{\theta}}\\{s}\sin{{\theta}}\\0\\\frac{s\tan{\phi}}{L}
    \\0
    \end{bmatrix}\!dt\!\\
    &\!\!\!\!+\begin{bmatrix}
    0 & 0\\0 & 0\\1 & 0\\0 & 0\\0 & 1
    \end{bmatrix} \!
    \left(\begin{bmatrix}
    a\\
    \zeta
    \end{bmatrix}\!dt \!+\!
    \begin{bmatrix}
    \sigma & 0\\
    0 & \nu 
    \end{bmatrix}\!d{w}
    \right).
\end{aligned}
\end{equation}
    The control input $u=\begin{bmatrix} a & \zeta \end{bmatrix}^\top$ consists of acceleration $a$ and the front wheel angular rate $\zeta$. $\sigma$ and $\nu$ are the noise level parameters. In the simulation, we set $\sigma=\nu=0.07$, $k=0.2$, $t_0=0$, $T=10$, $L=0.05$, $x_0=\begin{bmatrix}
-0.4 &-0.4 & 0 &0& 0
\end{bmatrix}^\top$, $V({x}) = {p}_x^2 + {p}_y^2$, $\psi\left({x}(T)\right) = {p}_x^2(T) + {p}_y^2(T)$ and $R = \begin{bmatrix}
    1 & 0 \\ 0 & 1 
\end{bmatrix}$. {\color{black} Notice that here $\lambda=0.005$ satisfies Assumption \ref{Assum: linearity}.} We solve the chance-constrained problem \eqref{CC-SOC} via path integral approach using Algorithm \ref{alg:cap}. {\color{black} In Algorithm \ref{alg:cap}, we set the error tolerance $\epsilon=0.01$ and the learning rate $\gamma = 0.001$.} Since the state-space of the system is of high order we do not use FDM to solve this problem. For Monte Carlo sampling, $10^5$ trajectories and a step size equal to $0.01$ are used. \par
In Figure \ref{Fig. sample trajs 5D}, we plot $100$ sample trajectories generated using synthesized optimal policies for two values of $\Delta$. The trajectories are color-coded similar to the problem in Section \ref{Sec: sim_2D}. We can observe that the failure probability for the lower value of $\Delta$ is less as compared to that of the higher value of $\Delta$.
Figure \ref{Fig.eta pfail vs Delta(5D)} represents how the value of $\eta^*$ and $P_{\mathrm{fail}}(x_0,t_0, u^*(\cdot\;;{\eta^*}))$ change with respect to $\Delta$. As expected the value of $\eta^*$ reduces and that of $P_{\mathrm{fail}}$ increases as $\Delta$ increases.\par

The algorithm is implemented on the same machine mentioned in Section \ref{Sec: sim_2D}. It is run in MATLAB, and the  Monte Carlo simulations are run parallelly using MATLAB’s parallel processing toolbox. 
 {\color{black} For the given value of $\eta$, the average computation time required to compute the optimal control input at every time step is $0.48$ seconds.} Note that no special effort was made to optimize the algorithm for speed. We plan to reduce the computation time of the algorithm in future work.

\begin{figure}
    \centering
      \begin{tabular}{c c}
     \!\!\!\!\!\!\!\!\!\includegraphics[scale=0.33]{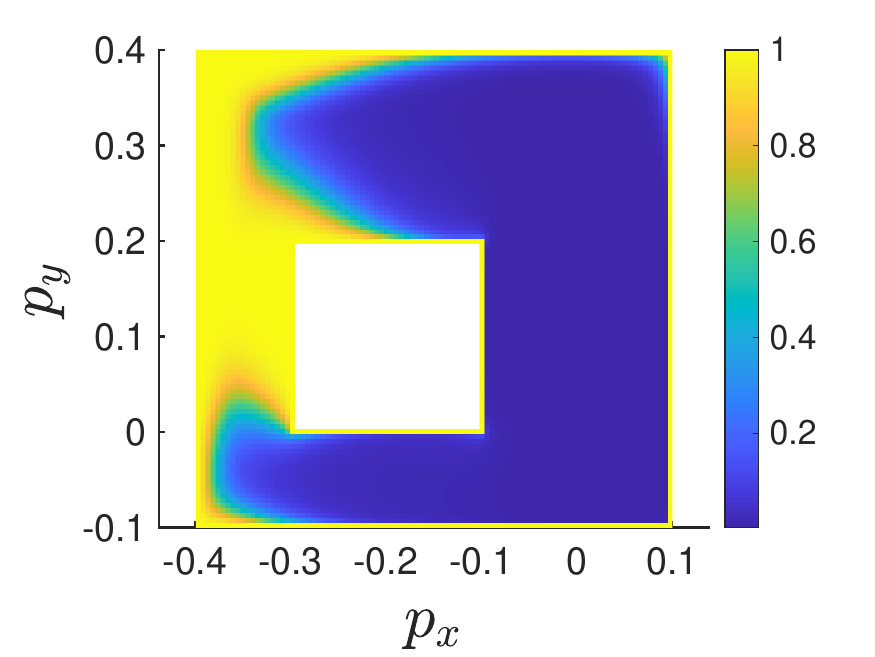} &\!\!\!\!\!\!\!\!\!\!\!\!\!\!\!\!\!\!\!\!\!\!\!\includegraphics[scale=0.33]{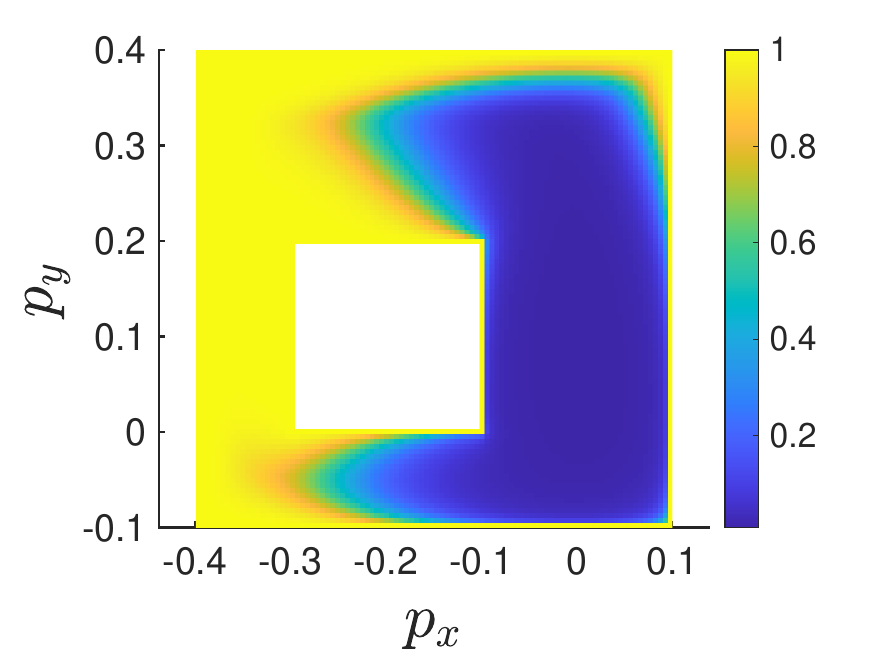} \\
      (a) $\Delta = 0.1$ & \!\!\!\!\!\!\!\!(b) $\Delta = 0.9$
      \end{tabular}
        \caption{Colormaps of the failure probabilities of the optimal policies synthesized for the input velocity model as functions of initial position.} 
        \label{Fig. risks}
\end{figure}

\begin{figure}
    \centering
      \begin{tabular}{c c}
     \!\!\!\!\!\!\!\!\!\!\!\!\!\!\!\!\includegraphics[scale=0.4]{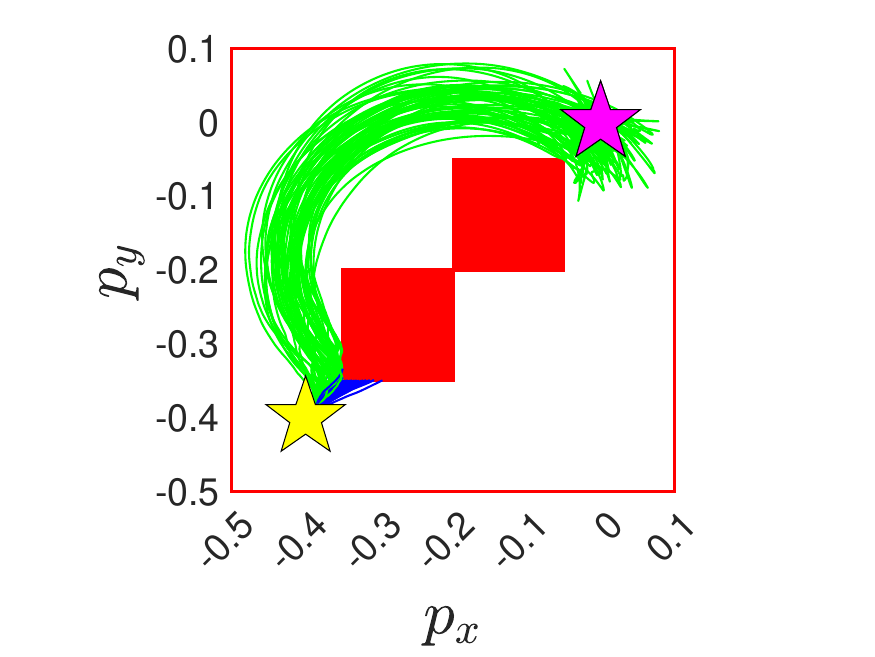} &\!\!\!\!\!\!\!\!\!\!\!\!\!\!\!\!\!\!\!\!\!\!\!\!\!\!\!\!\includegraphics[scale=0.4]{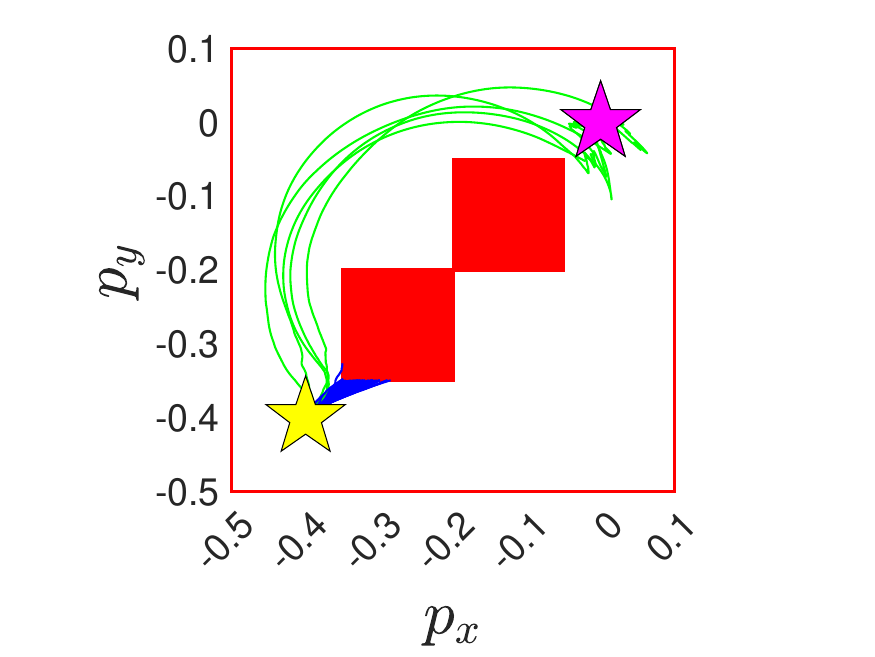} \\
      \!\!\!\!\!\!\!\!\!\!\!\!\!\!\!\! (a) $\Delta = 0.1$ & \!\!\!\!\!\!\!\!\!\!\!\!\!\!\!\!\!\!\!\!\!\!\!\!\!\!\!\! (b) $\Delta = 0.8$\\
      \end{tabular}
        \caption{Robot navigation problem for a car model. The start position is shown by a yellow star and the target position (the origin) by a magenta star. $100$ sample trajectories generated using optimal control policies for two values of $\Delta$ are shown. The trajectories are color-coded; blue paths collide with the obstacle or the outer boundary, while the green paths converge in the neighborhood of the magenta star.} 
        \label{Fig. sample trajs 5D}
\end{figure}

\begin{figure}
    \centering
      \begin{tabular}{c c}
     \!\!\!\!\!\!\!\!\!\includegraphics[scale=0.095]{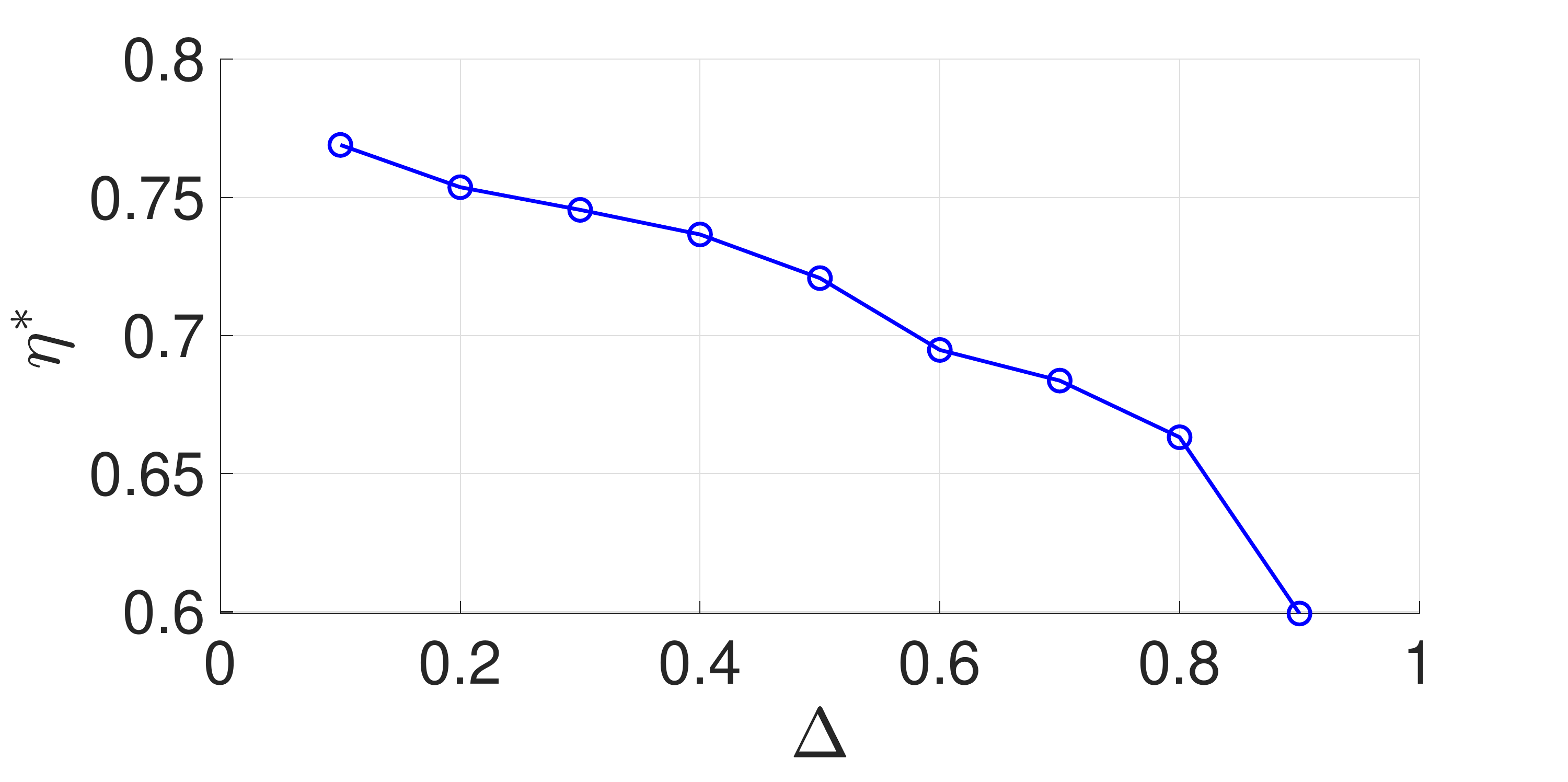} &\!\!\!\!\!\!\!\!\!\!\!\!\!\includegraphics[scale=0.095]{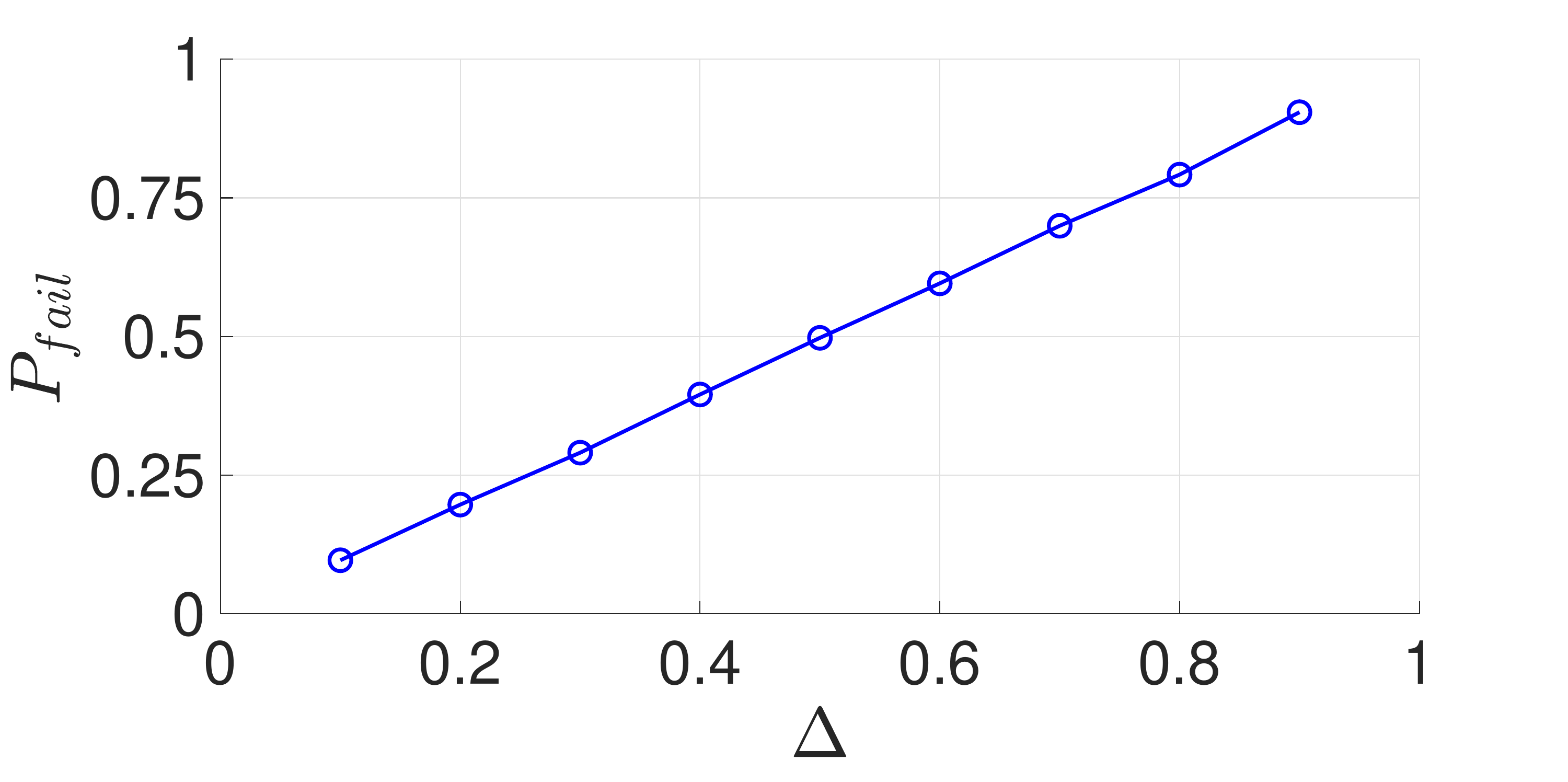} \\
      (a) $\eta^*$ vs $\Delta$ & \!\!\!\!\!\!\!\!\!\!\!\!\!(b) $P_{\mathrm{fail}}(x_0,t_0, u^*(\cdot\;;{\eta^*}))$ vs $\Delta$ \\
      \end{tabular}
        \caption{$\eta^*$ and $P_{\mathrm{fail}}(x_0,t_0, u^*(\cdot\;;{\eta^*}))$ vs $\Delta$ for a car model using path integral control.} 
        \label{Fig.eta pfail vs Delta(5D)}
\end{figure}
\section{Conclusion}
In this work, we numerically solved a continuous-time continuous-space chance-constrained stochastic optimal control (SOC) problem where the probability of failure to satisfy given state constraints is explicitly bounded. We formulated a dual SOC problem and evaluated the dual objective function by minimizing the Lagrangian via the Hamilton-Jacobi-Bellman (HJB) partial differential equation (PDE). Under a certain assumption on the system dynamics and cost function, it was shown that a strong duality holds between the primal problem and the dual problem. We proposed a novel path-integral-based dual ascent algorithm to numerically solve the dual problem. This allowed us to solve the original chance-constrained problem online via open-loop samples of system trajectories. We presented three simulation studies using an input velocity model (2D state-space system), unicycle model (4D state-space system), and car model (5D state-space system) to demonstrate our control synthesis framework. For the input velocity model we compared the solution obtained using path integral control with the finite difference method.\par

In order to prove the strong duality between the primal chance-constrained problem \eqref{CC-SOC} and its dual \eqref{eq: dual problem}, we required Assumption \ref{assum: continuity of Pfail}, where we assumed continuity of the function $\eta\mapsto P_{\mathrm{fail}}(x_0,t_0, u^*(x,t;\eta)):[0, \infty)\rightarrow[0,1]$. In the future, we plan to conduct a formal analysis which will prove that the Assumption \ref{assum: continuity of Pfail} holds true under mild conditions. We also plan to reduce the computation time of the proposed algorithm. Apart from that we plan to conduct the sample complexity analysis of the path integral control in order to investigate how the accuracy of Monte Carlo sampling affects the solution of chance-constrained SOC problems. Some preliminary results on this topic are presented in \cite{patil2024discrete}. In order to achieve a strong duality between the primal chance-constrained problem and its dual, in this paper, we require a certain assumption on the system dynamics and the cost function (Assumption \ref{Assum: linearity}). This assumption restricts the class of applicable system models and cost functions. In future work, we plan to find alternatives in order to get rid of this restrictive assumption (one such solution is provided in \cite{satoh2016iterative}). Another topic of future investigation is chance-constrained stochastic games. 

{\appendix}

\subsection{Nonconvex Optimization and Strong Duality}\label{appe: Nonconvex Optimization and Strong Duality}
Let $\mathcal{X}$ be a vector space. Suppose $f_0$ and $f_1$ are not necessarily convex, real-valued functions defined on $\mathcal{X}$. 
The domains of the functions $f_0$ and $f_1$ are denoted by $\text{dom}(f_0)$ and $\text{dom}(f_1)$, respectively.
Consider the following optimization problem with a single inequality constraint:
\begin{subequations}
\label{eq:primal_prob}
\begin{align}
\min_{x\in \mathcal{X}} &\quad f_0(x) \\
&\quad f_1(x)\leq 0.
\end{align}
\end{subequations}
We assume $\text{dom}(f_0)\cap\text{dom}(f_1)$ is nonempty, and $f_0(x)>-\infty \;\; \forall x \in \text{dom}(f_0)\cap\text{dom}(f_1)$.

Let $\eta\geq 0$ be the dual variable. Define the dual function $g:[0,\infty)\rightarrow \bar{\mathbb{R}}$ by
\begin{equation}
\label{eq:def_dual_func}
g(\eta)=\inf_{x\in\mathcal{X}} \; f_0(x)+\eta f_1(x).
\end{equation}
Since \eqref{eq:def_dual_func} is a pointwise infimum of a family of affine functions, it is concave. Moreover, since affine functions are upper semicontinuous, $g(\eta)$ is also upper semicontinuous.

\begin{assumption}
\label{asmp:strict_feasible}
    Problem \eqref{eq:primal_prob} is strictly feasible. That is, there exists $x_0\in \text{dom}(f_0)$ such that $f_1(x_0)<0$.
\end{assumption}

\begin{lemma}
\label{lem:existence_dual_sol}
    Under Assumption~\ref{asmp:strict_feasible}, there exists a dual optimal solution $\eta^*$ such that $0\leq \eta^*<\infty$ such that $g(\eta^*)=\sup_{\eta \geq 0} g(\eta)$.
\end{lemma}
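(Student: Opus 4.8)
The plan is to exploit the two structural properties of $g$ already established---concavity and upper semicontinuity---together with Assumption \ref{asmp:strict_feasible}, in order to reduce the maximization of $g$ over the unbounded ray $[0,\infty)$ to a maximization over a compact interval, on which a maximizer is guaranteed by the Weierstrass extreme value theorem for upper semicontinuous functions.

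First I would use strict feasibility to control the behavior of $g$ at infinity. Let $x_0\in\text{dom}(f_0)$ be the strictly feasible point with $f_1(x_0)<0$ supplied by Assumption \ref{asmp:strict_feasible}. Evaluating the infimum in \eqref{eq:def_dual_func} at this particular $x_0$ yields, for every $\eta\geq 0$, the pointwise bound $g(\eta)\leq f_0(x_0)+\eta f_1(x_0)$. Since $f_1(x_0)<0$, the right-hand side is strictly decreasing in $\eta$ and tends to $-\infty$ as $\eta\to\infty$; in particular $g(\eta)\leq f_0(x_0)<\infty$ for all $\eta\geq 0$, so the dual value $d^*:=\sup_{\eta\geq 0}g(\eta)$ is bounded above by $f_0(x_0)$ and hence finite (the degenerate case $g\equiv-\infty$ makes the claim trivially true, since then every $\eta^*$ attains the supremum).

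Next I would carry out the compactness reduction. Fix any $\alpha$ with $\alpha<d^*$ and consider the superlevel set $S_\alpha=\{\eta\geq 0:g(\eta)\geq\alpha\}$. Upper semicontinuity of $g$ makes $S_\alpha$ closed, while the decay estimate of the previous step provides some finite $M$ with $g(\eta)<\alpha$ for all $\eta>M$, so that $S_\alpha\subseteq[0,M]$ is bounded; thus $S_\alpha$ is nonempty and compact. An upper semicontinuous function attains its maximum on a nonempty compact set, so $g$ attains its maximum over $S_\alpha$ at some $\eta^*\in[0,M]$. Because every $\eta\notin S_\alpha$ satisfies $g(\eta)<\alpha\leq g(\eta^*)$, this $\eta^*$ maximizes $g$ over all of $[0,\infty)$, giving $0\leq\eta^*<\infty$ and $g(\eta^*)=d^*$, as required.

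The main obstacle is establishing this coercivity at infinity: concavity and upper semicontinuity \emph{alone} do not guarantee a maximizer on the noncompact domain $[0,\infty)$, as a nondecreasing concave function shows. The essential role of Assumption \ref{asmp:strict_feasible} is precisely to force $g(\eta)\to-\infty$, which is what permits the reduction to a compact interval. Some care is also needed in handling the effective domain of $g$ (the points where $g=-\infty$, which pose no difficulty since superlevel sets for finite $\alpha$ exclude them) and the degenerate possibility $d^*=-\infty$; both are dispatched by the observations above.
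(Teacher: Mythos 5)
Your proposal is correct and follows essentially the same route as the paper's own proof: both use the strictly feasible point to get the affine upper bound $g(\eta)\leq f_0(x_0)+\eta f_1(x_0)\to-\infty$, reduce the problem to a nonempty compact superlevel set of $g$ (the paper fixes the level $\gamma=g(0)$, you take any $\alpha<\sup_{\eta\geq0}g(\eta)$), and then invoke the Weierstrass theorem for upper semicontinuous functions. Your explicit handling of the degenerate cases is a minor additional tidiness, not a different argument.
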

\begin{proof}
Since $g(\eta)$ is upper semicontinuous, by Weierstrass' theorem \cite[Proposition A.8]{bertsekas1999nonlinear}, it is sufficient to prove that the function $-g(\eta)$ is coersive. Specifically, it is sufficient to show that there exists a scalar $\gamma$ such that the set 
\[
\eta(\gamma):=\{\eta \geq 0 \mid g(\eta) \geq \gamma\}
\]
is nonempty and compact.

We will show that the choice $\gamma:=\inf_{x\in\mathcal{X}} f_0(x)$ will make $\eta(\gamma)$ nonempty and bounded. To see that $\eta(\gamma)$ is nonempty, notice that 
\[
g(0)=\inf_{x\in\mathcal{X}} f_0(x)=\gamma
\]
and thus $0\in \eta(\gamma)$. 

To prove $\eta(\gamma)$ is bounded, we need to show that $g(\eta)\searrow -\infty$ as $\eta \rightarrow +\infty$. Let $x_0$ be a strictly feasible solution, i.e., $f_1(x_0)<0$. We have
\begin{align*}
g(\eta)&=\inf_x \; f_0(x)+\eta f_1(x) \\
&\leq f_0(x_0)+\eta \underbrace{f_1(x_0)}_{<0}\rightarrow -\infty 
\end{align*}
as $\eta \rightarrow +\infty$. This completes the proof.
\end{proof}
To proceed further, we need an additional set of assumptions.

\begin{assumption}
\label{asmp:continuity}
    \begin{itemize}
    \item[(i)] For each $\eta\geq 0$, the set
    \[
    X(\eta):=\argmin_{x\in\mathcal{X}} \; f_0(x)+\eta f_1(x)
    \]
    is nonempty.
    \item[(ii)] The function $\eta \mapsto f_1(X(\eta)): [0,\infty)\rightarrow \mathbb{R}$ is well-defined. That is, if $x_1\in X(\eta)$ and $x_2\in X(\eta)$, then $f_1(x_1)=f_1(x_2)$.
    \item[(iii)] The function $f_1(X(\cdot))$ is continuous.
    \end{itemize}
\end{assumption}
{\color{black}Assumption \ref{asmp:continuity} serves as part of the constraint qualification required to establish strong duality. Together with Assumption 4, Assumption 5 ensures the technical conditions necessary to prove complementary slackness in Lemma \ref{lem:comp_slackness}.}

\begin{lemma}
\label{lem:comp_slackness}
Suppose Assumption~\ref{asmp:strict_feasible} holds so that a dual optimal solution $0\leq \eta^*<+\infty$ exists (c.f., Lemma~\ref{lem:existence_dual_sol}). {\color{black}In addition, suppose Assumption \ref{asmp:continuity} also holds true,} then the following statements hold:
\begin{itemize}
\item[(i)] If $\eta^*=0$, then $f_1(X(\eta^*))\leq 0$.
\item[(ii)] If $\eta^*>0$, then $f_1(X(\eta^*))= 0$.
\end{itemize}
\end{lemma}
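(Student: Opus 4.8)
The plan is to exploit the concavity of the dual function $g$ together with the continuity hypothesis (Assumption~\ref{asmp:continuity}(iii)) to show that $g$ is in fact differentiable, with derivative exactly $f_1(X(\eta))$; the two statements then follow from the first-order optimality conditions for maximizing a concave function over $[0,\infty)$.

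First I would establish that, for every $\eta\geq 0$, the scalar $f_1(X(\eta))$ is a supergradient of $g$ at $\eta$. Picking any minimizer $x_\eta\in X(\eta)$, which exists by Assumption~\ref{asmp:continuity}(i), we have $g(\eta)=f_0(x_\eta)+\eta f_1(x_\eta)$, so that for any $\eta'\geq 0$,
\[
g(\eta')=\inf_{x}\big(f_0(x)+\eta' f_1(x)\big)\leq f_0(x_\eta)+\eta' f_1(x_\eta)=g(\eta)+(\eta'-\eta)\,f_1(x_\eta).
\]
This is precisely the supergradient inequality. By Assumption~\ref{asmp:continuity}(ii) the value $f_1(x_\eta)=f_1(X(\eta))$ does not depend on the chosen minimizer, so $h(\eta):=f_1(X(\eta))$ is a well-defined selection of supergradients of $g$.

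Next I would upgrade this to genuine differentiability. Because $g$ is concave, its superdifferential at each point $\eta_0$ is the interval $[g'_+(\eta_0),\,g'_-(\eta_0)]$ with $g'_+(\eta_0)\leq g'_-(\eta_0)$, and supergradients are non-increasing in $\eta$. Combining $h(\eta)\in[g'_+(\eta),g'_-(\eta)]$ with this monotonicity yields $h(\eta)\geq g'_-(\eta_0)$ for $\eta<\eta_0$ and $h(\eta)\leq g'_+(\eta_0)$ for $\eta>\eta_0$; letting $\eta\to\eta_0$ and invoking the continuity of $h$ (Assumption~\ref{asmp:continuity}(iii)) forces $g'_-(\eta_0)\leq h(\eta_0)\leq g'_+(\eta_0)$, which together with concavity gives $g'_+(\eta_0)=g'_-(\eta_0)=h(\eta_0)$. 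Hence $g$ is differentiable on $[0,\infty)$ with $g'(\eta)=f_1(X(\eta))$. I expect this monotone-squeeze step to be the main technical obstacle, since it is where continuity is actually used and where the derivative is pinned down inside the superdifferential interval.

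Finally I would apply the optimality conditions to the dual optimum $\eta^*$ supplied by Lemma~\ref{lem:existence_dual_sol}. If $\eta^*>0$ is interior, maximality of the concave differentiable $g$ gives $g'(\eta^*)=0$, i.e.\ $f_1(X(\eta^*))=0$, proving statement~(ii). If $\eta^*=0$ is the left endpoint, maximality only rules out an increase to the right, so the one-sided condition $g'(0)\leq 0$ holds, i.e.\ $f_1(X(0))\leq 0$, proving statement~(i).
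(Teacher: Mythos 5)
Your proof is correct, but it takes a genuinely different route from the paper's. The paper argues by direct contradiction: assuming the complementary-slackness conclusion fails, it perturbs the dual optimum to $\eta_\epsilon=\eta^*\pm\epsilon$, uses continuity of $f_1(X(\cdot))$ (Assumption~\ref{asmp:continuity}(iii)) only to preserve the sign of $f_1(X(\eta_\epsilon))$ for small $\epsilon$, and then exhibits the elementary chain $g(\eta_\epsilon)=f_0(X(\eta_\epsilon))+\eta_\epsilon f_1(X(\eta_\epsilon))>f_0(X(\eta_\epsilon))+\eta^* f_1(X(\eta_\epsilon))\geq \min_{x\in\mathcal{X}}\{f_0(x)+\eta^* f_1(x)\}=g(\eta^*)$, contradicting dual optimality of $\eta^*$; no convex-analytic machinery is invoked. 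You instead prove the structural facts that $f_1(X(\eta))$ is a supergradient of $g$ at every $\eta$, and that continuity of this supergradient selection, combined with monotonicity of supergradients of a concave function, forces $g$ to be differentiable with $g'(\eta)=f_1(X(\eta))$; complementary slackness then follows from first-order optimality at $\eta^*$ (stationarity at an interior maximum, the one-sided condition $g'_+(0)\leq 0$ at the endpoint). Your monotone-squeeze step is sound, including at $\eta_0=0$, where the same argument pins down $g'_+(0)=f_1(X(0))$. The trade-off: the paper's argument is shorter and entirely self-contained, while yours leans on standard facts about superdifferentials but proves strictly more along the way --- differentiability of the dual function and the envelope formula $g'(\eta)=f_1(X(\eta))$ --- which, translated back to the control problem (where $f_1$ is $P_{\mathrm{fail}}-\Delta$), is exactly what justifies interpreting the update in Algorithm~\ref{alg:cap} as genuine gradient ascent on $g$ rather than merely supergradient ascent.
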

\begin{proof}
(i) Suppose $\eta^*=0$ but $f_1(X(\eta^*))>0$. Set $\eta_\epsilon=\epsilon>0$.
By continuity of $f_1(X(\cdot))$, we have $f_1(X(\eta_\epsilon))>0$ for a sufficiently small $\epsilon$. Notice that
\begin{align*}
g(\eta_\epsilon)&=f_0(X(\eta_\epsilon))+\underbrace{\eta_\epsilon f_1(X(\eta_\epsilon))}_{>0} \\
&>f_0(X(\eta_\epsilon)) \\
&\geq \min_{x\in\mathcal{X}} \; f_0(x) \\
&=\min_{x\in\mathcal{X}}  \; f_0(x)+\eta^* f_1(x) \\
&=g(\eta^*).
\end{align*}
However, this chain of inequalities contradicts to the fact that $\eta^*$ is the dual optimal solution (i.e., $\eta^*$ maximizes $g(\eta)$).

(ii) Suppose $\eta^*>0$ and $f_1(X(\eta^*))>0$.
Set $\eta_\epsilon=\eta^*+\epsilon$ where $\epsilon>0$ is sufficiently small.
By continuity of $f_1(X(\cdot))$, we have $f_1(X(\eta_\epsilon))>0$. Notice that 
\begin{align*}
g(\eta_\epsilon)&=f_0(X(\eta_\epsilon))+\eta_\epsilon f_1(X(\eta_\epsilon))\\
&=f_0(X(\eta_\epsilon))+\eta^* f_1(X(\eta_\epsilon))+\underbrace{\epsilon f_1(X(\eta_\epsilon))}_{>0} \\
&>f_0(X(\eta_\epsilon))+\eta^* f_1(X(\eta_\epsilon)) \\
&\geq \min_{x\in\mathcal{X}}  \; f_0(x)+\eta^* f_1(x) \\
&=g(\eta^*).
\end{align*}
This contradicts to the fact that $\eta^*$ maximizes $g(\eta)$.

Similarly, suppose $\eta^*>0$ and $f_1(X(\eta^*))<0$.
Set $\eta_\epsilon=\eta^*-\epsilon$ and choose $\epsilon>0$ sufficiently small so that $\eta_\epsilon>0$ and $f_1(X(\eta_\epsilon))<0$. We have
\begin{align*}
g(\eta_\epsilon)&=f_0(X(\eta_\epsilon))+\eta_\epsilon f_1(X(\eta_\epsilon))\\
&=f_0(X(\eta_\epsilon))+\eta^* f_1(X(\eta_\epsilon))-\epsilon \underbrace{f_1(X(\eta_\epsilon))}_{<0} \\
&>f_0(X(\eta_\epsilon))+\eta^* f_1(X(\eta_\epsilon)) \\
&\geq \min_{x\in\mathcal{X}}  \; f_0(x)+\eta^* f_1(x) \\
&=g(\eta^*).
\end{align*}
Again, this contradicts to the fact that $\eta^*$ maximizes $g(\eta)$. Therefore, if $\eta^*>0$ then $f_1(X(\eta^*))=0$ must hold.
\end{proof}
The following is the main result of this section.
\begin{theorem}\label{theorem: strong duality2}
    Consider problem \eqref{eq:primal_prob} and suppose Assumptions~\ref{asmp:strict_feasible} and \ref{asmp:continuity} hold. Then, there exists a dual optimal solution $0\leq \eta^*<+\infty$ that maximizes $g(\eta)$.
    Moreover, the set \[
    X(\eta^*):=\argmin_{x\in\mathcal{X}} \; f_0(x)+\eta^* f_1(x)
    \]
    is nonempty, and any element $x^*\in X(\eta^*)$ is a primal optimal solution such that $f_0(x^*)=g(\eta^*)$. i.e., the duality gap is zero. Consequently, a minimizer of $f_0(x)+\eta^*f_1(x)$ is an optimal solution of \eqref{eq:primal_prob}.
\end{theorem}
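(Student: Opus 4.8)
The plan is to assemble the theorem from the two lemmas already established in this appendix, using weak duality to force the gap to zero. First I would invoke Lemma~\ref{lem:existence_dual_sol}, which under Assumption~\ref{asmp:strict_feasible} guarantees a dual maximizer $\eta^*$ with $0\le\eta^*<\infty$; this settles the first claim. Nonemptiness of $X(\eta^*)$ is immediate from part (i) of Assumption~\ref{asmp:continuity}, so I may fix an arbitrary $x^*\in X(\eta^*)$. By the defining property of $X(\eta^*)$ and of $g$ in \eqref{eq:def_dual_func}, this $x^*$ attains the dual value, i.e., $g(\eta^*)=f_0(x^*)+\eta^* f_1(x^*)$.

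The key step is complementary slackness. Applying Lemma~\ref{lem:comp_slackness}: if $\eta^*=0$ then $f_1(x^*)\le 0$, and if $\eta^*>0$ then $f_1(x^*)=0$. In either case two facts hold simultaneously: $x^*$ is feasible for \eqref{eq:primal_prob} because $f_1(x^*)\le 0$, and the product $\eta^* f_1(x^*)=0$ vanishes. Substituting the latter into the expression for $g(\eta^*)$ collapses it to $g(\eta^*)=f_0(x^*)$.

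It then remains to close the gap. Writing the primal value as $p^*:=\inf\{f_0(x): f_1(x)\le 0\}$, weak duality holds because for every feasible $x$ and every $\eta\ge 0$ one has $g(\eta)\le f_0(x)+\eta f_1(x)\le f_0(x)$, whence $g(\eta^*)\le p^*$; meanwhile feasibility of $x^*$ gives $f_0(x^*)\ge p^*$. Chaining these with the identity $g(\eta^*)=f_0(x^*)$ yields $p^*\le f_0(x^*)=g(\eta^*)\le p^*$, so every inequality is an equality. Hence $f_0(x^*)=g(\eta^*)=p^*$: the duality gap is zero and $x^*$ is a primal optimum. Since $x^*\in X(\eta^*)$ was arbitrary, every minimizer of $f_0+\eta^* f_1$ solves \eqref{eq:primal_prob}.

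I expect the only genuine subtlety to lie in the complementary slackness step rather than in this theorem's own argument. The heavy lifting, namely continuity of $\eta\mapsto f_1(X(\eta))$ and the perturbation argument ruling out $f_1(X(\eta^*))\neq 0$, is already discharged in Lemma~\ref{lem:comp_slackness} under Assumption~\ref{asmp:continuity}. Within the present proof the point needing the most care is verifying that $x^*$ is feasible, so that $f_0(x^*)\ge p^*$ is legitimate; this is exactly what the sign information from complementary slackness supplies, and everything else is the standard weak-duality sandwich.
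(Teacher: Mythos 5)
Your proposal is correct and follows essentially the same route as the paper's own proof: existence of $\eta^*$ from Lemma~\ref{lem:existence_dual_sol}, complementary slackness and primal feasibility of $x^*\in X(\eta^*)$ from Lemma~\ref{lem:comp_slackness}, and the identity $f_0(x^*)=f_0(x^*)+\eta^* f_1(x^*)=g(\eta^*)$. The only difference is cosmetic: you spell out the weak-duality sandwich $p^*\le f_0(x^*)=g(\eta^*)\le p^*$, which the paper leaves implicit.
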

\begin{proof}
By Lemma~\ref{lem:comp_slackness}, any element $x^*\in X(\eta^*)$ satisfies $f_1(x^*)\leq 0$. Hence, $x^*$ satisfies primal feasibility.
It also follows from Lemma~\ref{lem:comp_slackness} that $\eta^* f_1(x^*)=0$, i.e., the complementary slackness condition holds. Therefore,
\begin{align*}
f_0(x^*)&=f_0(x^*)+\eta^* f_1(x^*) \quad \text{(Complementary slackness)}\\
&=\min_{x\in \mathcal{X}} f_0(x)+\eta^* f(x) \quad \text{(since $x^*\in X(\eta^*)$)} \\
&=g(\eta^*).
\end{align*}
Hence, the strong duality holds.
Therefore, $x^*$ is a primal optimal solution.
It follows from the identity above that a minimizer of $f_0(x)+\eta^*f_1(x)$ is an optimal solution for \eqref{eq:primal_prob}.
\end{proof}

\subsection{Legendre Duality}
\label{appendix:a}

Let $P$ and $Q$ be probability distributions on $(\mathcal{X}, \mathcal{B}(\mathcal{X}))$, and $C:\mathcal{X}\rightarrow \mathbb{R}$ a given cost function. Define the internal energy $U(Q,C)$, free energy $F(P,C)$ and relative entropy (KL divergence) $D(Q\|P)$ as:
\begin{align*}
&U(Q,C):=\int_\mathcal{X} C(x)Q(dx)  \\
&F(P,C):= -\lambda \log \int_\mathcal{X} \exp\left(-\frac{C(x)}{\lambda}\right)P(dx) \\
&D(Q\|P):= \int_\mathcal{X} \log\frac{dQ}{dP}(x)Q(dx).
\end{align*}
Then the following duality relationship holds:
\begin{align*}
F(P,C)&=\inf_Q \{ U(Q,C)+\lambda D(Q\|P) \} \\
    -\lambda D(Q\|P)&=\inf_C \{ U(Q,C)-F(P,C) \}.
\end{align*}
Also, the optimal probability distribution $Q^*$ is given by
\begin{equation*}
    Q^*(B) = \frac{\int_{B}\exp(-C(x)/\lambda)P(dx)}{\int_{\mathcal{X}} \exp(-C(x)/\lambda)P(dx)}, \quad \forall B \in \mathcal{B}(\mathcal{X}).
\end{equation*}
See \cite{boue1998variational,theodorou2012relative, patil2023simulator} for further discussions.

\subsection{The Likelihood Ratio $\frac{dQ^*}{dP}(x)$}
Let $Q(x)$ be the probability distribution of the trajectories defined by system \eqref{SDE} under the given policy $u(\cdot)$ and $Q^*(x)$ be the probability distribution of the trajectories defined by system \eqref{SDE} under the optimal policy $u^*(\cdot)$ Let $P(x)$ be the probability distribution of the trajectories defined by the uncontrolled system \eqref{uncontrolled SDE}. Suppose the likelihood ratio of observing a sample path $x$ under the distributions $Q$ and $P$ is denoted by $\frac{dQ}{dP}(x)$ and the expectation under any distribution $Q$ is denoted by $\mathbb{E}^Q[\cdot]$. Let $k(t)$ be a process defined by
{\color{black}
\begin{equation}\label{k(t)}
    k(t) \coloneqq \Sigma^{\dagger}Gu(t)
\end{equation}
where $\Sigma^{\dagger}$ is the left pseudo-inverse of \( \Sigma \) i.e. it holds \eqref{sigma_dagger sigma}.} Now, we state the following theorem:
\begin{theorem}[The Girsanov theorem]
    The likelihood ratio $\frac{dQ}{dP}(x)$ of observing a sample path $x$ under distributions $Q$ and $P$ is given by 
    \begin{equation}\label{dQ*/dP}
        \frac{dQ}{dP}(x) = \exp\left({\int_{t_0}^{{{t}}_{f}}}k(t)^\top d{w}(t) + \frac{1}{2}\int_{t_0}^{{{t}}_{f}}\|k(t)\|^2dt\right)
    \end{equation}
    where the process $k(t)$ is defined by \eqref{k(t)}. 
\end{theorem}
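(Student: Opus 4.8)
The plan is to recognize the claimed identity as a direct application of the Girsanov theorem \cite{oksendal2013stochastic} to the two path measures induced by the controlled SDE \eqref{SDE} and the uncontrolled SDE \eqref{uncontrolled SDE}. These two systems share the same diffusion coefficient $\Sigma$ and the same nominal drift $f$, and differ only by the additive control drift $Gu$. The whole argument hinges on being able to absorb this drift difference into a shift of the driving Wiener process, which is possible precisely because Assumption \ref{Assum: linearity} forces the range of $G$ to lie inside the range of $\Sigma$. Indeed, since $\Sigma^{\perp}G=\mathbf{0}$ by \eqref{sigma_perp G} and the rows of $\Sigma^{\perp}$ span the left null space of $\Sigma$, every column of $G$ lies in the range of $\Sigma$, so $\Sigma\Sigma^{\dagger}G=G$. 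Hence, with $k=\Sigma^{\dagger}Gu$ as in \eqref{k(t)}, the control drift factors as $\Sigma k=\Sigma\Sigma^{\dagger}Gu=Gu$.

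First I would place both laws on the common coordinate (path) process $x$ and identify the driving noise of each representation. Writing $w$ for the Wiener process of the controlled system \eqref{SDE} and $w^{P}$ for that of \eqref{uncontrolled SDE}, left-multiplication by $\Sigma^{\dagger}$ gives, along the same path, $dw=\Sigma^{\dagger}(dx-f\,dt-Gu\,dt)$ and $dw^{P}=\Sigma^{\dagger}(dx-f\,dt)$, whence
\begin{equation*}
    dw = dw^{P} - k\,dt .
\end{equation*}
Thus the controlled noise is a drift-shifted version of the uncontrolled noise, with shift $k$. Under $P$ the functional $w^{P}$ is a standard Wiener process, so the Girsanov theorem with this shift yields the Radon--Nikodym derivative in terms of the uncontrolled noise,
\begin{equation*}
    \frac{dQ}{dP}(x) = \exp\!\left(\int_{t_0}^{t_f} k^\top dw^{P} - \tfrac12\int_{t_0}^{t_f}\|k\|^2\,dt\right).
\end{equation*}

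Finally I would convert this into the form stated in \eqref{dQ*/dP}, which is written with respect to the controlled Wiener process $w$ of \eqref{SDE}. Substituting $dw^{P}=dw+k\,dt$ gives $\int k^\top dw^{P}=\int k^\top dw+\int\|k\|^2\,dt$, and combining the two quadratic terms produces the stated $+\tfrac12\int\|k\|^2\,dt$, completing the identification. The main obstacle is not the algebra but the two well-posedness points that legitimize the change of measure: (i) verifying that $Gu$ lies in the range of $\Sigma$ — exactly the content of Assumption \ref{Assum: linearity} via \eqref{sigma_perp G} — since otherwise the control would push trajectories in a direction the noise cannot explain, the measures $P$ and $Q$ would be mutually singular, and no density would exist; and (ii) the integrability required by Girsanov (e.g.\ a Novikov-type condition $\mathbb{E}\!\left[\exp(\tfrac12\int\|k\|^2\,dt)\right]<\infty$), together with the fact that the integrals run up to the random exit time $t_f$, which I would handle by localizing the exponential martingale at the stopping time $t_f$ under the regularity already assumed on the coefficients of \eqref{SDE} and \eqref{uncontrolled SDE}.
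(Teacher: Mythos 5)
Your proof is correct, and it supplies precisely what the paper leaves implicit: the paper's own proof of this theorem is a bare citation to \cite[Chapter 8.6]{oksendal2013stochastic} and \cite{patil2025path}, with no derivation. Your route is the intended one---apply the classical Girsanov theorem to the two path laws induced by \eqref{SDE} and \eqref{uncontrolled SDE}---but you make explicit two points that the citation glosses over and that are genuinely needed for the statement as written. First, absolute continuity of $Q$ with respect to $P$ requires the drift difference $Gu$ to lie in the range of $\Sigma$; your observation that $\Sigma^{\perp}G=\mathbf{0}$ (equation \eqref{sigma_perp G}, a consequence of Assumption \ref{Assum: linearity}) gives $\Sigma\Sigma^{\dagger}Gu=Gu$, so the shift $k=\Sigma^{\dagger}Gu$ of \eqref{k(t)} exactly reproduces the control drift; without this range condition the two measures would be mutually singular and no density would exist. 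Second, the sign of the quadratic term: the textbook Girsanov density carries $-\frac{1}{2}\int\|k\|^2dt$ when the stochastic integral is taken against the $P$-Brownian motion $w^{P}$, whereas \eqref{dQ*/dP} is stated with $+\frac{1}{2}\int\|k\|^2dt$ against the controlled process's Wiener process $w$; your substitution $dw^{P}=dw+k\,dt$ is exactly the bookkeeping that reconciles the two forms, and without it the stated formula would appear to have a sign error. Your closing remarks on Novikov's condition and on stopping the exponential martingale at the bounded stopping time $t_f\leq T$ address the remaining technical hypotheses, which the paper likewise leaves to the cited reference.
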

\begin{proof}
    Refer to \cite[Chapter 8.6]{oksendal2013stochastic}, \cite{patil2025path}.
\end{proof}

\begin{theorem}\label{Thm: likelihood ratio}
    Suppose we generate an ensemble of $N$ trajectories $ \{x^{(i)}\}_{i=1}^N$ under the distribution $P$. Let $r^{(i)}$ be the path reward associated with the trajectory $x^{(i)}$ as given in \eqref{r(i)}. Define $r \coloneqq \sum_{i=1}^{N}r^{(i)}$. Then as $N\rightarrow\infty$, 
    
    \begin{equation*}
    \frac{r^{(i)}}{r/N} \overset{a.s.}{\rightarrow} \frac{dQ^*}{dP}(x)     
    \end{equation*}
\end{theorem}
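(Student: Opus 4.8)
The plan is to recognize $r^{(i)}/(r/N)$ as a self-normalizing importance-sampling estimator whose denominator, by the strong law of large numbers, converges to the normalizing constant in the closed-form expression for $dQ^*/dP$. The first step is to identify the path reward as a Gibbs weight. Writing $S(x)=\phi(x(t_f);\eta)+\int_{t_0}^{t_f}V\,dt$ for the trajectory cost-to-go, definition \eqref{r(i)} reads $r^{(i)}=\exp(-S(x^{(i)})/\lambda)$, i.e.\ $r(x)=\exp(-S(x)/\lambda)$. The Legendre duality of Appendix~\ref{appendix:a} shows that the free-energy-minimizing distribution $Q^*$—which here is the law of the optimally controlled trajectories under \eqref{optimal policy}—admits the explicit form
\begin{equation*}
\frac{dQ^*}{dP}(x)=\frac{\exp(-S(x)/\lambda)}{\int_{\mathcal{T}}\exp(-S(x)/\lambda)\,P(dx)}=\frac{r(x)}{\mathbb{E}^P[r]},
\end{equation*}
where the normalizing integral is precisely the Feynman-Kac value $\xi(x_0,t_0;\eta)$ of \eqref{xi}. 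This reduces the theorem to proving $r/N\to\mathbb{E}^P[r]$ almost surely.

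The second step is a direct application of the strong law of large numbers. Since the trajectories $\{x^{(i)}\}_{i=1}^N$ are drawn i.i.d.\ from $P$, the rewards $\{r^{(i)}\}$ are i.i.d.\ copies of $r(x)$ with $x\sim P$. Because $\psi,V\geq 0$ and $\eta\geq 0$, we have $\phi\geq-\eta\Delta$, hence $S\geq-\eta\Delta$ and $0<r\leq e^{\eta\Delta/\lambda}$; integrability and strict positivity of the limit, $0<\mathbb{E}^P[r]<\infty$, are therefore automatic and the division is well-posed. Consequently
\begin{equation*}
\frac{r}{N}=\frac{1}{N}\sum_{j=1}^N r^{(j)}\overset{a.s.}{\rightarrow}\mathbb{E}^P[r],\qquad x^{(j)}\sim P,
\end{equation*}
and combining this with the first step gives, for each fixed sample $x^{(i)}$,
\begin{equation*}
\frac{r^{(i)}}{r/N}\overset{a.s.}{\rightarrow}\frac{r^{(i)}}{\mathbb{E}^P[r]}=\frac{r(x^{(i)})}{\mathbb{E}^P[r]}=\frac{dQ^*}{dP}(x^{(i)}),
\end{equation*}
which is the claim.

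I expect the main obstacle to lie in the first step rather than the second. The strong-law argument is routine once integrability is established, but rigorously justifying that the distribution $Q^*$ of trajectories generated by the optimal policy $u^*(\cdot;\eta)$ coincides with the exponentially tilted (Gibbs) measure $dQ^*\propto\exp(-S/\lambda)\,dP$ of Appendix~\ref{appendix:a} is the crux. Concretely, one must verify that the free-energy-minimizing $Q$ of the Legendre duality is the law of the controlled SDE \eqref{SDE} under $u^*$; this can be done by invoking the Girsanov change of measure (Theorem~\ref{Thm: likelihood ratio}'s companion in this appendix) to write $dQ/dP$ for an arbitrary admissible policy, substituting the optimal drift $k(t)=\Sigma^{\dagger}Gu^*$, and checking that the resulting density matches the Gibbs weight up to the normalizing constant $\xi$. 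Care is also needed to ensure the state-dependence of $\Sigma,G,R$ is compatible with Assumption~\ref{Assum: linearity}, so that the tilting relation holds pathwise along the uncontrolled dynamics \eqref{uncontrolled SDE}.
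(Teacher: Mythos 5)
Your proposal follows essentially the same route as the paper's own proof: identify $r^{(i)}$ as a Gibbs weight, use the Girsanov theorem together with Assumption \ref{Assum: linearity} (which gives $\frac{1}{2}u^\top R u\,dt = \frac{\lambda}{2}\|k\|^2 dt$, turning the control problem into a KL control problem over path measures) and the Legendre duality of Appendix B to conclude $\frac{dQ^*}{dP}(x) = \exp(-S(x)/\lambda)/\mathbb{E}^P[\exp(-S/\lambda)]$, and then apply the strong law of large numbers to the denominator $r/N$. You correctly pinpoint the crux — equating the law of optimally controlled trajectories with the tilted Gibbs measure — which is exactly the step the paper carries out via its chain of equalities \eqref{with R}--\eqref{minimization under Q}; your added integrability bound $0 < r \leq e^{\eta\Delta/\lambda}$ for the SLLN is a detail the paper leaves implicit.
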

\begin{proof}
    The likelihood ratio $\frac{dQ}{dP}(x)$ of observing a sample path $x$ is given by \eqref{dQ*/dP}. Using this result, we obtain
   
    \begin{equation}\label{E log likelihood}
        \mathbb{E}^{Q}\log\left(\frac{dQ}{dP}(x)\right) =\frac{1}{2}\int_{t_0}^{{t}_f} \mathbb{E}^Q\|k(t)\|^2dt.
    \end{equation}
In \eqref{E log likelihood}, we used the property of It\^o integral \cite[Chapter 3]{oksendal2013stochastic}:
    \begin{equation*}
        \mathbb{E}^Q\left[{\int_{t_0}^{{{t}}_{f}}}k(t)^\top d{w}(t)\right] = 0
    \end{equation*} 
    Now, for a given value of $\eta$, we wanted to solve the following problem
    \begin{align}\label{optimization problem}
        \min_{u(\cdot)}\mathbb{E}^Q_{x_0, t_0}\!\!\left[\phi\!\left({x}({t}_{f}); \eta\right)\!+\!\!\!\int_{t_0}^{{t}_{f}}\!\!\!\left(\!\frac{1}{2}{u}^\top\!R{u}+V\!\right)\!dt\!\right].
    \end{align}
 {\color{black}  Multiplying both sides of Assumption~\eqref{lambda} from the left by \( \Sigma^{\dagger} \) and from the right by \( (\Sigma^{\dagger})^\top \), we obtain:
\begin{equation}
    \Sigma^{\dagger} \Sigma \Sigma^\top (\Sigma^{\dagger})^\top =  \lambda \Sigma^{\dagger} G R^{-1} G^\top (\Sigma^{\dagger})^\top 
\end{equation}
Since $\Sigma^{\dagger}\Sigma=I$ by \eqref{sigma_dagger sigma}, it follows that
\begin{equation*}
  \lambda \Sigma^{\dagger} G R^{-1} G^\top (\Sigma^{\dagger})^\top = I . 
\end{equation*}
Defining
\begin{equation*}
 M=\lambda^{\frac{1}{2}} \Sigma^{\dagger} G R^{-\frac{1}{2}},   
\end{equation*}
we get $MM^\top = I$. Note that matrix $M\in \mathbb{R}^{k\times m}$. However, without loss of generality, we may assume $m=k$, that is, $M$ is a square matrix. The justification is as follows: Assumption \ref{Assum: linearity} (equation \eqref{lambda}) implies that the left-hand side has rank $k$, since $\Sigma$ is assumed to have full column rank. However, the rank of the right-hand side of \eqref{lambda} is at most $m$, which implies that $m\geq k$. Moreover, since the left-hand side has rank $k$, the rank of $G\in\mathbb{R}^{n\times m}$ must be $k$. That means only $k$ directions in control input space actually affect the dynamics. Consequently, under Assumption~\ref{Assum: linearity}, the original chance-constrained stochastic optimal control (SOC) problem (Problem~\ref{Problem: Risk-constrained SOC problem}) with an $m$-dimensional control input can always be reformulated as an equivalent problem with a $k$-dimensional control input. Therefore, assuming that $M$ is a square matrix, we also have $ M^\top M = I$. This yields the identity:
\begin{equation}\label{R intermediate}
    \lambda R^{-\frac{1}{2}}G^\top(\Sigma^\dagger)^\top\Sigma^\dagger G R^{-\frac{1}{2}} = I.
\end{equation}
Multiplying both sides of \eqref{R intermediate} from left and right by $R^{\frac{1}{2}}$, we get  

    \begin{equation}\label{linearity 2}
    R = \lambda G^\top(\Sigma^\dagger)^{\top}\Sigma^{\dagger}G.
    \end{equation}}
   Now, we obtain
\begin{subequations}
    \begin{align}
        &\min_{u(\cdot)}\mathbb{E}^Q_{x_0, t_0}\!\!\left[\phi\!\left({x}({t}_{f}); \eta\right)\!+\!\!\!\int_{t_0}^{{t}_{f}}\!\!\!\left(\!\frac{1}{2}{u}^\top\!R{u}+V\!\right)\!dt\!\right]\label{with R}\\
        = & \min_{u(\cdot)}\mathbb{E}^Q_{x_0, t_0}\!\!\left[\phi\!\left({x}({t}_{f}); \eta\right)\!+\!\!\!\int_{t_0}^{{t}_{f}}\!\!\!\left(\!\frac{1}{2}{u}^{\!\!\top}\!\lambda G^\top(\Sigma^\dagger)^{\top}\Sigma^{\dagger}G{u}\!+\!\!V\!\!\right)\!dt\!\right]\label{replacing R}\\
        = & \min_{k(\cdot)}\mathbb{E}^Q_{x_0, t_0}\!\!\left[\phi\!\left({x}({t}_{f}); \eta\right)\!+\!\!\!\int_{t_0}^{{t}_{f}}\!\!\!\left(\!\frac{\lambda}{2}\|k(t)\|^2+V\!\right)\!dt\!\right]\label{replacing with k}\\
        = & \min_{Q(x)}\int_{\mathcal{T}}\left( \phi\!\left({x}({t}_{f}); \eta\right) + \int_{t_0}^{{t}_f}V dt + \lambda\log\frac{dQ}{dP}(x)\right)Q(dx).\label{minimization under Q}
        \end{align}
\end{subequations}
Equation \eqref{replacing R} is obtained by plugging  \eqref{linearity 2} into \eqref{with R}. \eqref{replacing with k} obtained by using \eqref{k(t)} and \eqref{minimization under Q} by using \eqref{E log likelihood}. Thus, we converted the problem \eqref{optimization problem} into a KL control problem.
{\color{black}\begin{remark}
In reformulating the problem from \eqref{replacing with k} to \eqref{minimization under Q}, we apply \eqref{E log likelihood} that expresses the control optimization over path measures. Importantly, this change of representation does not alter the underlying requirement that the control inputs remain adapted to the system’s filtration. The probability measures $Q(x)$ in \eqref{minimization under Q} are constructed over trajectories generated by adapted control processes, ensuring that the adapted nature of the control is preserved throughout. We refer the reader to the proof of Theorem 3.1 in \cite{boue1998variational} for a formal justification.
\end{remark}}

Now invoking the Legendre duality between the KL divergence and free energy (See Appendix B) it can be shown that there exists a minimizer $Q^*$ of \eqref{minimization under Q} which can be written as 
\begin{align}\label{Q*}
    Q^*(dx) = \frac{\exp\left(-\frac{1}{\lambda}\left(\phi\!\left({x}({t}_{f}); \eta\right) + \int_{t_0}^{{t}_f}V dt\right)\right)P(dx)}{\mathbb{E}^P \left[\exp\left(-\frac{1}{\lambda}\left(\phi\!\left({x}({t}_{f}); \eta\right) + \int_{t_0}^{{t}_f}V dt\right)\right)\right]}
\end{align}
Using \eqref{Q*}, we can write the expression for the Radon-Nikodym derivative $\frac{dQ^*}{dP}(x)$ as 
\begin{equation}\label{dQ/dP}
    \frac{dQ^*}{dP}(x) = \frac{\exp\left(-\frac{1}{\lambda}\left(\phi\!\left({x}({t}_{f}); \eta\right) + \int_{t_0}^{{t}_f}V dt\right)\right)}{\mathbb{E}^P \left[\exp\left(-\frac{1}{\lambda}\left(\phi\!\left({x}({t}_{f}); \eta\right) + \int_{t_0}^{{t}_f}V dt\right)\right)\right]}
\end{equation}
Using \eqref{dQ/dP}, we can conclude that given the ensemble of N trajectories $\{x^{(i)}\}_{i=1}^N$ sampled under distribution $P$, the likelihood ratio $\frac{dQ^*}{dP}$ of observing a sample path $x^{(i)}$ is given by $\frac{r^{(i)}}{r/N}$ where $r^{(i)}$ is defined by \eqref{r(i)} and $r \coloneqq \sum_{i=1}^{N}r^{(i)}$. Using the strong law of large numbers as $N\rightarrow\infty$, we get  
\begin{equation*}
    \frac{r^{(i)}}{r/N} \overset{a.s.}{\rightarrow} \frac{dQ^*}{dP}(x).  
\end{equation*}
\end{proof}

\bibliographystyle{IEEEtran}
\bibliography{references}

\begin{IEEEbiography}
[{\includegraphics[width=1in,height=1.25in,clip,keepaspectratio]{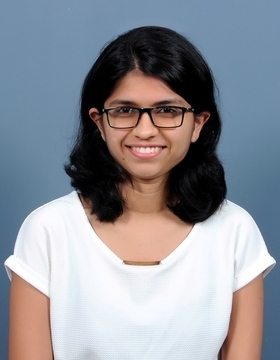}}]
{Apurva Patil} is a PhD student at UT Austin advised by Prof. Takashi Tanaka. She obtained her Bachelor's degree in Mechanical Engineering from College of Engineering Pune (COEP). At UT Austin, she received the H. Grady Rylander Excellence in Teaching Fellowship. At COEP she was recognized as a best student 2017 and received the best bachelor’s project award. During her Bachelor's she received S. N. Bose fellowship for research internship at Texas A\&M University. She is broadly interested in the intersection of control theory, robotics, and learning theory to solve problems in decision-making under uncertainty. Recent topics that she has worked on include path integral control, risk-aware motion planning, risk analysis of motion plans, and stochastic dynamic games.
\end{IEEEbiography}

\begin{IEEEbiography}
[{\includegraphics[width=1in,height=1.25in,clip,keepaspectratio]{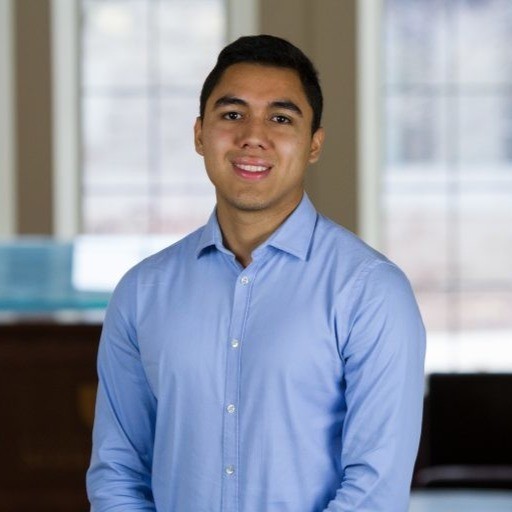}}]
{Alfredo Duarte} has been a PhD student in the University of Texas at Austin since 2019, under the supervision of Dr. Fabrizio Bisetti. He grew up in Tegucigalpa, Honduras before attending the University of Notre Dame for his bachelor's degree in Aerospace Engineering. At the University of Notre Dame he received a Naughton Fellowship for Research Experience in Ireland and graduated cum laude in 2019. His research interests are focused on the use of High Performance Computing (HPC) to study multi-temporal/multi-scale problems in engineering. Recently he has worked in the use of Jacobian-free Newton-Krylov methods to solve partial differential equations, and fluid-plasma models for the study of nanosecond discharge pulses in a pin-to-pin configuration.
\end{IEEEbiography}

\begin{IEEEbiography}
[{\includegraphics[width=1in,height=1.25in,clip,keepaspectratio]{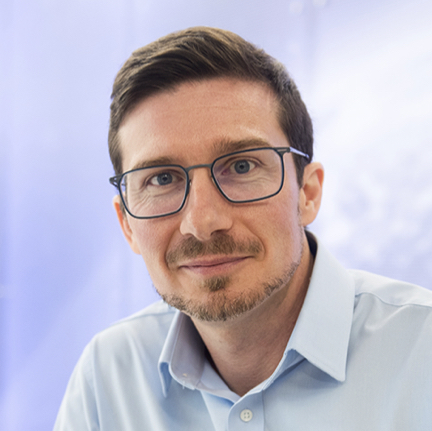}}]
{Fabrizio Bisetti} is an Assistant Professor in Aerospace Engineering at the University of Texas at Austin, where he moved in late 2016. Prior, Prof. Bisetti held a faculty appointment at King Abdullah University of Science and Technology (KAUST), where he joined the Clean Combustion Research Center (CCRC) in July 2009 as a founding faculty. Prof. Bisetti holds a Laurea (Politecnico di Milano, 6/2003), MS (UT Austin, 8/2002), and PhD (UC Berkeley, 12/2007) in Mechanical Engineering. Upon graduation, he joined the Center for Turbulence Research at Stanford University as a Postdoctoral Fellow (1/2008-6/2009). Prof. Bisetti’s research interests are in turbulent combustion, soot formation in turbulent flames, turbulent aerosols, turbulent mixing, plasma assisted ignition, and numerical methods for reactive and plasma flows. His research activities combine High Performance Computing (HPC) and theory to understand complex multi-physics/multi-scale processes in turbulent flows, e.g. aerosol/turbulence interaction, turbulent combustion, and plasma-assisted ignition.
\end{IEEEbiography}

\begin{IEEEbiography}
[{\includegraphics[width=1in,height=1.25in,clip,keepaspectratio]{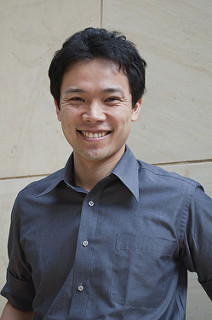}}]
{Takashi Tanaka} (M’09) received the B.S. degree from the University of Tokyo, Tokyo, Japan, in 2006, and the M.S. and Ph.D. degrees in Aerospace Engineering (automatic control) from the University of Illinois at Urbana-Champaign (UIUC), Champaign, IL, USA, in 2009 and 2012, respectively. He was a Postdoctoral Associate with the Laboratory for Information and Decision Systems (LIDS) at the Massachusetts Institute of Technology (MIT), Cambridge, MA, USA, from 2012 to 2015, and a postdoctoral researcher at KTH Royal Institute of Technology, Stockholm, Sweden, from 2015 to 2017. He is an Associate Professor in the Department of Aerospace Engineering and Engineering Mechanics at the University of Texas at Austin. Dr. Tanaka’s research interests include control theory and its applications; most recently the information-theoretic perspectives of optimal control problems. He was the recipient of the DARPA Young Faculty Award, the AFOSR Young Investigator Program award, and the NSF Career award.
\end{IEEEbiography}

\end{document}